\tikzstyle{every picture}=[
\title{Construction of rational expression from tree automata using a generalization of Arden's Lemma}
\author{
  Younes Guellouma\inst{1,3} 
  \and Ludovic Mignot\inst{2}
  \and Hadda Cherroun\inst{1,3} 
  \and Djelloul Ziadi\inst{2,3}  
}
\institute{
  Laboratoire LIM, Universit\'{e} Amar Telidji, Laghouat, Alg\'{e}rie\\
  \email{\{y.guellouma,hadda\_cherroun\}@mail.lagh-univ.dz}
  \and LITIS, Universit\'e de Rouen, 76801 Saint-\'Etienne du Rouvray Cedex, France\\
   \email{\{ludovic.mignot,djelloul.ziadi\}@univ-rouen.fr}
 \and supported by the MESRS - Algeria under Project 8/U03/7015.
}
\begin{document}

\maketitle

\begin{abstract}

 Arden's Lemma is a classical result in language theory
 allowing the computation of a rational expression denoting the language recognized by a finite string automaton.
 In this paper we generalize this important lemma to the rational tree languages. 
 Moreover,
 we propose also a construction of a rational tree expression which denotes the accepted tree language of a finite tree automaton.

\end{abstract}

\begin{keywords}
    Tree automata theory,
    Arden's lemma,
    Rational expression.

\end{keywords}

\section{Introduction}

Trees are natural 
structures
used in many fields in computer sciences like XML~\cite{XML1}, indexing,  natural language processing, code generation for compilers, term rewriting~\cite{tata2007}, cryptography~\cite{DBLP:conf/stacs/2004} etc. This large use of this structure leads to concider the theoretical basics of a such notion.

In fact,  in many cases, the problem of trees blow-up causes difficulties of storage and representation of this large amount of data. To outcome this problem, many solutions persist. Among them, the use of tree automata and rational tree expressions as compact and finite structures that recognize and represent infinite tree sets.

As a part of the formal language theory, trees are considered as 
a
generalization of strings. Indeed in the late of 1960s~\cite{Brain97,magidor1969finite}, many researches generalize strings to  trees and many notions 
appeared like tree languages, tree automata, rational tree expressions, tree 
grammars,
etc.

Since tree automata are beneficial in an acceptance point of view and the rational expressions in a descriptive one, an equivalence between the two representations must be resolved. Fortunately, Kleene result~\cite{TH68} states this equivalence between the accepted language of  tree automata  and the language denoted by rational expressions. 

Kleene theorem proves that the set of languages denoted by all rational expressions over the ranked alphabet $\Sigma$ noted $Rat(\Sigma)$ and the set of all recognized languages over $\Sigma$ noted $Rec(\Sigma)$ are equivalent. This can be checked also  by   verifying   the two inclusions $Rat(\Sigma)\subseteq Rec(\Sigma)$ and $Rec(\Sigma)\subseteq Rat(\Sigma')$ where $\Sigma\subseteq\Sigma'$.
In other words, 
any tree language is recognized by some automaton if and only if it is denoted by some rational expression.
Thus two constructions can be pulled up.

From a rational expression to tree automata, several techniques exist. First, Kuske et 
Meinecke
~\cite{DBLP:journals/ita/KuskeM11} generalize the notion of languages 
partial
derivation~\cite{DBLP:journals/tcs/Antimirov96} from strings to trees and propose a tree equation  automaton which is constructed from a derivation of a linearized version of rational expressions. They use the ZPC structure~\cite{DBLP:journals/ijac/ChamparnaudZ01} to reach best complexity. After that,  Mignot et al.~\cite{DBLP:journals/corr/MignotSZ14} propose an efficient algorithm to compute this generalized tree equation automata. Next, Laugerotte et al.~\cite{DBLP:conf/lata/LaugerotteSZ13} generalize position automata to trees.
Finally, the morphic links between these constructions have been defined in~\cite{AFL2014}.

In this paper, we propose a construction of  the second way of Kleene Theorem, the passage from a tree automaton to its rational tree expression. For this reason we propose a generalization of Arden's 
Lemma
for strings to trees. 
The
complexity of a such construction is exponential.

Section \ref{pre} recalls some preliminaries and basic properties. 
We generalize the notion of equation system in Section~\ref{sec eq syst}.
Next the generalization of Arden's lemma to trees and its proof is given in Section \ref{ar}, leading to the computation of some solutions for particular recursive systems. 
Finally, we show how to compute a rational expression denoting the language recognized by a tree automaton in Section~\ref{co}.

\section{Preliminaries and Basic Properties}\label{pre} 

Let $\Sigma=\bigcup_{n\geq 0} \Sigma_n$ be a graded alphabet.
A \emph{tree} $t$ over $\Sigma$ is inductively defined by $t=f(t_1,\ldots,t_n)$ with $f\in\Sigma_n$ and $t_1,\ldots,t_n$ any $n$ trees over $\Sigma$.
A \emph{tree language} is a subset of $T(\Sigma)$. 
The \emph{subtrees set} $\mathrm{St}(t)$ of a tree $t=f(t_1,\ldots,t_n)$ is defined by $\mathrm{St}(t)=\{t\}\cup \bigcup ^n_{k=1} \mathrm{St}(t_k)$. 
This set is extended to tree languages, and the \emph{subtrees set} $\mathrm{St}(L)$ of a tree language $L\subset T(\Sigma)$ is $\mathrm{St}(L)=\bigcup_{t\in L} \mathrm{St}(t)$.
The \emph{height} of a tree $t$ in $T(\Sigma)$ is defined inductively by $\mathrm{Height}(f(t_1,\ldots,t_n))=1+\max\{\mathrm{Height}(t_i)\mid 1\leq i\leq n\}$ where $f$ is a symbol in $\Sigma_n$ and $t_1,\ldots,t_n$ are any $n$ trees over $\Sigma$.

A \emph{finite tree automaton} (FTA) over $\Sigma$ is a $4$-tuple $\mathcal{A}=(\Sigma,Q,Q_f,\Delta)$ where $Q$ is a finite set of \emph{states}, $Q_f\subset Q$ is the set of \emph{final states} and $\Delta \subset \bigcup_{n\geq 0}\Sigma_n \times Q^{n+1}$ is a finite set of \emph{transitions}.
The \emph{output} of $\mathcal{A}$, noted $\delta$, is a function from $T(\Sigma)$ to $2^Q$ inductively defined for any tree $t=f(t_1,\ldots,t_n)$ by $\delta(t)=\{q\in Q \mid \exists (f,q_1,\ldots,q_n,q)\in \Delta, (\forall 1\leq i\leq n,q_i\in\delta(t_i))\}$.
The \emph{accepted language} of $\mathcal{A}$ is  $L(\mathcal{A})=\{t\in T(\Sigma)| \delta(t)\cap Q_f\neq\emptyset\}$.
The \emph{state language} $L(q)$ (also known as down language~\cite{conf/stringology/CleophasKSW09}) of a state $q\in Q$ is defined by $L(q)=\{t \in  T(\Sigma) |  q\in \delta(t)\}$.
Obviously, 
\begin{align}
  L(\mathcal{A})=\bigcup_{q\in Q_f}L(q) \label{eq lang et lang bas}
\end{align}
In the following of this paper, we consider \emph{accessible} FTAs, that are FTAs any state $q$ of which satisfies $L(q)\neq\emptyset$.
Obviously, any FTA admits an equivalent accessible FTA obtained by removing the states the down language of which is empty.

Given a symbol $c$ in $\Sigma_0$, the $c$-\emph{product} is the operation $\cdot_c$ defined for any tree $t$ in $T(\Sigma)$ and for any tree language $L$ by
\begin{equation}
t\cdot_c L=
\left\{
  \begin{array}{l@{\ }l}
    L & \text{ if }t=c,\\
    \{d\} & \text{ if }t=d\in\Sigma_0\setminus\{c\},\\
    f(t_1\cdot_c L,\ldots,t_n\cdot_c L ) & \text{ otherwise if } t=f(t_1,\ldots,t_n)\\
  \end{array}
\right.\label{eq def lang sub}
\end{equation}
This $c$-product is extended for any two tree languages $L$ and $L'$ by $L\cdot_c L'=\bigcup_{t\in L} t\cdot_c L'$.
In the following of this paper, we use some equivalences over expressions using some properties of the $c$-product.
Let us state these properties of the $c$-product.
As it is the case of catenation product in the string case, it distributes over the union:

\begin{lemma}\label{lem distrib prod sum}
  Let $L_1$, $L_2$ and $L_3$ be three tree languages over $\Sigma$.
  Let $c$ be a symbol in $\Sigma_0$.
  Then:
  \begin{align*}
    (L_1\cup L_2)\cdot_c L_3 &= (L_1 \cdot_c L_3) \cup (L_2\cdot_c L_3)
  \end{align*}
\end{lemma}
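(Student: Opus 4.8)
The plan is to reduce everything to the definition of the $c$-product on \emph{languages}, namely $L\cdot_c L'=\bigcup_{t\in L} t\cdot_c L'$, and to exploit the fact that a union indexed by a union of index sets splits into two unions. No structural induction on trees is needed here, because the union whose distribution we must establish occurs in the \emph{left} argument of the product, which is precisely the argument over which the definition already ranges by a big union.

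Concretely, I would proceed as follows. First, apply the definition of the extended $c$-product to the left-hand side, obtaining $(L_1\cup L_2)\cdot_c L_3=\bigcup_{t\in L_1\cup L_2} t\cdot_c L_3$. Second, observe that for any map $t\mapsto X_t$ sending trees to sets one has $\bigcup_{t\in A\cup B}X_t=\bigl(\bigcup_{t\in A}X_t\bigr)\cup\bigl(\bigcup_{t\in B}X_t\bigr)$; instantiating this with $A=L_1$, $B=L_2$ and $X_t=t\cdot_c L_3$ splits the union into $\bigl(\bigcup_{t\in L_1} t\cdot_c L_3\bigr)\cup\bigl(\bigcup_{t\in L_2} t\cdot_c L_3\bigr)$. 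Third, recognize each of these two unions as $L_1\cdot_c L_3$ and $L_2\cdot_c L_3$ respectively, again by the definition of the extended product, which yields the claimed equality.

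I do not expect any genuine obstacle: the statement is a bookkeeping consequence of how the product is lifted from trees to languages. The only place where a structural induction on the tree argument would be required is the \emph{right}-hand distributivity $t\cdot_c(L\cup L')=(t\cdot_c L)\cup(t\cdot_c L')$, which is not what is asked here and which would instead be proved by case analysis on whether $t=c$, $t=d\in\Sigma_0\setminus\{c\}$, or $t=f(t_1,\ldots,t_n)$, using the inductive clause of Equation~(\ref{eq def lang sub}) together with the induction hypothesis applied to the subtrees $t_i$.
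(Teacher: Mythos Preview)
Your proposal is correct and follows essentially the same approach as the paper: both exploit that the extension $L\cdot_c L'=\bigcup_{t\in L} t\cdot_c L'$ is a union indexed by the left argument, so a union there splits trivially. The paper phrases this as an element-chasing equivalence ($t\in(L_1\cup L_2)\cdot_c L_3 \Leftrightarrow \exists u\in L_1\cup L_2,\ldots$), while you manipulate the indexed union directly; these are two presentations of the same argument.
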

\begin{proof}
  Let $t$ be a tree in $T(\Sigma)$.
  Then:
  \begin{align*}
    t\in (L_1\cup L_2)\cdot_c L_3 & \Leftrightarrow \exists u\in L_1\cup L_2, \exists v\in L_3, t=u\cdot_c v \\
    & \Leftrightarrow (\exists u\in L_1, \exists v\in L_3, t=u\cdot_c v )\vee (\exists u\in L_2, \exists v\in L_3, t=u\cdot_c v) \\
    & \Leftrightarrow t\in (L_1 \cdot_c L_3) \cup (L_2\cdot_c L_3)
  \end{align*}
  \qed
\end{proof}

Another common property with the catenation product is that any operator $\cdot_c$ is associative:
\begin{lemma}
  Let $t$ and $t'$  be any two trees in $T(\Sigma$), let $L$ be a tree language over $\Sigma$ and let $c$ be a symbol in $\Sigma_0$.
  Then:
  \begin{align*} 
    t\cdot_c(t'\cdot_c L)& = (t\cdot_c t')\cdot_c L
  \end{align*}
\end{lemma}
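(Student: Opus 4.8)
The plan is to argue by structural induction on the tree $t$, unfolding the definition~\eqref{eq def lang sub} of the $c$-product in each case. Throughout I identify a tree $s$ with the singleton language $\{s\}$, so that $t\cdot_c t'$ stands for $t\cdot_c\{t'\}$; a short preliminary induction (which can be folded into the main one) shows that $t\cdot_c\{t'\}$ is again a singleton, which keeps every intermediate expression well behaved.

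First I would dispatch the two base cases. If $t=c$, the left-hand side equals $t'\cdot_c L$ by the first clause of~\eqref{eq def lang sub}, while on the right-hand side $c\cdot_c t'=\{t'\}$ and then $\{t'\}\cdot_c L=t'\cdot_c L$ by the extension of $\cdot_c$ to languages, so both sides coincide. If $t=d\in\Sigma_0\setminus\{c\}$, the left-hand side is $\{d\}$ by the second clause, and on the right-hand side $d\cdot_c t'=\{d\}$, hence $\{d\}\cdot_c L=d\cdot_c L=\{d\}$; again the two sides agree.

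For the inductive step, let $t=f(t_1,\ldots,t_n)$ with $f\in\Sigma_n$ and $n\geq 1$, and assume the identity for $t_1,\ldots,t_n$. Writing $M=t'\cdot_c L$, the third clause of~\eqref{eq def lang sub} gives $t\cdot_c(t'\cdot_c L)=f(t_1\cdot_c M,\ldots,t_n\cdot_c M)$, and the induction hypothesis rewrites each factor $t_i\cdot_c M$ as $(t_i\cdot_c t')\cdot_c L$. On the other side, by the preliminary observation each $t_i\cdot_c t'$ is a single tree $s_i$, so $t\cdot_c t'=f(s_1,\ldots,s_n)$, and the third clause applied once more yields $(t\cdot_c t')\cdot_c L=f(s_1\cdot_c L,\ldots,s_n\cdot_c L)=f((t_1\cdot_c t')\cdot_c L,\ldots,(t_n\cdot_c t')\cdot_c L)$. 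The two sides then coincide term by term.

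I expect the only place demanding attention to be the bookkeeping around the extension of $\cdot_c$ to languages: one has to be explicit that, under the convention by which $f$ is applied to languages, the product $t\cdot_c\{t'\}$ really is singleton-valued (so that the $s_i$ above are well defined); and if one prefers to carry genuine sets of trees through the computation of $(t\cdot_c t')\cdot_c L$ instead, one must invoke Lemma~\ref{lem distrib prod sum}, iterated over the $n$ arguments of $f$, to push the outer $\cdot_c L$ through the union implicit in the root-symbol combination. Past that, the argument is a routine case analysis on the shape of $t$.
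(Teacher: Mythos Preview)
Your proposal is correct and follows the same route as the paper's proof: structural induction on $t$ with the same three cases and the same use of Equation~\eqref{eq def lang sub} in the inductive step. You are in fact more careful than the paper about the tree-versus-singleton bookkeeping (the paper silently identifies $t_i\cdot_c t'$ with a tree when applying the third clause of~\eqref{eq def lang sub}), so nothing is missing.
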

\begin{proof}
  By induction over the structure of $t$.
  \begin{enumerate}
    \item Consider that $t=c$. 
    Then $t\cdot_c(t'\cdot_c L)=t'\cdot_c L=(t\cdot_c t')\cdot_c L$.
    \item Consider that $t\in\Sigma_0\setminus\{c\}$. 
    Then $t\cdot_c(t'\cdot_c L)=t=(t\cdot_c t')\cdot_c L$.
    \item Let us suppose that $t=f(t_1,\ldots,t_n)$ with $n>0$. Then, following Equation~\eqref{eq def lang sub}:
    \begin{align*}
      f(t_1,\ldots,t_n) \cdot_c(t'\cdot_c L)& = f(t_1\cdot_c(t'\cdot_c L),\ldots,t_n\cdot_c(t'\cdot_c L)) \\
        & = f((t_1\cdot_c t')\cdot_c L,\ldots,(t_n\cdot_c t')\cdot_c L) & \text{(Induction hypothesis)}\\
        & =f(t_1\cdot_c t',\ldots,t_n\cdot_c t')\cdot_c L \\
        & =(f(t_1,\ldots,t_n)\cdot_c t')\cdot_c L 
    \end{align*}
  \end{enumerate}
  \qed
\end{proof}
\begin{corollary}\label{cor cdotc assoc}
  Let $L$, $L'$ and $L''$ be any three tree languages over a graded alphabet $\Sigma$ and let $c$ be a symbol in $\Sigma_0$.
  Then:
  \begin{align*}
    L\cdot_c(L'\cdot_c L'') &= (L\cdot_c L')\cdot_c L''
  \end{align*}
\end{corollary}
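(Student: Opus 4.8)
The plan is to deduce the three-language identity from a per-tree refinement. First I would prove, for every tree $t\in T(\Sigma)$ and all tree languages $L'$, $L''$,
\[
  t\cdot_c(L'\cdot_c L'') = (t\cdot_c L')\cdot_c L'' .
\]
This is exactly the statement of the preceding lemma with the tree $t'$ replaced by an arbitrary language $L'$ and $L$ by $L''$, so I expect the same proof scheme — structural induction on $t$ — to carry over. The two base cases $t=c$ and $t=d\in\Sigma_0\setminus\{c\}$ are immediate from~\eqref{eq def lang sub}: in the first both sides equal $L'\cdot_c L''$, in the second both equal $\{d\}$.

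For the inductive step $t=f(t_1,\ldots,t_n)$ with $n>0$, I would push $\cdot_c$ inside using~\eqref{eq def lang sub}, apply the induction hypothesis to each $t_i$, and recombine:
\begin{align*}
  f(t_1,\ldots,t_n)\cdot_c(L'\cdot_c L'')
    &= f\bigl(t_1\cdot_c(L'\cdot_c L''),\ldots,t_n\cdot_c(L'\cdot_c L'')\bigr)\\
    &= f\bigl((t_1\cdot_c L')\cdot_c L'',\ldots,(t_n\cdot_c L')\cdot_c L''\bigr)\\
    &= f(t_1\cdot_c L',\ldots,t_n\cdot_c L')\cdot_c L''\\
    &= \bigl(f(t_1,\ldots,t_n)\cdot_c L'\bigr)\cdot_c L'' .
\end{align*}
The crucial third line is where this differs from the earlier proof: since $t_i\cdot_c L'$ need no longer be a single tree, I cannot just invoke the leaf-by-leaf definition but need the auxiliary identity $f(M_1,\ldots,M_n)\cdot_c L = f(M_1\cdot_c L,\ldots,M_n\cdot_c L)$ for arbitrary languages $M_1,\ldots,M_n$. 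I would get this short identity from the definition together with Lemma~\ref{lem distrib prod sum} (iterated), by writing $f(M_1,\ldots,M_n)$ as the union of the singletons $\{f(a_1,\ldots,a_n)\}$ over $a_i\in M_i$, distributing $\cdot_c L$ over that union, and reorganising.

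Finally I would lift the per-tree identity to the stated three-language equality purely from the extension $M\cdot_c N=\bigcup_{s\in M}s\cdot_c N$:
\begin{align*}
  L\cdot_c(L'\cdot_c L'')
    = \bigcup_{t\in L} t\cdot_c(L'\cdot_c L'')
    = \bigcup_{t\in L}(t\cdot_c L')\cdot_c L''
    = \bigcup_{t\in L}\ \bigcup_{s\in t\cdot_c L'} s\cdot_c L''
    = \bigcup_{s\in L\cdot_c L'} s\cdot_c L''
    = (L\cdot_c L')\cdot_c L'' .
\end{align*}
I expect the main obstacle to be the third line of the inductive step, i.e.\ the genuinely ``two-dimensional'' nature of the $c$-product: distinct $c$-leaves of a tree are filled independently, so one cannot shortcut the argument by reducing $L\cdot_c L'$ to $\bigcup_{t\in L,\,t'\in L'} t\cdot_c t'$ (which is in general a strict subset, e.g.\ for $t=f(c,c)$ and $|L'|\geq2$). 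Keeping a whole language on the right of $\cdot_c$ throughout the induction, and handling the recombination through Lemma~\ref{lem distrib prod sum}, is what makes the argument go through; everything else is routine bookkeeping.
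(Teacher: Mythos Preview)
Your proof is correct, and you have put your finger on a subtlety that the paper glosses over. The paper gives no proof of the corollary: it simply states it after the single-tree lemma $t\cdot_c(t'\cdot_c L)=(t\cdot_c t')\cdot_c L$, implicitly treating the lift from $t'$ to a language $L'$ as routine. As you correctly observe, this lift is \emph{not} routine, because $t\cdot_c(\cdot)$ does not distribute over unions on the right (your $f(c,c)$ example with $|L'|\ge 2$ is exactly the obstruction), so one cannot write $t\cdot_c(L'\cdot_c L'')=\bigcup_{t'\in L'}t\cdot_c(t'\cdot_c L'')$ and invoke the lemma termwise.

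Your route---redoing the structural induction on $t$ with a language $L'$ kept whole on the right, and isolating the auxiliary identity $f(M_1,\ldots,M_n)\cdot_c L=f(M_1\cdot_c L,\ldots,M_n\cdot_c L)$ to handle the recombination step---is precisely what is needed. That auxiliary identity follows as you say from Lemma~\ref{lem distrib prod sum} (its proof extends verbatim to arbitrary unions) together with the definitions. The final lift from $t$ to $L$ by the left-union definition of $L\cdot_c(\cdot)$ is then unproblematic. In short, the paper's lemma with a single tree $t'$ is strictly weaker than what the corollary requires, and your argument is the natural and correct way to bridge the gap.
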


However, the associativity is not necessarily satisfied if the substitution symbols are different; as an example, $(f(a,b)\cdot_a b)\cdot_b c\neq f(a,b) \cdot_a (b\cdot_b c)$.
Finally, the final common property is that the operation $\cdot_c$ is compatible with the inclusion:
\begin{lemma}
  Let $t$ be a tree over $\Sigma$, and let $L\subset L'$ be two tree languages over $\Sigma$.
  Then:
  \begin{align*}
    t\cdot_c L & \subset t\cdot_c L'
  \end{align*}
\end{lemma}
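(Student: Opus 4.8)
The plan is to proceed by structural induction on the tree $t$, exactly as in the associativity lemma above, since the definition of $t\cdot_c L$ in Equation~\eqref{eq def lang sub} is itself given by recursion on $t$. Monotonicity in the second argument is the kind of statement that does not ``see'' the complicated interaction between distinct substitution symbols, so a single-variable induction on $t$ should suffice, with $L$ and $L'$ carried along untouched.

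First I would dispatch the two base cases. If $t=c$, then by definition $t\cdot_c L=L$ and $t\cdot_c L'=L'$, so the claim is just the hypothesis $L\subset L'$. If $t=d$ for some $d\in\Sigma_0\setminus\{c\}$, then $t\cdot_c L=\{d\}=t\cdot_c L'$, and the inclusion holds trivially (indeed as an equality).

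For the inductive step, suppose $t=f(t_1,\ldots,t_n)$ with $n>0$, and assume as induction hypothesis that $t_i\cdot_c L\subset t_i\cdot_c L'$ for every $i\in\{1,\ldots,n\}$. By Equation~\eqref{eq def lang sub}, a tree $s$ belongs to $t\cdot_c L$ iff it has the form $s=f(u_1,\ldots,u_n)$ with $u_i\in t_i\cdot_c L$ for each $i$; by the induction hypothesis each such $u_i$ also lies in $t_i\cdot_c L'$, hence $s\in f(t_1\cdot_c L',\ldots,t_n\cdot_c L')=t\cdot_c L'$. This proves $t\cdot_c L\subset t\cdot_c L'$ and completes the induction.

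I do not expect any genuine obstacle here: the only point worth stating carefully is that the ``top-symbol'' operation $(M_1,\ldots,M_n)\mapsto\{f(u_1,\ldots,u_n)\mid u_i\in M_i\}$ used implicitly in the recursive clause of Equation~\eqref{eq def lang sub} is monotone in each coordinate, which is immediate from its definition. One could alternatively phrase the whole argument without explicit induction by invoking Lemma~\ref{lem distrib prod sum}: writing $L'=L\cup L'$ gives $t\cdot_c L'=(t\cdot_c L)\cup(t\cdot_c L')$ after extending $\cdot_c$ to the language on the left, but since here the left operand is a single tree this shortcut is not directly available, so the structural induction is the cleanest route.
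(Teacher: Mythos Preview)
Your proof is correct and follows essentially the same structural induction on $t$ as the paper's own proof, with the same three cases and the same use of the induction hypothesis in the recursive clause. The only cosmetic difference is that you spell out the elementwise argument for the monotonicity of $(M_1,\ldots,M_n)\mapsto f(M_1,\ldots,M_n)$, whereas the paper simply asserts the resulting inclusion.
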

\begin{proof}  
  By induction over the structure of $t$.
  \begin{enumerate}
    \item Consider that $t=c$. 
    Then  $c\cdot_c L =L \subset L'=c\cdot_c L'$.
    \item Consider that $t\in\Sigma_0\setminus\{c\}$. 
    Then $t \cdot_c L =\{t\}=t\cdot_c L'$.
    \item Let us suppose that $t=f(t_1,\ldots,t_n)$.
    \begin{align*}
      \intertext{Then}
      f(t_1,\ldots,t_n) \cdot_c L & = f(t_1\cdot_c L ,\ldots,t_n\cdot_c L)
      \intertext{By induction hypothesis,}
      \forall 1\leq j\leq n, t_j\cdot_c L & \subset t_j \cdot_c L' 
      \intertext{Therefore,}
      f(t_1\cdot_c L ,\ldots,t_n\cdot_c L)& \subset f(t_1\cdot_c L' ,\ldots,t_n\cdot_c L')=t\cdot_c L'
    \end{align*}
  \end{enumerate}
  \qed
\end{proof}
\begin{corollary}\label{cor cdotc comp incl}
  Let $L$, $L'\subset L''$ be any three tree languages over $\Sigma$ and let $c$ be a symbol in $\Sigma_0$.
  Then:
  \begin{align*}
    L\cdot_c L' & \subset L\cdot_c L''
  \end{align*}
\end{corollary}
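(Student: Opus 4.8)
The plan is to reduce the statement about tree languages to the single-tree inclusion established in the preceding lemma, and then close under union. Recall that the operation $\cdot_c$ is extended from trees to tree languages by $L\cdot_c L' = \bigcup_{t\in L} t\cdot_c L'$, so the left-hand side $L\cdot_c L'$ is a union indexed by the trees of $L$, and similarly for $L\cdot_c L''$. The whole proof therefore amounts to comparing these two unions term by term.

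First I would fix an arbitrary tree $t$ in $L$. Since $L'\subset L''$ by hypothesis, the previous lemma applied to $t$ and to the pair $L'\subset L''$ yields $t\cdot_c L' \subset t\cdot_c L''$. As this holds for every $t\in L$, taking the union over all such $t$ preserves the inclusion, giving
\begin{align*}
  L\cdot_c L' = \bigcup_{t\in L} t\cdot_c L' \subset \bigcup_{t\in L} t\cdot_c L'' = L\cdot_c L'',
\end{align*}
which is exactly the claimed inequality. Equivalently, one could argue element-wise: if $u\in L\cdot_c L'$, then $u\in t\cdot_c L'$ for some $t\in L$, hence $u\in t\cdot_c L''$ by the lemma, hence $u\in L\cdot_c L''$.

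There is essentially no hard part here; the only thing to be careful about is invoking the right version of the previously proved lemma (the one stated for a single tree $t$ and two nested languages) rather than trying to reprove the inclusion from scratch by induction on the structure of trees in $L$. Since the single-tree case already carries the inductive content, the corollary is purely a matter of unfolding the definition of $\cdot_c$ on languages and using monotonicity of unions. \qed
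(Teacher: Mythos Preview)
Your proof is correct and matches the paper's approach: the paper states this as an immediate corollary of the preceding lemma (with no separate proof given), and your argument---applying the single-tree lemma to each $t\in L$ and then taking the union---is exactly the intended justification.
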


The first property not shared with the classical catenation product is that the $c$-product may distribute over other products:
\begin{lemma}\label{lem distribut sub}
  Let $t_1$, $t_2$ and $t_3$ be any three trees in $T(\Sigma)$.
  Let $a$ and $b$ be two distinct symbols in $\Sigma_0$ such that $a$ does not appear in $t_3$.
  Then:
  \begin{align*}
    (t_1\cdot_a t_2)\cdot_b t_3 & =(t_1 \cdot_b t_3)\cdot_a (t_2\cdot_b t_3)
  \end{align*}
\end{lemma}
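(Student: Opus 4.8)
The plan is to proceed by structural induction on the tree $t_1$, in the same spirit as the proofs of the previous lemmas on $\cdot_c$. Before starting I would record one elementary fact that isolates the role of the hypothesis on $t_3$: if a constant symbol $d\in\Sigma_0$ does not occur in a tree $s$, then $s\cdot_d L=s$ for every tree language $L$; this is immediate from Equation~\eqref{eq def lang sub} by a one-line induction on $s$. In particular $t_3\cdot_a L=t_3$ for every $L$, and also $d\cdot_b t_3=d$ when $d\in\Sigma_0\setminus\{b\}$ and $d\cdot_a L=d$ when $d\in\Sigma_0\setminus\{a\}$.

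For the base cases, I would first take $t_1=a$: the left-hand side is $(a\cdot_a t_2)\cdot_b t_3=t_2\cdot_b t_3$, while the right-hand side is $(a\cdot_b t_3)\cdot_a(t_2\cdot_b t_3)=a\cdot_a(t_2\cdot_b t_3)=t_2\cdot_b t_3$, using $a\neq b$. Next take $t_1=c\in\Sigma_0\setminus\{a\}$. If $c\neq b$, both sides collapse to $c$ (the left one because $c\cdot_a t_2=c$ and $c\cdot_b t_3=c$; the right one because $c\cdot_b t_3=c$ and then $c\cdot_a(t_2\cdot_b t_3)=c$). If $c=b$, the left-hand side is $(b\cdot_a t_2)\cdot_b t_3=b\cdot_b t_3=t_3$, and the right-hand side is $(b\cdot_b t_3)\cdot_a(t_2\cdot_b t_3)=t_3\cdot_a(t_2\cdot_b t_3)=t_3$, where the last equality is the preliminary fact applied with $d=a$ and is exactly where the hypothesis that $a$ does not appear in $t_3$ enters.

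For the inductive step, let $t_1=f(s_1,\ldots,s_n)$ with $f\in\Sigma_n$, $n\geq 1$. Expanding with Equation~\eqref{eq def lang sub} twice gives
\begin{align*}
  (t_1\cdot_a t_2)\cdot_b t_3
  &= f(s_1\cdot_a t_2,\ldots,s_n\cdot_a t_2)\cdot_b t_3
   = f\bigl((s_1\cdot_a t_2)\cdot_b t_3,\ldots,(s_n\cdot_a t_2)\cdot_b t_3\bigr).
\end{align*}
Applying the induction hypothesis to each argument $s_i$ turns this into $f\bigl((s_1\cdot_b t_3)\cdot_a(t_2\cdot_b t_3),\ldots,(s_n\cdot_b t_3)\cdot_a(t_2\cdot_b t_3)\bigr)$, which by one further use of Equation~\eqref{eq def lang sub} equals $f(s_1\cdot_b t_3,\ldots,s_n\cdot_b t_3)\cdot_a(t_2\cdot_b t_3)=(f(s_1,\ldots,s_n)\cdot_b t_3)\cdot_a(t_2\cdot_b t_3)$, which is the desired right-hand side.

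I do not expect any genuine difficulty here; the only delicate point — and the only place the hypotheses $a\neq b$ and ``$a$ does not occur in $t_3$'' are used — is the base case $t_1=b$, where rewriting $b\cdot_a t_2$ to $b$ and then collapsing $t_3\cdot_a(\cdots)$ back to $t_3$ is precisely what makes the two sides agree. If $a$ occurred in $t_3$, the identity would fail: the occurrences of $a$ brought in by $t_3$ on the right-hand side would be substituted by $t_2\cdot_b t_3$, whereas no such substitution happens on the left.
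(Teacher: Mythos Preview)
Your proof is correct and follows the same structural induction on $t_1$ as the paper's proof, with the same case split (three base cases $t_1=a$, $t_1=b$, $t_1\in\Sigma_0\setminus\{a,b\}$, and the inductive step via Equation~\eqref{eq def lang sub}). The only difference is expository: you isolate explicitly the auxiliary fact $t_3\cdot_a L=t_3$ and point out that the hypothesis ``$a$ does not appear in $t_3$'' is used precisely in the base case $t_1=b$, whereas the paper leaves that step implicit.
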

\begin{proof}
  By induction over $t_1$.
  \begin{enumerate}
    \item If $t_1=a$, then 
    \begin{align*}
      (t_1\cdot_a t_2)\cdot_b t_3 = t_2 \cdot_b t_3 =(t_1 \cdot_b t_3)\cdot_a (t_2\cdot_b t_3)
    \end{align*}
    \item If $t_1=b$, then
    \begin{align*}
      (t_1\cdot_a t_2)\cdot_b t_3  = t_3 =(t_1 \cdot_b t_3)\cdot_a (t_2\cdot_b t_3) 
    \end{align*}
    \item If $t_1=c\in\Sigma_0\setminus\{a,b\}$, then
    \begin{align*}
      (t_1\cdot_a t_2)\cdot_b t_3  = t_1 =(t_1 \cdot_b t_3)\cdot_a (t_2\cdot_b t_3) 
    \end{align*}
    \item If $t_1=f(u_1,\ldots,u_n)$ with $n>0$, then, following Equation~\eqref{eq def lang sub}:
    \begin{align*}
      (t_1\cdot_a t_2)\cdot_b t_3 & = (f(u_1\cdot_a t_2,\ldots,u_n\cdot_a t_2))\cdot_b t_3\\
      & = f((u_1\cdot_a t_2)\cdot_b t_3,\ldots,(u_n\cdot_a t_2)\cdot_b t_3) \\
      & = f((u_1\cdot_b t_3)\cdot_a (t_2\cdot_b t_3),\ldots,(u_n\cdot_b t_3)\cdot_a (t_2\cdot_b t_3)) & \text{(Induction Hypothesis)}\\
      & = f(u_1\cdot_b t_3,\ldots,u_n\cdot_b t_3)\cdot_a (t_2\cdot_b t_3) \\
      & = (f(u_1,\ldots,u_n)\cdot_b t_3)\cdot_a (t_2\cdot_b t_3)
    \end{align*}
  \end{enumerate}
  \qed
\end{proof}

\begin{corollary}\label{cor distribut sub}
  Let $L_1$, $L_2$ and $L_3$ be any three tree languages over $\Sigma$.
  Let $a$ and $b$ be two distinct symbols in $\Sigma_0$ such that $L_3\subset T(\Sigma\setminus\{a\})$.
  Then:
  \begin{align*}
    (L_1\cdot_a L_2)\cdot_b L_3 & =(L_1 \cdot_b L_3)\cdot_a (L_2\cdot_b L_3)
  \end{align*}
\end{corollary}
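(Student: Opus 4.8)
The plan is to reduce the statement to its single-tree analogue, Lemma~\ref{lem distribut sub}, by first peeling off $L_1$ and then proceeding by structural induction. Writing $L_1=\bigcup_{t_1\in L_1}\{t_1\}$ and using Lemma~\ref{lem distrib prod sum} (whose proof manipulates only quantifiers and so extends verbatim to arbitrary, even infinite, unions), both sides of the claimed equality distribute over this union; hence it suffices to establish, for every tree $t_1\in T(\Sigma)$,
\begin{align*}
  (t_1\cdot_a L_2)\cdot_b L_3 &= (t_1\cdot_b L_3)\cdot_a (L_2\cdot_b L_3).
\end{align*}
Note that the same reduction cannot be applied to $L_2$ or $L_3$: for a fixed tree $t$ the map $L'\mapsto t\cdot_c L'$ does \emph{not} distribute over unions (distinct occurrences of $c$ in $t$ may be filled by distinct trees), so $L_2$ and $L_3$ must be carried along as languages throughout.

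I would then prove the displayed identity by induction on the structure of $t_1$, following the case split of Lemma~\ref{lem distribut sub}. If $t_1=a$, both sides equal $L_2\cdot_b L_3$ (using $a\neq b$, so $a\cdot_b L_3=\{a\}$, together with $\{a\}\cdot_a X=X$). If $t_1=b$, the left side is $L_3$, while the right side is $(b\cdot_b L_3)\cdot_a(L_2\cdot_b L_3)=L_3\cdot_a(L_2\cdot_b L_3)$; since $L_3\subseteq T(\Sigma\setminus\{a\})$, no tree of $L_3$ contains the symbol $a$, and a one-line induction on trees using \eqref{eq def lang sub} shows $L_3\cdot_a X=L_3$ for every language $X$, so the two sides agree. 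If $t_1=c\in\Sigma_0\setminus\{a,b\}$, both sides reduce to $\{c\}$.

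For the inductive step $t_1=f(u_1,\ldots,u_n)$ with $n>0$, I would unfold $\cdot_a L_2$ through $f$ by \eqref{eq def lang sub} to obtain $f(u_1\cdot_a L_2,\ldots,u_n\cdot_a L_2)$, push the outer $\cdot_b L_3$ inside the $f$, apply the induction hypothesis to each $u_i$, then pull $\cdot_a(L_2\cdot_b L_3)$ back outside the $f$ and refold $\cdot_b L_3$ through $f$ to recognise $f(u_1,\ldots,u_n)\cdot_b L_3$; only the structural rule for $f$ is used, not Corollary~\ref{cor cdotc assoc}. The one delicate point --- and the main obstacle --- is that these ``push/pull'' steps take place at the level of \emph{languages}, so one needs the identity
\begin{align*}
  f(X_1,\ldots,X_n)\cdot_c L &= f(X_1\cdot_c L,\ldots,X_n\cdot_c L)
\end{align*}
for arbitrary tree languages $X_1,\ldots,X_n$, which is slightly stronger than the corresponding clause of \eqref{eq def lang sub} (that clause is stated for trees, not languages). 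This identity holds by a short independent-choice argument: $\subseteq$ is immediate, and for $\supseteq$ one takes a tree $f(r_1,\ldots,r_n)$ on the right, chooses for each $i$ a tree $s_i\in X_i$ with $r_i\in s_i\cdot_c L$, and notes that $f(r_1,\ldots,r_n)\in f(s_1,\ldots,s_n)\cdot_c L\subseteq f(X_1,\ldots,X_n)\cdot_c L$. With this auxiliary identity available (it may equally well be isolated as a preliminary lemma), the inductive computation closes, and together with the reduction above this proves the corollary.
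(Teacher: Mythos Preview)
Your proposal is correct, and you have identified a genuine subtlety that the paper glosses over: in the paper the corollary is stated without proof, as an immediate consequence of Lemma~\ref{lem distribut sub}, but as you point out the passage from trees to languages is not entirely mechanical because $L'\mapsto t\cdot_c L'$ does not distribute over unions in general.

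Your route and the paper's implicit route coincide on the reduction over $L_1$ (left distributivity, Lemma~\ref{lem distrib prod sum}); where they differ is that you then redo the structural induction of Lemma~\ref{lem distribut sub} with $L_2$ and $L_3$ kept as languages, whereas the paper tacitly assumes this lift is automatic. The auxiliary identity $f(X_1,\ldots,X_n)\cdot_c L=f(X_1\cdot_c L,\ldots,X_n\cdot_c L)$ that you isolate is exactly the missing ingredient: it is the language-level analogue of the third clause of~\eqref{eq def lang sub}, and once it is in hand the induction of Lemma~\ref{lem distribut sub} transports verbatim. Your independent-choice argument for this identity is correct; one minor remark is that the $\subseteq$ direction already uses the same independent-choice observation (each component chooses its own replacements), so the two directions are equally short. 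The base case $t_1=b$ is also handled correctly: the fact that $L_3\cdot_a X=L_3$ whenever no tree of $L_3$ contains $a$ is indeed a one-line induction from~\eqref{eq def lang sub}. Overall your argument is a faithful, and more careful, expansion of what the paper leaves implicit.
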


In some particular cases, two products commute:
\begin{lemma}\label{lem commut sub}
  Let $t_1$, $t_2$ and $t_3$ be any three trees in $T(\Sigma)$.
  Let $a$ and $b$ be two distinct symbols in $\Sigma_0$ such that $a$ does not appear in $t_3$ and such that $b$ does not appear in $t_2$.
  Then:
  \begin{align*}
    (t_1\cdot_a t_2)\cdot_b t_3 & =(t_1 \cdot_b t_3)\cdot_a t_2
  \end{align*}
\end{lemma}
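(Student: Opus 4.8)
The plan is to derive this commutation property directly from Lemma~\ref{lem distribut sub}, together with the elementary observation that substituting a symbol which does not occur in a tree leaves that tree unchanged.

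First I would record the auxiliary fact: for any tree $t$ in $T(\Sigma)$, any symbol $c$ in $\Sigma_0$ not occurring in $t$, and any tree language $L$, one has $t\cdot_c L=\{t\}$. This is proved by a straightforward structural induction on $t$ using Equation~\eqref{eq def lang sub}: the case $t=c$ is excluded by hypothesis, the case $t=d\in\Sigma_0\setminus\{c\}$ gives $t\cdot_c L=\{d\}$ immediately, and in the inductive case $t=f(t_1,\ldots,t_n)$ none of the $t_i$ contains $c$, so the induction hypothesis gives $t\cdot_c L=f(t_1,\ldots,t_n)=\{t\}$.

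Next, since $a$ does not appear in $t_3$, Lemma~\ref{lem distribut sub} applies and yields $(t_1\cdot_a t_2)\cdot_b t_3=(t_1\cdot_b t_3)\cdot_a(t_2\cdot_b t_3)$. Since $b$ does not appear in $t_2$, the auxiliary fact (with $c=b$, $t=t_2$, $L=\{t_3\}$) gives $t_2\cdot_b t_3=t_2$. Substituting this into the right-hand side produces exactly $(t_1\cdot_b t_3)\cdot_a t_2$, which is the claimed equality.

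An alternative, self-contained route would be a direct structural induction on $t_1$ mirroring the proof of Lemma~\ref{lem distribut sub}, the only change being that in the base case $t_1=a$ one writes $t_2\cdot_b t_3=t_2$ using the hypothesis that $b$ does not appear in $t_2$. I expect no real obstacle here; the only point requiring care is to state and apply the ``absent-symbol substitution is the identity'' fact correctly, keeping track of the distinction between the tree $t_2$ and the singleton language $\{t_2\}$ when chaining the $c$-products.
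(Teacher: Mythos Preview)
Your proof is correct and takes a genuinely shorter route than the paper's own proof. The paper proves the lemma by a direct structural induction on $t_1$, handling the four base and inductive cases explicitly (this is exactly the ``alternative, self-contained route'' you sketch at the end). You instead observe that Lemma~\ref{lem distribut sub} already gives $(t_1\cdot_a t_2)\cdot_b t_3=(t_1\cdot_b t_3)\cdot_a(t_2\cdot_b t_3)$ under the hypothesis that $a$ does not appear in $t_3$, and then collapse the factor $t_2\cdot_b t_3$ to $t_2$ using the additional hypothesis that $b$ does not appear in $t_2$. This is more economical: it avoids repeating an induction whose mechanics are identical to those of Lemma~\ref{lem distribut sub}, at the modest cost of isolating the auxiliary fact ``$c\notin t$ implies $t\cdot_c L=\{t\}$'' (which the paper uses implicitly in its base cases anyway). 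Your remark about tracking the tree/singleton-language identification is well placed; that is the only spot where one could trip, and you handle it correctly.
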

\begin{proof}
  By induction over $t_1$.
  \begin{enumerate}
    \item If $t_1=a$, then 
    \begin{alignat*}{2}
      (t_1\cdot_a t_2)\cdot_b t_3 & = (a\cdot_a t_2) \cdot_b t_3 & & = t_2 \cdot_b t_3\\
       & = t_2 & & = a \cdot_a t_2\\
       & = (a \cdot_b t_3) \cdot_a t_2 & & =(t_1 \cdot_b t_3)\cdot_a t_2 
    \end{alignat*}
    \item If $t_1=b$, then
    \begin{alignat*}{2}
      (t_1\cdot_a t_2)\cdot_b t_3 & = (b\cdot_a t_2)\cdot_b t_3&& = b \cdot_b t_3\\
      & = t_3 && =  t_3\cdot_a t_2 \\
      & = (b \cdot_b t_3)\cdot_a t_2 && = (t_1 \cdot_b t_3)\cdot_a t_2)
    \end{alignat*}
    \item If $t_1=c\in\Sigma_0\setminus\{a,b\}$, then
    \begin{alignat*}{2}
      (t_1\cdot_a t_2)\cdot_b t_3 & = (c \cdot_a t_2)\cdot_b t_3 && = c \cdot_b t_3\\
      & = c && = c \cdot_a t_2\\
      & = (c \cdot_b t_3) \cdot_a t_2
    \end{alignat*}
    \item If $t_1=f(u_1,\ldots,u_n)$ then, following Equation~\eqref{eq def lang sub}:
    \begin{align*}
      (t_1\cdot_a t_2)\cdot_b t_3 & = (f(u_1\cdot_a t_2,\ldots,u_n\cdot_a t_2))\cdot_b t_3\\
      & = f((u_1\cdot_a t_2)\cdot_b t_3,\ldots,(u_n\cdot_a t_2)\cdot_b t_3)\\
      & = f((u_1\cdot_b t_3)\cdot_a t_2,\ldots,(u_n\cdot_b t_3)\cdot_a t_2)& \text{(Induction Hypothesis)}\\
      & = f(u_1\cdot_b t_3,\ldots,u_n\cdot_b t_3)\cdot_a t_2\\
      & = (f(u_1,\ldots,u_n)\cdot_b t_3)\cdot_a t_2
    \end{align*}
  \end{enumerate}
  \qed
\end{proof}

The \emph{iterated} $c$-\emph{product} is the operation $^{n,c}$ recursively defined for any integer $n$ by:
\begin{align}
  L^{0,c} & =  \{c\}\label{eq def iter prod 1}\\
  L^{n+1,c} & =  L^{n,c} \cup L\cdot_c  L^{n,c}\label{eq def iter prod 2}
\end{align}
The $c$-\emph{closure} is the operation $^{*_{c}}$ defined by $L^{*_c} = \bigcup_{n\geq 0} L^{n,c}$.
Notice that, unlike the string case, the products may commute with the closure in some cases:
\begin{lemma}\label{lem permut sub star}
  Let $L_1$ and $L_2$ be any two tree languages over $\Sigma$.
  Let $a$ and $b$ be two distinct symbols in $\Sigma_0$ such that $L_2\subset T(\Sigma\setminus\{a\})$.
  Then:
  \begin{align*}
    L_1^{*_a}\cdot_b L_2 & =(L_1\cdot_b L_2)^{*_a}
  \end{align*}
\end{lemma}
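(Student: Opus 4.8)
The plan is to reduce the statement to a finite-level identity and then pass to the limit. Concretely, I would first prove by induction on $n\geq 0$ the auxiliary equality
\[
L_1^{n,a}\cdot_b L_2 = (L_1\cdot_b L_2)^{n,a}.
\]
For $n=0$ the left-hand side is $\{a\}\cdot_b L_2$; since $a\in\Sigma_0$ and $a\neq b$, the second case of Equation~\eqref{eq def lang sub} (read through the extension of $\cdot_b$ to languages) gives $\{a\}$, which is exactly $(L_1\cdot_b L_2)^{0,a}$ by Equation~\eqref{eq def iter prod 1}.

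For the inductive step, assuming the equality at rank $n$, I would expand $L_1^{n+1,a}$ using Equation~\eqref{eq def iter prod 2}, push $\cdot_b L_2$ inside the union with Lemma~\ref{lem distrib prod sum}, obtaining $(L_1^{n,a}\cdot_b L_2)\cup((L_1\cdot_a L_1^{n,a})\cdot_b L_2)$. The key move is to rewrite the second term with Corollary~\ref{cor distribut sub}, taking $L_1$, $L_1^{n,a}$, $L_2$ in the roles of $L_1$, $L_2$, $L_3$: this is legitimate precisely because the side condition $L_3\subset T(\Sigma\setminus\{a\})$ of that corollary is met here by the hypothesis on $L_2$. This yields $(L_1\cdot_b L_2)\cdot_a(L_1^{n,a}\cdot_b L_2)$, and applying the induction hypothesis to $L_1^{n,a}\cdot_b L_2$ and re-reading Equation~\eqref{eq def iter prod 2} turns the whole expression into $(L_1\cdot_b L_2)^{n+1,a}$.

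Finally, I would take the union over all $n\geq 0$. Because $\cdot_c$ is extended to languages by $L\cdot_c L' = \bigcup_{t\in L} t\cdot_c L'$, its first argument commutes with arbitrary unions, so $L_1^{*_a}\cdot_b L_2 = \bigl(\bigcup_{n\geq 0} L_1^{n,a}\bigr)\cdot_b L_2 = \bigcup_{n\geq 0}(L_1^{n,a}\cdot_b L_2)$; the auxiliary equality then gives $\bigcup_{n\geq 0}(L_1\cdot_b L_2)^{n,a} = (L_1\cdot_b L_2)^{*_a}$, which is the claim.

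I do not expect a genuine obstacle: the only points requiring care are checking that the side condition invoked when applying Corollary~\ref{cor distribut sub} coincides exactly with the standing hypothesis $L_2\subset T(\Sigma\setminus\{a\})$, and observing that distributivity of $\cdot_b L_2$ over the infinite union defining the $a$-closure is immediate from the way $\cdot_c$ is defined on languages, so it needs no separate lemma.
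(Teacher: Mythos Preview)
Your proposal is correct and follows essentially the same route as the paper: both prove the finite-level identity $L_1^{n,a}\cdot_b L_2=(L_1\cdot_b L_2)^{n,a}$ by induction on $n$, handling the base case via $\{a\}\cdot_b L_2=\{a\}$ and the inductive step via Lemma~\ref{lem distrib prod sum} together with Corollary~\ref{cor distribut sub} (whose side condition is exactly the hypothesis $L_2\subset T(\Sigma\setminus\{a\})$), and then pass to the union over $n$. Your write-up is in fact slightly more explicit than the paper's about why the base case holds and why the passage to the infinite union is legitimate.
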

\begin{proof}
  Let us show by recurrence over the integer $n$ that $L_1^{n,a}\cdot_b L_2  =(L_1\cdot_b L_2)^{n,a}$.
  \begin{enumerate}
    \item If $n=0$, then, according to Equation~\eqref{eq def iter prod 1}):
		  \begin{align*}
		    L_1^{0,a}\cdot_b L_2  = \{a\} = (L_1\cdot_b L_2)^{0,a} 
		  \end{align*}
    \item If $n>0$, then, following Equation~\eqref{eq def iter prod 2}):
		  \begin{align*}
		    L_1^{n+1,a}\cdot_b L_2  & = (L_1^{n,a}\cdot_a L_1 \cup L_1^{n,a})\cdot_b L_2  \\
		    & = (L_1^{n,a}\cdot_a L_1)\cdot_b L_2 \cup (L_1^{n,a})\cdot_b L_2 &\text{(Lemma~\ref{lem distrib prod sum})}\\
		    & = ((L_1^{n,a}\cdot_b L_2)\cdot_a (L_1\cdot_b L_2)) \cup (L_1^{n,a})\cdot_b L_2 & \text{(Corollary~\ref{cor distribut sub})}\\ 
		    & = ((L_1\cdot_b L_2)^{n,a} \cdot_a (L_1\cdot_b L_2)) \cup (L_1\cdot_b L_2)^{n,a} & \text{(Induction Hypothesis)}\\ 
		    & = (L_1\cdot_b L_2)^{n+1,a} 
		  \end{align*}
  \end{enumerate}
  As a direct consequence, $L_1^{*_a}\cdot_b L_2  =(L_1\cdot_b L_2)^{*_a}$.
  \qed
\end{proof}

A \emph{rational expression} $E$ over $\Sigma$ is inductively defined by:
\begin{align*}
  \begin{gathered}
    \begin{aligned}
      E &=0, & E&=f(E_1,\ldots,E_n),
    \end{aligned}\\
    \begin{aligned}
      E&=E_1+E_2, & E&=E_1\cdot_c E_2,& E&=E_1^{*_c} 
    \end{aligned}
  \end{gathered}
\end{align*}
where $f$ is any symbol in $\Sigma_n$, $c$ is any symbol in $\Sigma_0$ and $E_1,\ldots,E_n$ are any $n$ rational expressions.
The \emph{language denoted by} $E$ is the tree language $L(E)$ inductively defined by:
\begin{align*}
  \begin{gathered}
    \begin{aligned}
      L(0)&=\emptyset, & L(f(E_1,\ldots,E_n))&=f(L(E_1),\ldots,L(E_n)),
    \end{aligned}\\
    \begin{aligned}
      L(E_1+E_2)&=L(E_1)\cup L(E_2), & L(E_1\cdot_c E_2)&=L(E_1)\cdot_c L(E_2),& L(E_1^{*_c})&=(L(E_1))^{*_c} 
    \end{aligned}
  \end{gathered}
\end{align*}
where $f$ is any symbol in $\Sigma_n$, $c$ is any symbol in $\Sigma_0$ and $E_1,\ldots,E_n$ are any $n$ rational expressions.
In the following of this paper, we consider that rational expressions include some \emph{variables}.
Let $X=\{x_1,\ldots,x_k\}$ be a set of $k$ variables.
A rational expression $E$ over $(\Sigma,X)$ is inductively defined by:
\begin{align*}
  \begin{gathered}
    \begin{aligned}
      E &=0, & E&=x_j,& E&=f(E_1,\ldots,E_n),
    \end{aligned}\\
    \begin{aligned}
      E&=E_1+E_2, & E&=E_1\cdot_c E_2,& E&=E_1^{*_c} 
    \end{aligned}
  \end{gathered}
\end{align*}
where $f$ is any symbol in $\Sigma_n$, $c$ is any symbol in $\Sigma_0$, $1\leq j\leq k$ is any integer and $E_1,\ldots,E_n$ are any $n$ rational expressions over $(\Sigma,X)$.
The language denoted by an expression with variables needs a context to be computed: indeed, any variable has to be evaluated according to a tree language.
Let $\mathcal{L}=(L_1,\ldots,L_k)$ be a $k$-tuple of tree languages over $\Sigma$.
The $\mathcal{L}$-language denoted by $E$ is the tree language $L_\mathcal{L}(E)$ inductively defined by:
\begin{align*}
  \begin{gathered}
    \begin{aligned}
      L_\mathcal{L}(0)&=\emptyset, & L_\mathcal{L}(x_j)&=L_j,
    \end{aligned}\\
    \begin{aligned}
      L_\mathcal{L}(f(E_1,\ldots,E_n))&=f(L_\mathcal{L}(E_1),\ldots,L_\mathcal{L}(E_n)),
    \end{aligned}\\
    \begin{aligned}
      L_\mathcal{L}(E_1+E_2)&=L_\mathcal{L}(E_1)\cup L_\mathcal{L}(E_2) 
    \end{aligned}\\
    \begin{aligned}
      L_\mathcal{L}(E_1\cdot_c E_2)&=L_\mathcal{L}(E_1)\cdot_c L_\mathcal{L}(E_2),& L_\mathcal{L}(E_1^{*_c})&=(L_\mathcal{L}(E_1))^{*_c} 
    \end{aligned}
  \end{gathered}
\end{align*}
where $f$ is any symbol in $\Sigma_n$, $c$ is any symbol in $\Sigma_0$, $1\leq j\leq k$ is any integer and $E_1,\ldots,E_n$ are any $n$ rational expressions over $(\Sigma,X)$.
Two rational expressions $E$ and $F$ with variables are \emph{equivalent}, denoted by $E\sim F$, if for any tuple $\mathcal{L}$ of languages over $\Sigma$, $L_\mathcal{L}(E)= L_\mathcal{L}(F)$.
Let $\Gamma\subset\Sigma$. 
Two rational expressions $E$ and $F$ with variables are $\Gamma$-\emph{equivalent}, denoted by $E\sim_\Gamma F$, if for any tuple $\mathcal{L}$ of languages over $\Gamma$, $L_\mathcal{L}(E)= L_\mathcal{L}(F)$.
By definition,
\begin{align}
  E\sim F &\Rightarrow E\sim_\Gamma F\label{eq impl sim alpha}
\end{align}
Notice that any expression over $(\Sigma,X)$ is also an expression over $\Sigma\cup X$.
However, two equivalent rational expressions over $(\Sigma,X)$ are not necessarily equivalent as rational expressions over $\Sigma\cup X$.
As an example, $x\cdot_a b$ is equivalent to $x$ as expressions over $\{a,b,x\}$, but not as expressions over $(\{a,b\},\{x\})$:
\begin{alignat*}{4}
  L(x\cdot_a b) &=\{x\} &&= L(x)\\
  L_{\{a\}}(x\cdot_a b)& =\{b\} &&\neq  L_{\{a\}}(x) &&=\{a\}
\end{alignat*}

In the following, we denote by $E_{x\leftarrow E'}$ the expression obtained by substituting any symbol $x$ by the expression $E'$ in the expression $E$.
Obviously, this transformation is inductively defined as follows:
\begin{align*}
  \begin{gathered}
	  \begin{aligned}
	    a_{x\leftarrow E'} & =a & 0_{x\leftarrow E'} & =0\\
	    y_{x\leftarrow E'} & =y & x_{x\leftarrow E'} & =E'
	  \end{aligned}\\
	  \begin{aligned}
	    (f(E_1,\ldots,E_n))_{x\leftarrow E'}  =f((E_1)_{x\leftarrow E'},\ldots,(E_n)_{x\leftarrow E'})
	  \end{aligned}\\
	  \begin{aligned}
	    (E_1+E_2)_{x\leftarrow E'} & =(E_1)_{x\leftarrow E'}+(E_2)_{x\leftarrow E'} & (E_1\cdot_c E_2)_{x\leftarrow E'} & =(E_1)_{x\leftarrow E'}\cdot_c(E_2)_{x\leftarrow E'}
	  \end{aligned}\\
	  \begin{aligned}
	    (E_1^{*_c})_{x\leftarrow E'}  &=((E_1)_{x\leftarrow E'})^{*_c} 
	  \end{aligned}
  \end{gathered}
\end{align*}
where $a$ is any symbol in $\Sigma_0$, $x\neq y$ are two variables in $X$, $f$ is any symbol in $\Sigma_n$, $c$ is any symbol in $\Sigma_0$ and $E_1,\ldots,E_n$ are any $n$ rational expressions over $(\Sigma,X)$.
This transformation preserves the language in the following case:
\begin{lemma}\label{lem sub cons lang}
  Let $E$ be an expression over an alphabet $\Sigma$ and over a set $X=\{x_1,\ldots,x_n\}$ of variables.
  Let $F$ be a rational expression over $(\Sigma,X)$.
  Let $x_j$ be a variable in $X$.
  Let $\mathcal{L}=(L_1,\ldots,L_n)$ be a $n$-uple of tree languages such that $L_j=L_\mathcal{L}(F)$.
  Then:
  \begin{align*}
    L_\mathcal{L}((E)_{x_j\leftarrow F}) & = L_\mathcal{L}(E)
  \end{align*}
\end{lemma}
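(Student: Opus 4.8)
The plan is to proceed by structural induction on the expression $E$, which is well-suited here since both the substitution $(\cdot)_{x_j\leftarrow F}$ and the $\mathcal{L}$-language operator $L_\mathcal{L}$ have already been defined by induction over exactly that structure. I would fix $F$, $x_j$ and $\mathcal{L}$ (with $L_j=L_\mathcal{L}(F)$) once and for all, and let $E$ vary.

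For the base cases, if $E=0$, or $E=a$ with $a\in\Sigma_0$, or $E=x_i$ with $i\neq j$, then $(E)_{x_j\leftarrow F}=E$ directly from the definition of the substitution, so the desired equality is immediate. The only base case that is not purely formal is $E=x_j$: here $(x_j)_{x_j\leftarrow F}=F$, hence $L_\mathcal{L}((x_j)_{x_j\leftarrow F})=L_\mathcal{L}(F)=L_j=L_\mathcal{L}(x_j)$, where the middle equality is precisely the hypothesis $L_j=L_\mathcal{L}(F)$. This is the single point in the whole argument where that hypothesis is used.

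For the inductive step, each of the cases $E=f(E_1,\ldots,E_m)$, $E=E_1+E_2$, $E=E_1\cdot_c E_2$ and $E=E_1^{*_c}$ is treated in the same way: the definition of the substitution lets one push $(\cdot)_{x_j\leftarrow F}$ inside the outermost operator; the definition of $L_\mathcal{L}$ lets one push $L_\mathcal{L}$ inside the corresponding language operation (function application, union, $c$-product, $c$-closure); and then the induction hypothesis applied to each subexpression $E_i$ closes the argument. For instance, for $E=E_1\cdot_c E_2$ one gets $L_\mathcal{L}((E_1\cdot_c E_2)_{x_j\leftarrow F})=L_\mathcal{L}((E_1)_{x_j\leftarrow F})\cdot_c L_\mathcal{L}((E_2)_{x_j\leftarrow F})=L_\mathcal{L}(E_1)\cdot_c L_\mathcal{L}(E_2)=L_\mathcal{L}(E_1\cdot_c E_2)$, and the three other cases are completely analogous.

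I do not expect a genuine obstacle here: none of the $c$-product properties established in the preceding lemmas is needed, only the two inductive definitions together with the hypothesis on $L_j$. The only mild care required is to make sure the induction runs on the structure of $E$ (the expression $F$ being held fixed), and to note that variables occurring inside $F$ are automatically evaluated through the same tuple $\mathcal{L}$, since $L_\mathcal{L}$ is ultimately applied to the whole substituted expression.
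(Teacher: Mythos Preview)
Your proposal is correct and follows essentially the same approach as the paper: a structural induction on $E$, with the base case $E=x_j$ being the only place where the hypothesis $L_j=L_\mathcal{L}(F)$ is invoked, and each inductive case handled by pushing the substitution and then $L_\mathcal{L}$ through the outermost operator before applying the induction hypothesis.
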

\begin{proof}
  By induction over the structure of $E$.
  \begin{enumerate}
    \item If $E\in\{a,y,0\}$ with $a\in\Sigma_0$ and $y\neq x_j$, $(E)_{x_j\leftarrow F}=E$.
    \item If $E=x_j$, then $(E)_{x_j\leftarrow F}=F$.
    Therefore 
    \begin{alignat*}{2}
      L_\mathcal{L}((E)_{x_j\leftarrow F}) &=L_{\mathcal{L}}(F) &&=L_j\\
      &=L_{\mathcal{L}}(x_j) && =L_{\mathcal{L}}(E)
    \end{alignat*}
    \item If $E=f(E_1,\ldots,E_n)$, with $f\in\Sigma_k$, $k>0$ then:
    \begin{align*}
      L_\mathcal{L}((E)_{x_j\leftarrow F}) &=L_{\mathcal{L}}(f((E_1)_{x_j\leftarrow F},\ldots,(E_n)_{x_j\leftarrow F}))\\
      &=f(L_{\mathcal{L}}((E_1)_{x_j\leftarrow F}),\ldots,L_{\mathcal{L}}((E_n)_{x_j\leftarrow F}))\\   
      &=f(L_{\mathcal{L}}(E_1),\ldots,L_{\mathcal{L}}(E_n))  & \text{(Induction Hypothesis)}\\      
      &=L_{\mathcal{L}}(f(E_1,\ldots,E_n))  
    \end{align*}
    \item If $E=E_1+E_2$, then 
    \begin{align*}
      L_\mathcal{L}((E_1+E_2)_{x_j\leftarrow F}) &=L_\mathcal{L}((E_1)_{x_j\leftarrow F}+(E_2)_{x_j\leftarrow F})\\
      &=L_\mathcal{L}((E_1)_{x_j\leftarrow F})\cup L_\mathcal{L}((E_2)_{x_j\leftarrow F}))\\
      &=L_\mathcal{L}(E_1) \cup L_\mathcal{L}(E_2) & \text{(Induction Hypothesis)}\\
      &=L_\mathcal{L}(E_1+E_2)
    \end{align*}
    \item If $E=E_1\cdot_c E_2$, then
    \begin{align*}
      L_\mathcal{L}((E_1\cdot_c E_2)_{x_j\leftarrow F}) &=L_\mathcal{L}((E_1)_{x_j\leftarrow F}\cdot_c(E_2)_{x_j\leftarrow F})\\
      &=L_\mathcal{L}((E_1)_{x_j\leftarrow F})\cdot_c L_\mathcal{L}((E_2)_{x_j\leftarrow F}))\\
      &=L_\mathcal{L}(E_1) \cdot_c L_\mathcal{L}(E_2) & \text{(Induction Hypothesis)}\\
      &=L_\mathcal{L}(E_1 \cdot_c  E_2)
    \end{align*}
    \item If $E=E_1^{*_c}$, then
    \begin{align*}
      L_\mathcal{L}((E_1^{*_c})_{x_j\leftarrow F}) &=(L_\mathcal{L}((E_1)_{x_j\leftarrow F}))^{*_c}\\
      &=(L_\mathcal{L}(E_1))^{*_c} & \text{(Induction Hypothesis)}\\
      &=L_\mathcal{L}(E_1^{*_c})
    \end{align*}
  \end{enumerate}
  \qed
\end{proof}

In the following, we denote by $\mathrm{op}(E)$ the set of the operators that appear in a rational expression $E$.
The previous substitution can be used in order to factorize an expression w.r.t. a variable.
However, this operation does not preserve the equivalence; \emph{e.g.}
\begin{align*}
  L_{\{b\}}(x\cdot_b c)  = \{c\} \neq L_{\{b\}}((a\cdot_b c) \cdot_a x)=\{b\}
\end{align*}

Nevertheless, this operation preserves the language if it is based on a restricted alphabet:
\begin{proposition}\label{prop cas sub var par lettre ok}
  Let $E$ be a rational expression over a graded alphabet $\Sigma$ and over a set $X$ of variables.
  Let $x$ be a variable in $X$.
  Let $\Gamma\subset\Sigma$ be the subset defined by $\Gamma=\{b\in\Sigma_0\mid \{\cdot_b,^{*_b}\}\cap\mathrm{op}(E)\neq\emptyset\}$.
  Let $a$ be a symbol not in $\Sigma$.
  Then:
  \begin{align*}
    E\sim_{\Sigma\setminus\Gamma} (E)_{x\leftarrow a}\cdot_a x
  \end{align*}
\end{proposition}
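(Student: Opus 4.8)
The plan is to prove the equivalence by structural induction on $E$, carrying along the dependence of $\Gamma$ on the expression: for any expression $E'$ let $\Gamma_{E'}=\{b\in\Sigma_0\mid\{\cdot_b,{}^{*_b}\}\cap\mathrm{op}(E')\neq\emptyset\}$, so that $\Gamma=\Gamma_E$, and the induction claim is that $E'\sim_{\Sigma\setminus\Gamma_{E'}}(E')_{x\leftarrow a}\cdot_a x$ for every $E'$. Throughout, fix the fresh constant $a\notin\Sigma$, work inside the ambient graded alphabet $\Sigma\cup\{a\}$, write $x=x_j$, and fix an arbitrary tuple $\mathcal{L}=(L_1,\dots,L_k)$ of tree languages over $\Sigma\setminus\Gamma_E$; in particular $L_j\subset T(\Sigma\setminus\Gamma_E)$.

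Two elementary facts about $\cdot_a$ will be used throughout. First, if the constant $a$ occurs in no tree of a language $M$, then $M\cdot_a L=M$ for every $L$: this follows from Equation~\eqref{eq def lang sub} by an immediate induction showing $t\cdot_a L=\{t\}$ whenever $a$ does not occur in $t$ (the case $M=\emptyset$ being trivial). Second, constructors commute with $\cdot_a$, i.e.\ $f(M_1,\dots,M_n)\cdot_a L=f(M_1\cdot_a L,\dots,M_n\cdot_a L)$, again directly from Equation~\eqref{eq def lang sub} and the definition of the $c$-product on languages. Finally, every product or closure symbol occurring in a subexpression $E'$ of $E$ occurs in $E$, so $\Gamma_{E'}\subseteq\Gamma_E$; hence $\mathcal{L}$ is in particular a tuple over $\Sigma\setminus\Gamma_{E'}$, and the induction hypothesis for $E'$ applies to it: writing $M'=L_\mathcal{L}((E')_{x\leftarrow a})$, it reads $M'\cdot_a L_j=L_\mathcal{L}(E')$.

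The base cases are immediate. If $E\in\{0,y\}$ with $y\in X\setminus\{x\}$, or $E=b\in\Sigma_0$, then $(E)_{x\leftarrow a}=E$ and $L_\mathcal{L}(E)\subset T(\Sigma)$ contains no $a$, so $L_\mathcal{L}((E)_{x\leftarrow a}\cdot_a x)=L_\mathcal{L}(E)\cdot_a L_j=L_\mathcal{L}(E)$ by the first fact. If $E=x$, then $(E)_{x\leftarrow a}=a$ and $L_\mathcal{L}(a\cdot_a x)=\{a\}\cdot_a L_j=L_j=L_\mathcal{L}(x)$ by the first branch of Equation~\eqref{eq def lang sub}. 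For the easy inductive cases, set $M_i=L_\mathcal{L}((E_i)_{x\leftarrow a})$: if $E=f(E_1,\dots,E_n)$, then $L_\mathcal{L}((E)_{x\leftarrow a}\cdot_a x)=f(M_1,\dots,M_n)\cdot_a L_j=f(M_1\cdot_a L_j,\dots,M_n\cdot_a L_j)$ by the second fact, and the induction hypothesis rewrites this as $f(L_\mathcal{L}(E_1),\dots,L_\mathcal{L}(E_n))=L_\mathcal{L}(E)$; if $E=E_1+E_2$, the same computation works with Lemma~\ref{lem distrib prod sum} in place of the second fact.

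The core of the argument is the treatment of $\cdot_c$ and ${}^{*_c}$, and this is exactly where the definition of $\Gamma$ makes the side conditions of the earlier lemmas available. If $E=E_1\cdot_c E_2$, then $c\in\Gamma_E$ since $\cdot_c\in\mathrm{op}(E)$, hence $L_j\subset T(\Sigma\setminus\Gamma_E)\subset T((\Sigma\cup\{a\})\setminus\{c\})$; as $a\neq c$, Corollary~\ref{cor distribut sub} applies, so that (the first equality being the unfolding of the definitions)
\begin{align*}
  L_\mathcal{L}\bigl((E)_{x\leftarrow a}\cdot_a x\bigr)
  & = (M_1\cdot_c M_2)\cdot_a L_j\\
  & = (M_1\cdot_a L_j)\cdot_c(M_2\cdot_a L_j) & \text{(Corollary~\ref{cor distribut sub})}
\end{align*}
and the induction hypothesis turns the last expression into $L_\mathcal{L}(E_1)\cdot_c L_\mathcal{L}(E_2)=L_\mathcal{L}(E)$. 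Similarly, if $E=E_1^{*_c}$, then $c\in\Gamma_E$ since ${}^{*_c}\in\mathrm{op}(E)$, so again $L_j\subset T((\Sigma\cup\{a\})\setminus\{c\})$, and Lemma~\ref{lem permut sub star}, applied with closure symbol $c$ and product symbol $a$, gives $L_\mathcal{L}((E)_{x\leftarrow a}\cdot_a x)=M_1^{*_c}\cdot_a L_j=(M_1\cdot_a L_j)^{*_c}=(L_\mathcal{L}(E_1))^{*_c}=L_\mathcal{L}(E)$, which closes the induction. The only genuine difficulty is bookkeeping: one must keep track of the alphabet over which each subexpression is evaluated and check, at every product and closure node, that the substituted language $L_j$ avoids the relevant substitution symbol — precisely what restricting to $\Sigma\setminus\Gamma$ guarantees — while the freshness of $a$ ensures it stays inert at every other node.
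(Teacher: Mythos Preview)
Your proof is correct and follows essentially the same structural induction as the paper, using the same key ingredients (Equation~\eqref{eq def lang sub}, Lemma~\ref{lem distrib prod sum}, Corollary~\ref{cor distribut sub}, Lemma~\ref{lem permut sub star}) in the same places. You are in fact slightly more explicit than the paper on two points: you spell out why the base cases $E\in\{0\}\cup\Sigma_0\cup X\setminus\{x\}$ work (via the fact that $M\cdot_a L=M$ when $a$ does not occur in $M$), and you make the monotonicity $\Gamma_{E'}\subseteq\Gamma_E$ explicit so that the induction hypothesis applies to the fixed tuple $\mathcal{L}$.
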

\begin{proof}
  By induction over the structure of $E$.
  \begin{enumerate}
    \item If $E=x$, then since $x\sim_{\Sigma\cup\{a\}} a\cdot_a x$, it holds from Equation~\eqref{eq impl sim alpha} that $E\sim_{\Sigma\setminus\Gamma} (E)_{x\leftarrow a}\cdot_a x$.
    \item If $E\in\{0\} \cup \Sigma\cup X\setminus\{x\}$, since $x$ does not appear in $E$, it holds $E=E_{x\leftarrow a}$.
    \item If $E=f(E_1,\ldots,E_n)$, then
    \begin{align*}
      (f(E_1,\ldots,E_n))_{x\leftarrow a}\cdot_a x &= f((E_1)_{x\leftarrow a},\ldots,(E_n)_{x\leftarrow a})\cdot_a x\\
      &\sim f((E_1)_{x\leftarrow a}\cdot_a x,\ldots,(E_n)_{x\leftarrow a}\cdot_a x) & \text{(Equation~\eqref{eq def lang sub})}\\
      &\sim_{\Sigma\setminus\Gamma} f(E_1,\ldots,E_n) & \text{(Induction hypothesis)}
    \end{align*}
    \item If $E=E_1+E_2$, then
    \begin{align*}
      (E_1+E_2)_{x\leftarrow a}\cdot_a x &= ((E_1)_{x\leftarrow a}+(E_2)_{x\leftarrow a})\cdot_a x\\
      &\sim ((E_1)_{x\leftarrow a})\cdot_a x+((E_2)_{x\leftarrow a})\cdot_a x & \text{(Lemma~\ref{lem distrib prod sum})}\\
      & \sim_{\Sigma\setminus\Gamma} E_1+ E_2 & \text{(Induction hypothesis})
    \end{align*}
    \item If $E=E_1\cdot_c E_2$, then
    \begin{align*}
      (E_1\cdot_c E_2)_{x\leftarrow a}\cdot_a x &= ((E_1)_{x\leftarrow a}\cdot_c (E_2)_{x\leftarrow a})\cdot_a x\\
      &\sim_{\Sigma} (((E_1)_{x\leftarrow a})\cdot_a x)\cdot_c(((E_2)_{x\leftarrow a})\cdot_a x) & \text{(Corollary~\ref{cor distribut sub})}\\
      & \sim_{\Sigma\setminus\Gamma} E_1 \cdot_c E_2 & \text{(Induction hypothesis})
    \end{align*}
    \item If $E=E_1^{*_c}$, then
    \begin{align*}
      (E_1^{*_c})_{x\leftarrow a}\cdot_a x &= ((E_1)_{x\leftarrow a})^{*_c}\cdot_a x\\
      &\sim_{\Sigma} (((E_1)_{x\leftarrow a})\cdot_a x)^{*_c} & \text{(Lemma~\ref{lem permut sub star})}\\
      & \sim_{\Sigma\setminus\Gamma} E_1^{*_c} & \text{(Induction hypothesis})
    \end{align*}
  \end{enumerate}
	\qed
\end{proof}

\section{Equations Systems for Tree Languages}\label{sec eq syst}

Let $\Sigma$ be an alphabet and $\mathbb{E}=\{\mathbb{E}_1,\ldots,\mathbb{E}_n\}$ be a set of $n$ variables.
An \emph{equation} over $(\Sigma,\mathbb{E})$ is an expression $\mathbb{E}_j=F_j$, where $1\leq j\leq n$ is any integer and $F_j$ is a rational expression over $(\Sigma,\mathbb{E})$.
An \emph{equation system} over $(\Sigma,\mathbb{E})$ is a set $\mathcal{X}=\{\mathbb{E}_j=F_j \mid 1\leq j\leq n\}$ of $n$ equations.
Let $\mathcal{L}=(L_1,\ldots,L_n)$ be a $n$-tuple of tree languages.
The tuple $\mathcal{L}$ is a \emph{solution} for an equation $(\mathbb{E}_j=F_j)$ if $L_j=L_{\mathcal{L}}(F_j)$.
The tuple $\mathcal{L}$ is a \emph{solution} for $\mathcal{X}$ if for any equation $(\mathbb{E}_j=F_j)$ in $\mathcal{X}$,  $\mathcal{L}$ is a solution of $(\mathbb{E}_j=F_j)$.
\begin{example}\label{ex syst eq}
  Let us define the equation system $\mathcal{X}$ as follows:
  \begin{align*}
  \mathcal{X}&=
    \begin{cases}
      \mathbb{E}_1 & =f(\mathbb{E}_1,\mathbb{E}_1)+f(\mathbb{E}_2,\mathbb{E}_4)\\
      \mathbb{E}_2 & =b+f(\mathbb{E}_2,\mathbb{E}_4)\\
      \mathbb{E}_3 & =a+h(\mathbb{E}_4)\\
      \mathbb{E}_4 & =a+h(\mathbb{E}_3)
    \end{cases}
  \end{align*}
  The tuple $(\emptyset,\emptyset,\emptyset,\emptyset)$ is a solution for the equation $\mathbb{E}_1=F_1$, but not of the system $\mathcal{X}$.
\end{example}
Two systems over the same variables are \emph{equivalent} if they admit the same solutions.
Notice that a system does not necessarily admit a unique solution.
As an example, any language is a solution of the system $\mathbb{E}_1=\mathbb{E}_1$.
Obviously, 
\begin{proposition}\label{prop pas de var sol un}
  If $\mathcal{X}$ only contains equations $\mathbb{E}_k=F_k$ with $F_k$ a rational expression without variables, then $(L(F_1),\ldots,L(F_n))$ is the unique solution of $\mathcal{X}$.
\end{proposition}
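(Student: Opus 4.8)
The plan is to observe that the hypothesis — every right-hand side $F_k$ is variable-free — decouples the defining equation $L_\mathcal{L}(-)$ from the context tuple $\mathcal{L}$, so the system becomes a plain assignment rather than a genuine recursion.

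First I would establish the auxiliary fact that for any rational expression $F$ over $(\Sigma,\mathbb{E})$ in which no variable $\mathbb{E}_j$ occurs, and for any $n$-tuple $\mathcal{L}$ of tree languages over $\Sigma$, one has $L_\mathcal{L}(F) = L(F)$. This is a routine structural induction on $F$: the base cases $F = 0$ and $F = a \in \Sigma_0$ are immediate from the definitions of $L$ and $L_\mathcal{L}$ (the variable case $F = \mathbb{E}_j$ cannot occur by hypothesis), and each inductive case ($F = f(F_1,\ldots,F_m)$, $F = F_1 + F_2$, $F = F_1 \cdot_c F_2$, $F = F_1^{*_c}$) follows because the two semantic functions are defined by exactly the same clause on that operator and, since no variable occurs in any $F_i$, the induction hypothesis gives $L_\mathcal{L}(F_i) = L(F_i)$.

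With this fact in hand, the proof of the proposition is immediate. For existence, take $\mathcal{L}^0 = (L(F_1),\ldots,L(F_n))$; for each equation $\mathbb{E}_k = F_k$ in $\mathcal{X}$ we have $L_{\mathcal{L}^0}(F_k) = L(F_k)$ by the auxiliary fact, which is precisely the $k$-th component of $\mathcal{L}^0$, so $\mathcal{L}^0$ is a solution of every equation and hence of $\mathcal{X}$. For uniqueness, let $\mathcal{L} = (L_1,\ldots,L_n)$ be an arbitrary solution of $\mathcal{X}$; then for each $k$, the definition of solution gives $L_k = L_\mathcal{L}(F_k)$, and the auxiliary fact gives $L_\mathcal{L}(F_k) = L(F_k)$, so $L_k = L(F_k)$ for all $k$, i.e. $\mathcal{L} = \mathcal{L}^0$.

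There is no real obstacle here; the only point requiring care is stating and proving the auxiliary fact cleanly, since it is the single place where the variable-freeness hypothesis is used, and it is what prevents the phenomenon illustrated just before the proposition (the system $\mathbb{E}_1 = \mathbb{E}_1$, whose right-hand side does contain a variable, admitting many solutions).
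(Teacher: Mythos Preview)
Your proof is correct. The paper itself gives no proof of this proposition at all --- it merely introduces it with the word ``Obviously,'' --- so you have supplied precisely the routine argument (the structural-induction lemma $L_\mathcal{L}(F)=L(F)$ for variable-free $F$, followed by the immediate existence and uniqueness) that the authors left implicit.
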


Let us now define the operation of substitution, computing an equivalent system.
\begin{definition}
  Let $\mathcal{X}=\{(\mathbb{E}_j=F_j)\mid 1\leq j\leq n\}$ be an equation system.
  The \emph{substitution of} $(\mathbb{E}_k=F_k)$ in $\mathcal{X}$ is the system $\mathcal{X}^k=\{\mathbb{E}_k=F_k\}\cup\{\mathbb{E}_j=(F_j)_{\mathbb{E}_k\leftarrow F_k} \mid j\neq k\wedge 1\leq j\leq n\}$.
\end{definition}

As a direct consequence of Lemma~\ref{lem sub cons lang},

\begin{proposition}\label{prop sub ok dans syst}
  Let $\mathcal{X}=\{(\mathbb{E}_j=F_j)\mid 1\leq j\leq n\}$ be an equation system.
  Let $\mathbb{E}_k=F_k$ be an equation in $\mathcal{X}$.
  Let $\mathcal{L}$ be a solution of $\mathcal{X}$.
  Then for any integer $1\leq j,k\leq n$ with $j\neq k$,
  \begin{equation*}
    \mathcal{L}\text{ is a solution of }\mathbb{E}_j=(F_j)_{\mathbb{E}_k\leftarrow F_k}.
  \end{equation*} 
\end{proposition}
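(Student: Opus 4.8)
The plan is to obtain the statement directly from Lemma~\ref{lem sub cons lang}, exactly as the phrase ``direct consequence'' suggests. First I would fix the solution $\mathcal{L}=(L_1,\ldots,L_n)$ of $\mathcal{X}$ and unpack what this means: for every index $i$ with $1\leq i\leq n$ we have $L_i=L_{\mathcal{L}}(F_i)$; in particular, specializing to $i=k$, the tuple $\mathcal{L}$ satisfies $L_k=L_{\mathcal{L}}(F_k)$. This last identity is precisely the hypothesis required to invoke Lemma~\ref{lem sub cons lang}.

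Next I would apply Lemma~\ref{lem sub cons lang} with $E:=F_j$, with replacement expression $F:=F_k$, and with the distinguished variable taken to be $\mathbb{E}_k$ (the variable set here being $\mathbb{E}=\{\mathbb{E}_1,\ldots,\mathbb{E}_n\}$). Since $\mathcal{L}$ meets the requirement $L_k=L_{\mathcal{L}}(F_k)$ established above, the lemma yields
\[
  L_{\mathcal{L}}\bigl((F_j)_{\mathbb{E}_k\leftarrow F_k}\bigr)=L_{\mathcal{L}}(F_j).
\]
Combining this with $L_j=L_{\mathcal{L}}(F_j)$, which holds because $\mathcal{L}$ is a solution of the equation $(\mathbb{E}_j=F_j)$, I get $L_j=L_{\mathcal{L}}\bigl((F_j)_{\mathbb{E}_k\leftarrow F_k}\bigr)$; by the definition of a solution of a single equation, this is exactly the assertion that $\mathcal{L}$ is a solution of $\mathbb{E}_j=(F_j)_{\mathbb{E}_k\leftarrow F_k}$.

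There is no genuine obstacle here; the only point to handle with care is to verify that the hypothesis of Lemma~\ref{lem sub cons lang} is actually available, i.e.\ that the $k$-th component of $\mathcal{L}$ coincides with the $\mathcal{L}$-language of the replacement expression $F_k$ — which is precisely the content of ``$\mathcal{L}$ is a solution of $(\mathbb{E}_k=F_k)$''. I would also remark that the restriction $j\neq k$ is not used in the argument (the displayed language identity is valid for $j=k$ as well): it is retained in the statement only to match the definition of the substituted system $\mathcal{X}^k$, where the $k$-th equation is kept unchanged.
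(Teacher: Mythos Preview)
Your proposal is correct and is exactly the argument the paper has in mind: the proposition is stated as a direct consequence of Lemma~\ref{lem sub cons lang}, and you have spelled out precisely that deduction, using $L_k=L_{\mathcal{L}}(F_k)$ to meet the lemma's hypothesis and then combining $L_{\mathcal{L}}((F_j)_{\mathbb{E}_k\leftarrow F_k})=L_{\mathcal{L}}(F_j)$ with $L_j=L_{\mathcal{L}}(F_j)$. Your observation that the constraint $j\neq k$ is inessential to the argument is also accurate.
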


And following Proposition~\ref{prop sub ok dans syst},
\begin{proposition}\label{prop sub cons sol}
  Let $\mathcal{X}$ be an equation system over $n$ variables.
  Let $k\leq n$ be an integer.
  Then:
  \begin{equation*}
    \mathcal{X}\text{ and } \mathcal{X}^k\text{ are equivalent.}
  \end{equation*} 
\end{proposition}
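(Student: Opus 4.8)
The plan is to establish the two inclusions between the solution sets directly, leaning on the already-proved Proposition~\ref{prop sub ok dans syst} for one direction and on Lemma~\ref{lem sub cons lang} for the other. Recall that $\mathcal{X}$ and $\mathcal{X}^k$ share the equation $\mathbb{E}_k=F_k$ and differ only in that every other equation $\mathbb{E}_j=F_j$ ($j\neq k$) is replaced in $\mathcal{X}^k$ by $\mathbb{E}_j=(F_j)_{\mathbb{E}_k\leftarrow F_k}$.

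First I would show that every solution $\mathcal{L}$ of $\mathcal{X}$ is a solution of $\mathcal{X}^k$. The equation $\mathbb{E}_k=F_k$ lies in both systems, so $\mathcal{L}$ satisfies it as an equation of $\mathcal{X}^k$ for free. For the remaining equations $\mathbb{E}_j=(F_j)_{\mathbb{E}_k\leftarrow F_k}$ with $j\neq k$, Proposition~\ref{prop sub ok dans syst} applies verbatim and gives that $\mathcal{L}$ is a solution of each of them. Hence $\mathcal{L}$ is a solution of $\mathcal{X}^k$.

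For the converse, let $\mathcal{L}=(L_1,\ldots,L_n)$ be a solution of $\mathcal{X}^k$. Since $\mathbb{E}_k=F_k$ is one of its equations, $L_k=L_{\mathcal{L}}(F_k)$; this is precisely the hypothesis needed to invoke Lemma~\ref{lem sub cons lang} with $E=F_j$, $F=F_k$ and the variable $\mathbb{E}_k$, which yields $L_{\mathcal{L}}((F_j)_{\mathbb{E}_k\leftarrow F_k})=L_{\mathcal{L}}(F_j)$ for every $j\neq k$. As $\mathcal{L}$ solves $\mathbb{E}_j=(F_j)_{\mathbb{E}_k\leftarrow F_k}$, we get $L_j=L_{\mathcal{L}}((F_j)_{\mathbb{E}_k\leftarrow F_k})=L_{\mathcal{L}}(F_j)$, i.e.\ $\mathcal{L}$ solves $\mathbb{E}_j=F_j$. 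Together with $L_k=L_{\mathcal{L}}(F_k)$, this shows $\mathcal{L}$ is a solution of $\mathcal{X}$, and the equivalence follows.

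I do not expect a genuine obstacle here: the argument is bookkeeping around the fact that the shared equation $\mathbb{E}_k=F_k$ pins down $L_k$ as $L_{\mathcal{L}}(F_k)$, and under that constraint substituting $F_k$ for $\mathbb{E}_k$ neither creates nor destroys solutions. The only point requiring a little care is to apply Lemma~\ref{lem sub cons lang} in the right direction, namely to rewrite the substituted expression $(F_j)_{\mathbb{E}_k\leftarrow F_k}$ back to $F_j$; this is legitimate exactly because the shared equation guarantees the lemma's hypothesis $L_k=L_{\mathcal{L}}(F_k)$ for any solution of $\mathcal{X}^k$.
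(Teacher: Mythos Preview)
Your proof is correct and follows essentially the same approach as the paper, which simply states that the proposition follows from Proposition~\ref{prop sub ok dans syst} (itself a consequence of Lemma~\ref{lem sub cons lang}). You have merely made explicit the two inclusions and, in particular, spelled out the converse direction by invoking Lemma~\ref{lem sub cons lang} directly; this is exactly the content the paper leaves implicit.
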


\begin{example}\label{ex syst substi}
  Let us consider the system $\mathcal{X}$ of Example~\ref{ex syst eq}. Then:
  \begin{align*}
  \mathcal{X}^4&=
    \begin{cases}
      \mathbb{E}_1 & =f(\mathbb{E}_1,\mathbb{E}_1)+f(\mathbb{E}_2,a+h(\mathbb{E}_3))\\
      \mathbb{E}_2 & =b+f(\mathbb{E}_2,a+h(\mathbb{E}_3))\\
      \mathbb{E}_3 & =a+h(a+h(\mathbb{E}_3))\\
      \mathbb{E}_4 & =a+h(\mathbb{E}_3)
    \end{cases}
  \end{align*}
\end{example}

Let us determine a particular case that can be solved by successive substitutions.
Let $\mathcal{X}=\{(\mathbb{E}_j=F_j)\mid 1\leq j\leq n\}$ be an equation system.
The relation $<_\mathcal{X}$ is defined for any two variables $\mathbb{E}_j$ and $\mathbb{E}_k$ by
  \begin{align*}
    \mathbb{E}_j <_\mathcal{X} \mathbb{E}_k   \Leftrightarrow &\  \mathbb{E}_j\text{ appears in }F_k
  \end{align*}
The relation $\preceq_\mathcal{X}$ is defined as the transitive closure of $<_\mathcal{X}$.
In the case where $\mathbb{E}_k<_\mathcal{X} \mathbb{E}_k$ , the equation $\mathbb{E}_k=F_k$ is said to be \emph{recursive}. 
Let us say that a system is \emph{recursive} if there exists two symbols $\mathbb{E}_j$ and $\mathbb{E}_k$ such that $\mathbb{E}_j\preceq_\mathcal{X} \mathbb{E}_k$ and $\mathbb{E}_k\preceq_\mathcal{X} \mathbb{E}_j$.
If a system is not recursive, it can be solved by successive substitutions.
If $\mathbb{E}_k$ is a variable that does not appear in any right side of an equation of $\mathcal{X}$, we denote by $\mathcal{X}\setminus (\mathbb{E}_k=F_k)$ the system obtained by removing $\mathbb{E}_k=F_k$ of $\mathcal{X}$, and by reindexing any symbol $\mathbb{E}_j$ with $j>k$ into $\mathbb{E}_{j-1}$.

\begin{lemma}\label{lem sub cas eq pas rec}
  Let $\mathcal{X}=\{(\mathbb{E}_j=F_j)\mid 1\leq j\leq n\}$ be an equation system over a graded alphabet $\Sigma$ and over $n$ variables $\{\mathbb{E}_1,\ldots,\mathbb{E}_n\}$.
  Let $\mathbb{E}_k=F_k$ be an equation in $\mathcal{X}$ such that $\mathbb{E}_k=F_k$ is not recursive. 
  Then for any $n-1$-tuple $Z=(L_1,\ldots,L_{k-1},L_{k+1},\ldots,L_n)$, the two following conditions are equivalent:
  \begin{enumerate}
    \item $(L_1,\ldots,L_{k-1},L_{Z}(F_k),L_{k+1},\ldots,L_n)$ is a solution of $\mathcal{X}$
    \item $(L_1,\ldots,L_{k-1},L_{k+1},\ldots,L_n)$ is a solution of $\mathcal{X}^k\setminus\{\mathbb{E}_k=F_k\}$
  \end{enumerate}
\end{lemma}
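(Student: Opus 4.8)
The plan is to prove the two implications simultaneously by passing through the substituted system $\mathcal{X}^k$, invoking Proposition~\ref{prop sub cons sol} for the equivalence of $\mathcal{X}$ and $\mathcal{X}^k$, and then decomposing $\mathcal{X}^k$ into the isolated equation $\mathbb{E}_k=F_k$ together with the remaining $n-1$ equations, which are exactly $\mathcal{X}^k\setminus\{\mathbb{E}_k=F_k\}$.

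First I would record two elementary consequences of the non-recursiveness hypothesis. Since $\mathbb{E}_k=F_k$ is not recursive, $\mathbb{E}_k$ does not occur in $F_k$; consequently, for every $j\neq k$, $\mathbb{E}_k$ does not occur in $(F_j)_{\mathbb{E}_k\leftarrow F_k}$ either, because substituting $F_k$ for $\mathbb{E}_k$ can only reintroduce $\mathbb{E}_k$ if $F_k$ itself contains it. In particular the system $\mathcal{X}^k\setminus\{\mathbb{E}_k=F_k\}$ is well defined, and for any $n$-tuple $\mathcal{M}$ whose restriction to the variables $\{\mathbb{E}_j\mid j\neq k\}$ equals $Z$, one has $L_\mathcal{M}(F_k)=L_Z(F_k)$ and $L_\mathcal{M}((F_j)_{\mathbb{E}_k\leftarrow F_k})=L_Z((F_j)_{\mathbb{E}_k\leftarrow F_k})$ for all $j\neq k$: these languages do not depend on the component assigned to $\mathbb{E}_k$, so the notation $L_Z(\cdot)$ used in the statement is meaningful on them.

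Now write $\mathcal{L}_Z=(L_1,\ldots,L_{k-1},L_Z(F_k),L_{k+1},\ldots,L_n)$ for the tuple appearing in condition~1. Applying the previous remark to $\mathcal{M}=\mathcal{L}_Z$ gives $L_{\mathcal{L}_Z}(F_k)=L_Z(F_k)$, which is precisely the $k$-th component of $\mathcal{L}_Z$; hence $\mathcal{L}_Z$ is a solution of the equation $\mathbb{E}_k=F_k$ unconditionally. For $j\neq k$, $\mathcal{L}_Z$ is a solution of $\mathbb{E}_j=(F_j)_{\mathbb{E}_k\leftarrow F_k}$ iff $L_j=L_{\mathcal{L}_Z}((F_j)_{\mathbb{E}_k\leftarrow F_k})$, and by the remark this right-hand side equals $L_Z((F_j)_{\mathbb{E}_k\leftarrow F_k})$, so $\mathcal{L}_Z$ solves this equation iff $Z$ solves the corresponding equation of $\mathcal{X}^k\setminus\{\mathbb{E}_k=F_k\}$. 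Combining, $\mathcal{L}_Z$ is a solution of $\mathcal{X}^k$ iff $Z$ is a solution of $\mathcal{X}^k\setminus\{\mathbb{E}_k=F_k\}$; and by Proposition~\ref{prop sub cons sol}, $\mathcal{L}_Z$ is a solution of $\mathcal{X}$ iff it is a solution of $\mathcal{X}^k$. This chain of equivalences yields condition~1 $\Leftrightarrow$ condition~2.

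I expect the only delicate points to be bookkeeping: verifying that $\mathbb{E}_k$ genuinely disappears from the substituted right-hand sides (this is exactly where non-recursiveness is used), and keeping the index shift hidden in the $n-1$-tuple notation consistent throughout. An alternative, more hands-on route bypasses Proposition~\ref{prop sub cons sol} and instead applies Lemma~\ref{lem sub cons lang} directly to $\mathcal{L}_Z$: its $k$-th component $L_Z(F_k)$ equals $L_{\mathcal{L}_Z}(F_k)$, so the hypothesis of that lemma is met, giving $L_{\mathcal{L}_Z}((F_j)_{\mathbb{E}_k\leftarrow F_k})=L_{\mathcal{L}_Z}(F_j)$ for each $j\neq k$; the two implications then follow by the same component-by-component comparison, together with the trivial observation that $\mathcal{L}_Z$ satisfies $\mathbb{E}_k=F_k$ by construction.
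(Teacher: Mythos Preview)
Your proof is correct and follows essentially the same approach as the paper: both arguments use Proposition~\ref{prop sub cons sol} to pass from $\mathcal{X}$ to $\mathcal{X}^k$, observe that $\mathcal{L}_Z$ automatically satisfies the non-recursive equation $\mathbb{E}_k=F_k$, and then reduce the remaining equations of $\mathcal{X}^k$ to the $(n-1)$-tuple $Z$ by noting that $\mathbb{E}_k$ no longer occurs in their right-hand sides. Your writeup is in fact more explicit than the paper's about where non-recursiveness is used (to ensure $\mathbb{E}_k$ cannot be reintroduced by the substitution), a point the paper leaves implicit.
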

\begin{proof}
  Let $\mathcal{L}=(L_1,\ldots,L_{k-1},L_{Z}(F_k),L_{k+1},\ldots,L_n)$ and $\mathcal{L}'=(L_1,\ldots,L_{k-1},L_{k+1},\ldots,L_n)$.
  Obviously, $\mathcal{L}$ is a solution for the (non recursive) equation $\mathbb{E}_k=F_k$.
  From Proposition~\ref{prop sub cons sol},
  \begin{align*}
    \mathcal{L}\text{ is a solution of }\mathcal{X} & \Leftrightarrow \mathcal{L}\text{ is a solution of }\mathcal{X}^k
    \intertext{Consequently, for any integer $j\neq k$,}
    \mathcal{L}\text{ is a solution of }\mathbb{E}_j=F_j & \Leftrightarrow \mathcal{L}\text{ is a solution of }\mathbb{E}_j=(F_j)_{\mathbb{E}_k\leftarrow F_k}
    \intertext{Moreover, by definition of $\mathcal{L}'$, for any integer $j\neq k$,}
    \mathcal{L}\text{ is a solution of }\mathbb{E}_j=(F_j)_{\mathbb{E}_k\leftarrow F_k} & \Leftrightarrow \mathcal{L}'\text{ is a solution of }\mathbb{E}_j=(F_j)_{\mathbb{E}_k\leftarrow F_k}\\
    & \Leftrightarrow \mathcal{L}'\text{ is a solution of }\mathcal{X}^k\setminus\{\mathbb{E}_k=F_k\}
  \end{align*}
  \qed
\end{proof}

As a direct consequence of the previous lemma, a non-recursive system can be solved by solving a smaller system, obtained by substitution:
\begin{corollary}\label{cor sub sol}
  Let $\mathcal{X}=\{(\mathbb{E}_j=F_j)\mid 1\leq j\leq n\}$ be an equation system over a graded alphabet $\Sigma$ and over $n$ variables $\{\mathbb{E}_1,\ldots,\mathbb{E}_n\}$.
  Let $\mathbb{E}_k=F_k$ be an equation in $\mathcal{X}$ such that $F_k$ is a rational expression. 
  Then for any $n-1$-tuple $(L_1,\ldots,L_{k-1},L_{k+1},\ldots,L_n)$, the two following conditions are equivalent:
  \begin{enumerate}
    \item $(L_1,\ldots,L_{k-1},L(F_k),L_{k+1},\ldots,L_n)$ is a solution of $\mathcal{X}$
    \item $(L_1,\ldots,L_{k-1},L_{k+1},\ldots,L_n)$ is a solution of $\mathcal{X}^k\setminus\{\mathbb{E}_k=F_k\}$
  \end{enumerate}
\end{corollary}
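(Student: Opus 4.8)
The plan is to derive this directly from Lemma~\ref{lem sub cas eq pas rec}, the point being that the hypothesis ``$F_k$ is a rational expression'' (hence, following the conventions of Section~\ref{pre}, a rational expression without variables) is exactly what is needed to specialize that lemma. First I would observe that since $F_k$ contains no variable, in particular $\mathbb{E}_k$ does not appear in $F_k$; hence $\mathbb{E}_k \not<_\mathcal{X} \mathbb{E}_k$, so the equation $\mathbb{E}_k=F_k$ is not recursive, and Lemma~\ref{lem sub cas eq pas rec} is applicable to the index $k$.

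Next I would check that for every $(n-1)$-tuple $Z$ of tree languages over $\Sigma$ we have $L_Z(F_k)=L(F_k)$. This is a routine induction on the structure of $F_k$: in the definition of the $\mathcal{L}$-language, the context tuple is used only in the clause $L_Z(\mathbb{E}_j)=L_j$, and since $F_k$ has no variable this clause is never invoked, so all the remaining clauses reproduce verbatim the inductive definition of $L(\cdot)$. Consequently, taking $Z=(L_1,\ldots,L_{k-1},L_{k+1},\ldots,L_n)$, the tuple $(L_1,\ldots,L_{k-1},L_Z(F_k),L_{k+1},\ldots,L_n)$ occurring in Lemma~\ref{lem sub cas eq pas rec} is precisely the tuple $(L_1,\ldots,L_{k-1},L(F_k),L_{k+1},\ldots,L_n)$ of the corollary.

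Finally I would invoke Lemma~\ref{lem sub cas eq pas rec} with this choice of $Z$: its item~(1) then reads exactly as item~(1) of the corollary, and its item~(2) is literally item~(2) of the corollary (the system $\mathcal{X}^k\setminus\{\mathbb{E}_k=F_k\}$ being well defined because, $F_k$ having no variable, $\mathbb{E}_k$ occurs in no right-hand side of $\mathcal{X}^k$ other than $\mathbb{E}_k=F_k$ itself), so the two equivalences coincide and the proof is complete. There is no genuine difficulty here; the only care required is the bookkeeping between the two slightly different formulations, namely translating ``rational expression without variables'' simultaneously into ``not recursive'' (so the lemma applies) and into ``$L_Z(F_k)$ does not depend on $Z$'' (so the lemma's conclusion collapses to the corollary's).
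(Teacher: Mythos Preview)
Your proposal is correct and follows exactly the paper's approach: the corollary is stated as a direct consequence of Lemma~\ref{lem sub cas eq pas rec}, and your argument spells out precisely the two observations needed for that specialization (non-recursiveness of $\mathbb{E}_k=F_k$ and independence of $L_Z(F_k)$ from $Z$).
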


Moreover, such a system admits a unique solution:
\begin{proposition}
  Let $\mathcal{X}=\{(\mathbb{E}_j=F_j)\mid 1\leq j\leq n\}$ be an equation system that is not recursive over a graded alphabet $\Sigma$ and over variables $\{\mathbb{E}_1,\ldots,\mathbb{E}_n\}$.
  Then
  \begin{equation*}
    \mathcal{X}\text{ admits a unique solution}.
  \end{equation*}
\end{proposition}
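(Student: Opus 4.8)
The plan is to argue by induction on the number $n$ of variables of $\mathcal{X}$. For $n=1$, non-recursiveness forces $\mathbb{E}_1$ not to occur in $F_1$, so $F_1$ is a variable-free rational expression and Proposition~\ref{prop pas de var sol un} already provides the unique solution $(L(F_1))$. Before the inductive step it is worth recording that in a non-recursive system \emph{every} equation is non-recursive: if some $\mathbb{E}_k=F_k$ were recursive, then $\mathbb{E}_k<_\mathcal{X}\mathbb{E}_k$, hence $\mathbb{E}_k\preceq_\mathcal{X}\mathbb{E}_k$, contradicting non-recursiveness (take $\mathbb{E}_j=\mathbb{E}_k$ in the definition).

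For the inductive step, fix the index $k=n$. Since $\mathbb{E}_n=F_n$ is non-recursive, $\mathbb{E}_n$ does not occur in $F_n$, hence $\mathbb{E}_n$ occurs in no right-hand side of $\mathcal{X}^n$; thus $\mathcal{X}^n\setminus\{\mathbb{E}_n=F_n\}$ is well defined and is a system over the $n-1$ variables $\{\mathbb{E}_1,\ldots,\mathbb{E}_{n-1}\}$ (no reindexing being needed, since $k=n$). I would then check that this smaller system is again non-recursive, so that the induction hypothesis applies and yields a \emph{unique} $(n-1)$-tuple $Z=(L_1,\ldots,L_{n-1})$ solving it.

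It then remains to transfer this to $\mathcal{X}$ through Lemma~\ref{lem sub cas eq pas rec}. Applied to $Z$, condition (2) of that lemma holds, hence so does condition (1): $(L_1,\ldots,L_{n-1},L_Z(F_n))$ is a solution of $\mathcal{X}$, which settles existence. For uniqueness, let $\mathcal{M}=(M_1,\ldots,M_n)$ be any solution of $\mathcal{X}$. Since $\mathcal{M}$ solves the non-recursive equation $\mathbb{E}_n=F_n$ and $\mathbb{E}_n$ does not occur in $F_n$, we get $M_n=L_{Z'}(F_n)$ with $Z'=(M_1,\ldots,M_{n-1})$; thus $\mathcal{M}$ is exactly $Z'$ with $L_{Z'}(F_n)$ appended, so condition (1) of Lemma~\ref{lem sub cas eq pas rec} holds for $Z'$ and therefore condition (2): $Z'$ solves $\mathcal{X}^n\setminus\{\mathbb{E}_n=F_n\}$. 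By the induction hypothesis $Z'=Z$, so $\mathcal{M}$ is completely determined.

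The main obstacle is the step left open above, namely that substitution preserves non-recursiveness: $\mathcal{X}^n\setminus\{\mathbb{E}_n=F_n\}$ is non-recursive when $\mathcal{X}$ is. The key observation is that any occurrence of a variable $\mathbb{E}_i$ in $(F_j)_{\mathbb{E}_n\leftarrow F_n}$ comes either from an occurrence of $\mathbb{E}_i$ in $F_j$ (so $\mathbb{E}_i<_\mathcal{X}\mathbb{E}_j$) or from an occurrence of $\mathbb{E}_i$ in $F_n$ together with an occurrence of $\mathbb{E}_n$ in $F_j$ (so $\mathbb{E}_i<_\mathcal{X}\mathbb{E}_n<_\mathcal{X}\mathbb{E}_j$); in both cases $\mathbb{E}_i\preceq_\mathcal{X}\mathbb{E}_j$. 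Hence the relation $<$ of $\mathcal{X}^n$ is contained in $\preceq_\mathcal{X}$, and since $\preceq_\mathcal{X}$ is transitive, passing to transitive closures gives $\preceq_{\mathcal{X}^n}\,\subseteq\,\preceq_\mathcal{X}$. A cycle $\mathbb{E}_i\preceq_{\mathcal{X}^n}\mathbb{E}_j\preceq_{\mathcal{X}^n}\mathbb{E}_i$ would then be a cycle for $\preceq_\mathcal{X}$, which is impossible; so $\mathcal{X}^n$ is non-recursive, and removing $\mathbb{E}_n=F_n$ only restricts this relation, whence $\mathcal{X}^n\setminus\{\mathbb{E}_n=F_n\}$ is non-recursive as well.
\qed
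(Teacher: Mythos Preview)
Your proof is correct and follows the same inductive skeleton as the paper, but with one notable difference in the choice of equation to eliminate. The paper observes that since $\mathcal{X}$ is non-recursive, the relation $\preceq_\mathcal{X}$ is a strict partial order on a finite set, so there exists some $\mathbb{E}_k$ that is minimal, i.e.\ $F_k$ contains \emph{no} variable at all. It then eliminates this $\mathbb{E}_k$ via Corollary~\ref{cor sub sol} (the variable-free specialization of Lemma~\ref{lem sub cas eq pas rec}). With that choice, substituting $F_k$ into the other equations can only delete variable occurrences, never create new ones, so non-recursiveness of the reduced system is immediate and the paper does not even pause on it.

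You instead eliminate an arbitrary equation (the last one), relying only on the fact that $\mathbb{E}_n$ does not occur in $F_n$. This is perfectly valid but forces you to do the extra work you carry out in your final paragraph: showing that $<_{\mathcal{X}^n}\subseteq\preceq_\mathcal{X}$, hence $\preceq_{\mathcal{X}^n}\subseteq\preceq_\mathcal{X}$, so no cycle can appear after substitution. Your argument there is clean and correct. The trade-off is that the paper's choice of a minimal $\mathbb{E}_k$ buys a shorter proof, while your approach is more flexible (any equation can be eliminated) and makes the preservation of non-recursiveness fully explicit---something the paper's proof leaves tacit.
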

\begin{proof}
  By recurrence over the cardinal of $\mathcal{X}$.
  \begin{enumerate}
    \item $\mathcal{X}=\{\mathbb{E}_1=F_1\}$, then $F_1$ is a rational expression over $\Sigma$ (with no variable) and therefore $L(F_1)$ is the unique solution of $\mathcal{X}$.
    \item Since $\mathcal{X}$ is not recursive, there exists an equation $\mathbb{E}_k=F_k$ with $F_k$ a rational expression over $\Sigma$ (with no variable).
    Therefore, according to Corollary~\ref{cor sub sol}, a tuple $(L_1,\ldots,L_{k-1},L(F_k),L_{k+1},\ldots,L_n)$ is a solution of $\mathcal{X}$ if and only if $(L_1,\ldots,L_{k-1},L_{k+1},\ldots,L_n)$ is a solution of $\mathcal{X}^k\setminus\{\mathbb{E}_k=F_k\}$. 
    By recurrence hypothesis, since $\mathcal{X}^k\setminus\{\mathbb{E}_k=F_k\}$ is not recursive, it admits a unique solution $(L_1,\ldots,L_{k-1},L_{k+1},\ldots,L_n)$.
    Thus $(L_1,\ldots,L_{k-1},L(F_k),L_{k+1},\ldots,L_n)$ is a solution of $\mathcal{X}$.
    Finally, since for any $L_k\neq L(F_k)$,the tuple $(L_1,\ldots,L_{k-1},L_k,L_{k+1},\ldots,L_n)$ is not a solution for $\mathbb{E}_k=F_k$, $(L_1,\ldots,L_{k-1},L(F_k),L_{k+1},\ldots,L_n)$ is the unique solution of $\mathcal{X}$.
  \end{enumerate}
  \qed
\end{proof}

\begin{example}
  Let us define the equation system $\mathcal{Y}$ as follows:
  \begin{align*}
  \mathcal{Y}&=
    \begin{cases}
      \mathbb{E}_1 & =f(\mathbb{E}_2,\mathbb{E}_3)+f(\mathbb{E}_2,\mathbb{E}_3)\\
      \mathbb{E}_2 & =b+f(\mathbb{E}_4,\mathbb{E}_4)\\
      \mathbb{E}_3 & =a+h(\mathbb{E}_4)\\
      \mathbb{E}_4 & =a+(f(a,b))^{*_b}\cdot_b a
    \end{cases}
  \end{align*}
  Then
  \begin{align*}
	  \mathcal{Y}^4&=
	    \begin{cases}
	      \mathbb{E}_1 & =f(\mathbb{E}_2,\mathbb{E}_3)+f(\mathbb{E}_2,\mathbb{E}_3)\\
	      \mathbb{E}_2 & =b+f(a+(f(a,b))^{*_b}\cdot_b a,a+(f(a,b))^{*_b}\cdot_b a)\\
	      \mathbb{E}_3 & =a+h(a+(f(a,b))^{*_b}\cdot_b a)\\
	      \mathbb{E}_4 & =a+(f(a,b))^{*_b}\cdot_b a
	    \end{cases}\\
	  (\mathcal{Y}^4)^3&=
	    \begin{cases}
	      \mathbb{E}_1 & =f(\mathbb{E}_2,a+h(a+(f(a,b))^{*_b}\cdot_b a))+f(\mathbb{E}_2,a+h(a+(f(a,b))^{*_b}\cdot_b a))\\
	      \mathbb{E}_2 & =b+f(a+(f(a,b))^{*_b}\cdot_b a,a+(f(a,b))^{*_b}\cdot_b a)\\
	      \mathbb{E}_3 & =a+h(a+(f(a,b))^{*_b}\cdot_b a)\\
	      \mathbb{E}_4 & =a+(f(a,b))^{*_b}\cdot_b a
	    \end{cases}\\
	  ((\mathcal{Y}^4)^3)^2&=
	    \begin{cases}
	      \mathbb{E}_1 & =f(b+f(a+(f(a,b))^{*_b}\cdot_b a,a+(f(a,b))^{*_b}\cdot_b a),a+h(a+(f(a,b))^{*_b}\cdot_b a))\\
	      & \quad +f(b+f(a+(f(a,b))^{*_b}\cdot_b a,a+(f(a,b))^{*_b}\cdot_b a),a+h(a+(f(a,b))^{*_b}\cdot_b a))\\
	      \mathbb{E}_2 & =b+f(a+(f(a,b))^{*_b}\cdot_b a,a+(f(a,b))^{*_b}\cdot_b a)\\
	      \mathbb{E}_3 & =a+h(a+(f(a,b))^{*_b}\cdot_b a)\\
	      \mathbb{E}_4 & =a+(f(a,b))^{*_b}\cdot_b a
	    \end{cases}
  \end{align*}
\end{example}

\section{Arden's Lemma for Trees and Recursive Systems}\label{ar}

Arden's Lemma~\cite{Ard61} is a fundamental result in automaton theory. 
It gives a solution of the recursive language equation $X=A\cdot X \cup B$ where $X$ is an unknown language.
It can be applied to compute a rational expression from an automaton and therefore prove the second way of Kleene theorem for strings.
Following the same steps as in string case, we generalize this lemma to trees.

\begin{proposition}\label{prop arden}
  Let $A$ and $B$ be two tree languages over a graded alphabet $\Sigma$.
  Then $A^{*_c}\cdot_c B$ is the smallest language in the family $\mathcal{F}$ of languages $L$ over $\Sigma$ satisfying $L=A\cdot_c L \cup B$.
  Furthermore, if $c \notin A$, then $\mathcal{F}=\{A^{*_c}\cdot_c B\}$.
\end{proposition}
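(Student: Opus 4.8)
The plan is to mimic the classical proof of Arden's Lemma, split into three tasks: (i) show $A^{*_c}\cdot_c B$ belongs to $\mathcal{F}$, i.e. it is a fixed point of the map $L\mapsto A\cdot_c L\cup B$; (ii) show it is contained in every member of $\mathcal{F}$, hence smallest; (iii) show that when $c\notin A$ the fixed point is unique. Throughout I will lean on the algebraic laws already established: distributivity of $\cdot_c$ over $\cup$ (Lemma~\ref{lem distrib prod sum}), associativity of $\cdot_c$ (Corollary~\ref{cor cdotc assoc}), monotonicity (Corollary~\ref{cor cdotc comp incl}), and the defining equations \eqref{eq def iter prod 1}--\eqref{eq def iter prod 2} of the iterated product together with $A^{*_c}=\bigcup_{n\geq0}A^{n,c}$.

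For (i): starting from $A\cdot_c(A^{*_c}\cdot_c B)\cup B$, use associativity to rewrite it as $(A\cdot_c A^{*_c})\cdot_c B\cup B$; then $B=\{c\}\cdot_c B=A^{0,c}\cdot_c B$, so by distributivity this equals $((A\cdot_c A^{*_c})\cup A^{0,c})\cdot_c B$; and one checks $(A\cdot_c A^{*_c})\cup\{c\}=A^{*_c}$ by distributing $A\cdot_c(\bigcup_n A^{n,c})=\bigcup_n A\cdot_c A^{n,c}$ and matching against \eqref{eq def iter prod 2} — each $A^{n+1,c}$ appears, and $A^{0,c}=\{c\}$ supplies the remaining term. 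Hence $A\cdot_c(A^{*_c}\cdot_c B)\cup B=A^{*_c}\cdot_c B$. For (ii): let $L\in\mathcal{F}$, so $L=A\cdot_c L\cup B$. Prove by induction on $n$ that $A^{n,c}\cdot_c B\subseteq L$: the base case $n=0$ gives $\{c\}\cdot_c B=B\subseteq L$; for the step, $A^{n+1,c}\cdot_c B=(A^{n,c}\cup A\cdot_c A^{n,c})\cdot_c B=A^{n,c}\cdot_c B\cup (A\cdot_c A^{n,c})\cdot_c B$; the first summand is in $L$ by hypothesis, and the second, by associativity, is $A\cdot_c(A^{n,c}\cdot_c B)\subseteq A\cdot_c L\subseteq L$ using the induction hypothesis with Corollary~\ref{cor cdotc comp incl}. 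Taking the union over $n$ gives $A^{*_c}\cdot_c B\subseteq L$.

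For (iii): assume $c\notin A$ and let $L\in\mathcal{F}$; it suffices to show $L\subseteq A^{*_c}\cdot_c B$, since the reverse inclusion is (ii). The natural route is induction on the height of a tree $t\in L$. Unfolding $L=A\cdot_c L\cup B$, either $t\in B\subseteq A^{*_c}\cdot_c B$, or $t=u\cdot_c v$ with $u\in A$ and $v\in L$. The crucial point is that, because $c\notin A$, the tree $u$ must contain at least one occurrence of $c$ for the substitution to "do something", and $v$ is grafted strictly below the root structure of $u$; one argues that every subtree of $L$ substituted in this way has strictly smaller height than $t$, so the induction hypothesis gives $v\in A^{*_c}\cdot_c B$, and then $t=u\cdot_c v\in A\cdot_c(A^{*_c}\cdot_c B)\subseteq A^{*_c}\cdot_c B$ by (i). I expect this last step — making precise that $c\notin A$ forces a genuine height (or size) decrease, so that the induction is well-founded — to be the main obstacle, since in the tree setting $u\cdot_c v$ can equal $u$ when $c$ does not occur in $u$, which is exactly the degenerate case excluded by the hypothesis; care is needed to handle the possibility that $u$ has several $c$-leaves and that $v$ itself may be small.
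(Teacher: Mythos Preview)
Your plan matches the paper's proof almost exactly: part~(i) is the same computation, part~(ii) is the same induction on $n$, and part~(iii) is the height-based argument the paper also uses (phrased there as a minimal counterexample in $Y\setminus Z$).

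There is, however, a genuine gap in your reading of the hypothesis in~(iii). The assumption $c\notin A$ says only that the single-node tree $c$ is not an \emph{element} of $A$; it does \emph{not} say that trees of $A$ contain no $c$-leaf. So the situation you describe as ``exactly the degenerate case excluded by the hypothesis'' --- namely $u\in A$ with no occurrence of $c$, so that $u\cdot_c L=\{u\}$ and hence $t=u$ --- is \emph{not} ruled out and must be handled separately. The fix is easy and is precisely what the paper does: if $c$ does not occur in $u$ then $t=u\in A\subset A^{*_c}$, and since $u$ has no $c$-leaf we have $u\in u\cdot_c B\subset A^{*_c}\cdot_c B$ directly, with no appeal to the induction hypothesis. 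What $c\notin A$ actually buys you is the remaining case: if $c$ \emph{does} occur in $u$ then, because $u\neq c$, every $c$-leaf of $u$ sits at depth~$\geq 1$, so every tree from $L$ grafted there is a proper subtree of $t$ and has strictly smaller height; your induction hypothesis (applied to each grafted tree --- as you note, there may be several, not a single $v$) then places them all in $A^{*_c}\cdot_c B$, whence $t\in u\cdot_c(A^{*_c}\cdot_c B)\subset A\cdot_c(A^{*_c}\cdot_c B)\subset A^{*_c}\cdot_c B$ by~(i).
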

\begin{proof}
  Let us set $Z=A^{*_c}\cdot_c B$.
  \begin{enumerate}
	  \item Obviously, $Z$ belongs to $\mathcal{F}$:
	  \begin{align*}
	    A \cdot_c (A^{*_c}\cdot_c B)\cup B 
	      & = (A \cdot_c A^{*_c})\cdot_c B\cup B & \text{from Corollary}~\ref{cor cdotc assoc}\\
	      & = (A \cdot_c A^{*_c})\cdot_c B \cup \{c\} \cdot_c B\\
	      & = ((A \cdot_c A^{*_c}) \cup \{c\} ) \cdot_c B\\
	      & = A^{*_c}\cdot_c B
	  \end{align*}
	  
	  \item Let us now show that if $C$ belongs to $\mathcal{F}$, then $Z\subset C$.
	  To do so, let us show that for any integer $n\geq 0$, $A^{n,c}\cdot_c B\subset C$.
	  Since $C$ belongs to $\mathcal{F}$, then $C=A\cdot_c C \cup B$.
	  Therefore $A^{0,c}\cdot_c B=B\subset C$ and $A\cdot_c C \subset C$. 
	  Suppose that $A^{n,c}\cdot_c B  \subset C$ for some integer $n\geq 0$.
	  Therefore, from Corollary~\ref{cor cdotc comp incl}, $A\cdot_c (A^{n,c}\cdot_c B)  \subset A\cdot_c (C)$ and from Corollary~\ref{cor cdotc assoc}, $A^{n+1,c}\cdot_c B  \subset A\cdot_c C \subset C$.
	  Consequently, since for any integer $n$, $A^{n,c}\cdot_c B\subset C$, it holds that $Z=A^{*_c}\cdot_c B \subset C$.	  
	  \item Finally, let us show that if $c\notin A$, then any language $Y$ in $\mathcal{F}$ satisfies $Y\subset Z$, implying that $\mathcal{F}=\{Z\}$.  
	  Let $Y \neq Z $ satisfying $Y=A\cdot_c Y\cup B$. 
	  Suppose that $Y\not\subset Z$.
	  Let $t$ be a tree in $Y\setminus Z$ such that $\mathrm{Height}(y)$ is minimal.
	  Obviously, since $B\subset Z$, $t$ is not in $B$.
	  Consequently, $t$ belongs to $A\cdot_c Y$ and therefore $t=t_1\cdot_c t_2$ with $t_1\in A$ and $t_2\in Y$.
	  Since $c\notin A$, $t_1\neq c$.
	  Furthermore, if $c$ does not appear in $t_1$, then $t=t_1\in A$ and consequently, $t\in A^{*_c }\cdot_c B=Z$, contradicting the fact that $t\notin Z$.
	  Therefore $c$ appears in $t_1$ and then $\mathrm{Height}(t_2)<\mathrm{Height}(t)$, contradicting the minimality of the height of $t$.
	  As a direct consequence, any language $Y$ in $\mathcal{F}$ satisfies $Y\subset Z$.
	  Following previous point, since $Z\subset Y$, it holds that $Y=Z$.
	\end{enumerate}
  \qed
\end{proof}

By successive substitutions, any recursive system can be transformed into another equivalent system such that there exists a symbol $\mathbb{E}_j$ satisfying $\mathbb{E}_j<_\mathcal{X} \mathbb{E}_j$.
Let us enlighten a specific case where recursive equations can be solved.

For an integer $k$, the $k$-\emph{split} of an expression $F$ over $(\Sigma,\{\mathbb{E}_1,\ldots,\mathbb{E}_n\})$ is the couple $k\mathrm{-split}(F)$ inductively defined by:
\begin{align*}
  k\mathrm{-split}(F) & =
  \begin{cases}
    (E_1'+E'_2,E''_1+E''_2) &\text{ if }F=E_1+E_2 \\
    & \quad \wedge k\mathrm{-split}(E_1)=(E'_1,E''_1) \wedge k\mathrm{-split}(E_2)=(E'_2,E''_2),\\
    (F,0) & \text{ otherwise if }\mathbb{E}_k\text{ appears in }F,\\
    (0,F) & \text{ otherwise.}
  \end{cases}
\end{align*}
Obviously, if $k\mathrm{-split}(F)=(F',F'')$, $F\sim F'+F''$.
This tuple can be used to factorize a recursive equation in order to apply Arden's Lemma.
Indeed, as a direct consequence of Proposition~\ref{prop cas sub var par lettre ok},
\begin{proposition}\label{prop sub par split ok}
  Let $\mathcal{X}=\{(\mathbb{E}_j=F_j)\mid 1\leq j\leq n\}$ be an equation system.
  Let $a$ be a symbol not in $\Sigma$.
  Let $1\leq k\leq n$ be an integer.
  Let $\Gamma\subset\Sigma$ be the subset defined by $\Gamma=\{c\in\Sigma_0\mid \{\cdot_c,^{*_c}\}\cap\mathrm{op}(F_k)\neq\emptyset\}$.
  Let $\mathcal{L}$ be a $n$-tuple of tree languages over the alphabet $\Sigma\setminus\Gamma$.
  Let $k\mathrm{-split}(F)=(F'_k,F''_k)$.
  Then the two following conditions are equivalent:
  \begin{enumerate}
    \item $\mathcal{L}$ is a solution for $\mathbb{E}_k=F_k$,
    \item $\mathcal{L}$ is a solution for $\mathbb{E}_k=(F'_k)_{\mathbb{E}_k\leftarrow a}\cdot_a \mathbb{E}_k+F''_k$.
  \end{enumerate}
\end{proposition}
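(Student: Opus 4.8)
The plan is to obtain the statement as an almost immediate consequence of Proposition~\ref{prop cas sub var par lettre ok}, the only genuine work being a comparison of the restricted alphabets involved. Write $k\mathrm{-split}(F_k)=(F'_k,F''_k)$ and recall, from the remark following the definition of the $k$-split, that $F_k\sim F'_k+F''_k$, hence $F_k\sim_{\Sigma\setminus\Gamma}F'_k+F''_k$ by Equation~\eqref{eq impl sim alpha}. Since $\Gamma$-equivalence is compatible with the sum operator --- two $\Gamma$-equivalent expressions remain $\Gamma$-equivalent after being placed as a summand of a fixed expression, directly from the inductive definition of $L_\mathcal{L}$ --- and since $\sim_{\Sigma\setminus\Gamma}$ is clearly transitive, it will suffice to prove $F'_k\sim_{\Sigma\setminus\Gamma}(F'_k)_{\mathbb{E}_k\leftarrow a}\cdot_a\mathbb{E}_k$ and then transfer the resulting $\Sigma\setminus\Gamma$-equivalence of expressions to an equality of denoted languages under the given tuple $\mathcal{L}$.

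For the equivalence of expressions I would apply Proposition~\ref{prop cas sub var par lettre ok} to $E=F'_k$, the variable $x=\mathbb{E}_k$ and the same fresh symbol $a\notin\Sigma$; it yields $F'_k\sim_{\Sigma\setminus\Gamma'_k}(F'_k)_{\mathbb{E}_k\leftarrow a}\cdot_a\mathbb{E}_k$, where $\Gamma'_k=\{c\in\Sigma_0\mid\{\cdot_c,^{*_c}\}\cap\mathrm{op}(F'_k)\neq\emptyset\}$. The one fact to verify here, by a routine induction on the definition of $k\mathrm{-split}$, is that $F'_k$ is obtained from subexpressions of $F_k$ by joining them with the sum operator (each maximal non-sum subexpression of $F_k$ containing $\mathbb{E}_k$ is kept, each other one is replaced by $0$); consequently $\mathrm{op}(F'_k)\subseteq\mathrm{op}(F_k)\cup\{+\}$, and since $+\notin\{\cdot_c,^{*_c}\}$ this gives $\Gamma'_k\subseteq\Gamma$. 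Hence $\Sigma\setminus\Gamma\subseteq\Sigma\setminus\Gamma'_k$, so every tuple of tree languages over $\Sigma\setminus\Gamma$ is in particular a tuple over $\Sigma\setminus\Gamma'_k$, and the equivalence above implies the weaker $F'_k\sim_{\Sigma\setminus\Gamma}(F'_k)_{\mathbb{E}_k\leftarrow a}\cdot_a\mathbb{E}_k$.

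Combining these facts yields $F_k\sim_{\Sigma\setminus\Gamma}(F'_k)_{\mathbb{E}_k\leftarrow a}\cdot_a\mathbb{E}_k+F''_k$. Since by hypothesis $\mathcal{L}=(L_1,\ldots,L_n)$ is a tuple of tree languages over $\Sigma\setminus\Gamma$, this equivalence specialises to $L_\mathcal{L}(F_k)=L_\mathcal{L}((F'_k)_{\mathbb{E}_k\leftarrow a}\cdot_a\mathbb{E}_k+F''_k)$; and, unfolding the definition of a solution of an equation, both conditions of the statement amount exactly to $L_k$ being equal to this common language, so they are equivalent. The only obstacle worth mentioning is the alphabet bookkeeping $\Gamma'_k\subseteq\Gamma$ together with the (easy) observation that $\Gamma$-equivalence passes through the sum; everything else is a direct invocation of Proposition~\ref{prop cas sub var par lettre ok}.
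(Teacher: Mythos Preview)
Your argument is correct and follows exactly the route the paper intends: the paper states this proposition as ``a direct consequence of Proposition~\ref{prop cas sub var par lettre ok}'' and gives no further proof, while you have carefully spelled out the details (the equivalence $F_k\sim F'_k+F''_k$, the alphabet inclusion $\Gamma'_k\subseteq\Gamma$ coming from $\mathrm{op}(F'_k)\subseteq\mathrm{op}(F_k)\cup\{+\}$, and the compatibility of $\sim_{\Sigma\setminus\Gamma}$ with sums). There is nothing to add.
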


Once an equation factorized, the Arden's Lemma can be applied by \emph{contraction}:
\begin{definition}
  Let $\mathcal{X}=\{(\mathbb{E}_j=F_j)\mid 1\leq j\leq n\}$ be an equation system.
  Let $1\leq k\leq n$ be an integer such that $\mathbb{E}_k=F'_k\cdot_c \mathbb{E}_k+F''_k$.
  The \emph{contraction of} $(\mathbb{E}_k=F_k)$ in $\mathcal{X}$ is the system $\mathcal{X}_{k}=\{\mathbb{E}_k=(F'_k)^{*_c}\cdot_c F''_k)\}\cup\{\mathbb{E}_j=F_j\mid j\neq k\wedge 1\leq j\leq n\}$.
\end{definition}

Following Proposition~\ref{prop arden}, such a contraction preserves the language:
\begin{proposition}\label{prop sol pour eq rec}
  Let $\mathcal{X}=\{(\mathbb{E}_j=F_j)\mid 1\leq j\leq n\}$ be an equation system.
  Let $1\leq k\leq n$ be an integer such that $\mathbb{E}_k=F'_k\cdot_c \mathbb{E}_k+F''_k$.
  Let $\mathcal{L}=(L_1,\ldots,L_n)$ be a $n$-tuple of tree languages.
  Then the two following conditions are equivalent:
  \begin{enumerate}
    \item $\mathcal{L}$ is a solution of $\mathcal{X}$,
    \item $\mathcal{L}$ is a solution of $\mathcal{X}_k$.
  \end{enumerate} 
  Furthermore, if $c$ is not in $L_\mathcal{L}(F'_k)$ then for any language $L'_k\neq L_k$,
  \begin{equation*}
    (L_1,\ldots,L_{k-1},L'_k,L_{k+1},\ldots,L_n)\text{ is not a solution of }\mathcal{X}.
  \end{equation*} 
\end{proposition}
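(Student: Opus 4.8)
The plan is to localise the claim to a single equation and then to read it off from the tree version of Arden's Lemma (Proposition~\ref{prop arden}).

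First I would note that $\mathcal{X}$ and $\mathcal{X}_k$ share every equation $\mathbb{E}_j=F_j$ with $j\neq k$ and differ only in their $k$-th equation. Consequently $\mathcal{L}$ is a solution of $\mathcal{X}$ if and only if it is a solution of each $\mathbb{E}_j=F_j$ with $j\neq k$ together with $\mathbb{E}_k=F'_k\cdot_c\mathbb{E}_k+F''_k$, and likewise $\mathcal{L}$ is a solution of $\mathcal{X}_k$ if and only if it is a solution of each $\mathbb{E}_j=F_j$ with $j\neq k$ together with $\mathbb{E}_k=(F'_k)^{*_c}\cdot_c F''_k$. Hence it suffices to prove that $\mathcal{L}$ is a solution of $\mathbb{E}_k=F'_k\cdot_c\mathbb{E}_k+F''_k$ if and only if it is a solution of $\mathbb{E}_k=(F'_k)^{*_c}\cdot_c F''_k$.

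Writing $A=L_\mathcal{L}(F'_k)$ and $B=L_\mathcal{L}(F''_k)$ and unfolding the inductive definition of $L_\mathcal{L}$, the first condition becomes $L_k=A\cdot_c L_k\cup B$ and the second becomes $L_k=A^{*_c}\cdot_c B$. One implication is immediate: by Proposition~\ref{prop arden}, $A^{*_c}\cdot_c B$ is a member of the family $\mathcal{F}$ of languages $L$ satisfying $L=A\cdot_c L\cup B$, so $L_k=A^{*_c}\cdot_c B$ entails $L_k=A\cdot_c L_k\cup B$; thus every solution of $\mathcal{X}_k$ is a solution of $\mathcal{X}$.

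For the converse and for the final clause I would appeal to the uniqueness part of Proposition~\ref{prop arden}. By construction of the $k$-split and of the substitution yielding $F'_k$ (Proposition~\ref{prop sub par split ok}), the variable $\mathbb{E}_k$ occurs in neither $F'_k$ nor $F''_k$, so $A$ and $B$ depend only on $L_1,\ldots,L_{k-1},L_{k+1},\ldots,L_n$; in particular the hypothesis $c\notin L_\mathcal{L}(F'_k)$ constrains those components alone. Under that hypothesis Proposition~\ref{prop arden} gives $\mathcal{F}=\{A^{*_c}\cdot_c B\}$, so any $L_k$ with $L_k=A\cdot_c L_k\cup B$ equals $A^{*_c}\cdot_c B$; this yields the converse implication, and since a solution $\mathcal{L}$ of $\mathcal{X}$ then necessarily has $L_k=A^{*_c}\cdot_c B$, replacing $L_k$ by any $L'_k\neq L_k$ breaks the $k$-th equation of $\mathcal{X}$, hence $\mathcal{X}$ itself. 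I expect this converse direction to be the main obstacle: without the assumption $c\notin A$ the equation $L_k=A\cdot_c L_k\cup B$ only forces the inclusion $A^{*_c}\cdot_c B\subseteq L_k$ (for instance, with $A=\{c\}$ and $B=\emptyset$ every language is a solution of the equation while $A^{*_c}\cdot_c B=\emptyset$), so the argument cannot proceed by a direct computation and must be routed through Arden's uniqueness clause, which is exactly why the hypothesis $c\notin L_\mathcal{L}(F'_k)$ is essential.
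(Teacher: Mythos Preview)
Your approach is exactly the paper's: the proposition is presented there as an immediate consequence of Proposition~\ref{prop arden}, and you carry this out by localising to the $k$-th equation and setting $A=L_\mathcal{L}(F'_k)$, $B=L_\mathcal{L}(F''_k)$.

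You have in fact spotted something the paper glosses over. The implication $(2)\Rightarrow(1)$ follows unconditionally from the first part of Proposition~\ref{prop arden}, but the converse $(1)\Rightarrow(2)$ genuinely requires $c\notin A$: your example $A=\{c\}$, $B=\emptyset$ (take $F'_k=c$, $F''_k=0$) shows that $\mathbb{E}_k=c\cdot_c\mathbb{E}_k+0$ is solved by every language while $\mathbb{E}_k=c^{*_c}\cdot_c 0$ forces $L_k=\emptyset$. So the unconditional equivalence, read literally, is too strong; what you prove is the correct version, and it is all that is needed downstream, since in the paper's applications the substitution symbol is always chosen fresh (Proposition~\ref{prop sub par split ok} and Theorem~\ref{thm resol syst clos}).

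One minor point worth making explicit: your argument for the ``furthermore'' clause uses that $A$ and $B$ do not depend on the $k$-th coordinate, i.e.\ that $\mathbb{E}_k$ does not occur in $F'_k$ or $F''_k$. This is the intended situation after a $k$-split, but it is not part of the stated hypotheses; you are right to flag where it enters.
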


\begin{example}
  Let us consider the system $\mathcal{X}_4$ of Example~\ref{ex syst substi}:
  \begin{align*}
  \mathcal{X}^4&=
    \begin{cases}
      \mathbb{E}_1 & =f(\mathbb{E}_1,\mathbb{E}_1)+f(\mathbb{E}_2,a+h(\mathbb{E}_3))\\
      \mathbb{E}_2 & =b+f(\mathbb{E}_2,a+h(\mathbb{E}_3))\\
      \mathbb{E}_3 & =a+h(a+h(\mathbb{E}_3))\\
      \mathbb{E}_4 & =a+h(\mathbb{E}_3)
    \end{cases}
  \end{align*}
  The $2-\mathrm{split}$ of $b+f(\mathbb{E}_2,a+h(\mathbb{E}_3))$ is $f(x_2,a+h(\mathbb{E}_3))\cdot_{x_2}\mathbb{E}_2 +b$, contracted in $f(x_2,a+h(\mathbb{E}_3))^{*_{x_2}}\cdot_{x_2} b$.
\end{example}

However, as it was recalled in Proposition~\ref{prop sub par split ok}, the factorization that precedes a contraction does not necessarily produce an equivalent expression.
Let us now define a sufficient property in order to detect solvable systems.
Obviously, it is related to the symbols that appear in a product or a closure.

The \emph{scope} of an operator is its operands.
An occurrence of a symbol $c$ in $\Sigma_0$ is said to be bounded if it appears in the scope or if it is the symbol of an operator $\cdot_c$ or $^{*_c}$.
An expression (resp. a system $\mathcal{X}$) is said to be \emph{closed} if all of the occurrences of a bounded symbol are bounded.
In this case, the set $\mathrm{free}(\mathcal{X})$ contains the symbols of $\Sigma_0$ that are not bounded.

Let us first show that the closedness is preserved by substitution, factorization and contraction.
\begin{lemma}\label{lem sub pres closed}
  Let $F$ and $F'$ be two closed expressions over $\Sigma,\mathbb{E}$ such that the bounded symbols of $F$ are bounded in $F'$.
  Let $\mathbb{E}_k$ be a variable in $\mathbb{E}$.
  Then:
  \begin{align*}
    F_{\mathbb{E}_k\leftarrow F'}\text{ is closed.}
  \end{align*}
\end{lemma}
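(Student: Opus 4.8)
The plan is to prove the statement by structural induction on $F$. The first point to notice is that the bare induction hypothesis ``$G_{\mathbb{E}_k\leftarrow F'}$ is closed'' does not propagate, since closedness is neither inherited by arbitrary subexpressions nor preserved by $+$ or by $f(\cdot,\ldots,\cdot)$ --- the sum of two closed expressions need not be closed. So I would carry a strengthened invariant: for every subexpression $G$ of $F$, I would keep track of the set $B(G)$ of symbols of $\Sigma_0$ labelling an operator $\cdot_c$ or ${}^{*_c}$ of $G_{\mathbb{E}_k\leftarrow F'}$, together with the fact that every \emph{free} (i.e. non-bounded) occurrence of a symbol in $G_{\mathbb{E}_k\leftarrow F'}$ descends either from a free occurrence already present in $G$ or from a free occurrence inside one of the inserted copies of $F'$. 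Both facts rest on the elementary observation that substituting the leaf $\mathbb{E}_k$ by $F'$ deletes no operator of $G$ and introduces no operator outside those of $F'$, so $\mathrm{op}(G_{\mathbb{E}_k\leftarrow F'})$ lies between $\mathrm{op}(G)$ and $\mathrm{op}(G)\cup\mathrm{op}(F')$, and wraps no new enclosing $\cdot_c$ or ${}^{*_c}$ around an occurrence already belonging to $G$, so that a bounded occurrence of $G$ stays bounded.

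The base cases are immediate: if $F\in\{0,a,\mathbb{E}_j\}$ with $a\in\Sigma_0$ and $j\neq k$ then $F_{\mathbb{E}_k\leftarrow F'}=F$ is closed by hypothesis, and if $F=\mathbb{E}_k$ then $F_{\mathbb{E}_k\leftarrow F'}=F'$ is closed by hypothesis. For the inductive step one uses that substitution distributes over every constructor, so that $(F_1+F_2)_{\mathbb{E}_k\leftarrow F'}$, $(F_1\cdot_c F_2)_{\mathbb{E}_k\leftarrow F'}$, $(F_1^{*_c})_{\mathbb{E}_k\leftarrow F'}$ and $(f(F_1,\ldots,F_n))_{\mathbb{E}_k\leftarrow F'}$ are obtained by applying the corresponding constructor to the substituted immediate subexpressions, and then one re-establishes the invariant. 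The only genuinely new phenomenon at a node $\cdot_c$ or ${}^{*_c}$ is that $c$ enters the set of bounded symbols and that its scope now contains the whole of the two (respectively one) substituted operands, so every occurrence of $c$ contributed by them is automatically bounded; all the other verifications --- for symbols other than $c$, and at $+$ and $f$ nodes --- amount to checking, via the invariant, that no symbol that has become bounded in one of the substituted pieces occurs free in another or free among the occurrences kept from $F$.

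This reconciliation is the heart of the proof and the step I expect to be the main obstacle: one must rule out a symbol being simultaneously bounded (because some part of $F$, or some copy of $F'$, contributes a $\cdot_c$ or ${}^{*_c}$) and free (because another part contributes a naked occurrence of $c$). Closedness of $F$ excludes such a clash among the original occurrences of $F$; closedness of $F'$ excludes it inside a single copy of $F'$; and the hypothesis that the bounded symbols of $F$ are bounded in $F'$ excludes the mixed case where $c$ is bounded in $F$ yet occurs in $F'$, since then $c$ is bounded in $F'$ too and $F'$ being closed forbids a free occurrence there. (The symmetric mixed case --- $c$ bounded in $F'$ but occurring free in $F$ --- is the one for which one also invokes that, in the setting where this lemma is used, $F$ and $F'$ are two right-hand sides of the \emph{same} closed system, so that $\cdot_c$ appearing in $F'$ already forces every occurrence of $c$ in $F$ to be bounded.) Once the invariant survives every inductive case, instantiating it at $G=F$ gives that $F_{\mathbb{E}_k\leftarrow F'}$ is closed.
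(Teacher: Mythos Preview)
Your diagnosis is correct, including the part you flag as the main obstacle. The paper's own proof is a plain structural induction on $F$ that, at the $+$ and $f(\ldots)$ nodes, simply asserts ``by induction hypothesis $G(E_1),\ldots,G(E_n)$ are closed, and therefore so is $G$'', and at a $\cdot_c$ node adds the remark that $c$ is bounded in $G(E_1)$. As you observe, the inference ``closed $+$ closed $\Rightarrow$ closed'' is invalid in general, and the paper never addresses the cross-term case you call the symmetric mixed case. In fact the lemma as stated is false: take $F=c+\mathbb{E}_k$ and $F'=a\cdot_c b$; both are closed, $F$ has no bounded symbol so the hypothesis ``the bounded symbols of $F$ are bounded in $F'$'' is vacuous, yet $F_{\mathbb{E}_k\leftarrow F'}=c+a\cdot_c b$ has $c$ bounded (via $\cdot_c$) and free (the leading leaf), hence is not closed.

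Your strengthened invariant---tracking the provenance of every operator $\cdot_c,{}^{*_c}$ and of every free occurrence in $G_{\mathbb{E}_k\leftarrow F'}$---is the right repair, and your four-case reconciliation is exactly the correct analysis. The missing hypothesis needed to close your case~4 is the symmetric one: the bounded symbols of $F'$ must also be bounded in $F$. With that added your argument goes through cleanly. You are also right that this extra hypothesis holds in both places the paper applies the lemma: for the substitution $\mathcal{X}\mapsto\mathcal{X}^k$, both $F$ and $F'$ are right-hand sides of the same closed system, so their bounded-symbol sets agree; for the factorisation corollary, $F'$ is a fresh constant $a$ with no bounded symbol at all. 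So the downstream results survive; it is only the lemma's stated hypotheses and the paper's short proof that are too weak.
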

\begin{proof}
  By induction over the structure of $F$.
  Let us define for any expression $H$, the expression $G(H)=H_{\mathbb{E}_k\leftarrow F'}$.
  Let us set $G=G(F)$.
  \begin{enumerate}
    \item If $F\in\Sigma_0\cup\{0\}\cup\mathbb{E}\setminus\{\mathbb{E}_k\}$, then $G=F$.
    Therefore $G$ is closed.
    \item If $F=\mathbb{E}_k$, then $G=F'$.
    Therefore $G$ is closed.
    \item If $F=f(E_1,\ldots,E_n)$, then $G=f(G(E_1),\ldots,G(E_n))$.
    By induction hypothesis, $G(E_1)$,$\ldots$, and $G(E_n)$ are closed, and as a consequence so is $G$.
    \item If $F=E_1+E_2$, then $G=G(E_1)+G(E_2)$.
    By induction hypothesis, $G(E_1)$ and $G(E_2)$ are closed, and therefore so is $G$.
    \item If $F=E_1\cdot_c E_2$, then $G=G(E_1)\cdot_c G(E_2)$.
    By induction hypothesis, $G(E_1)$ and $G(E_2)$ are closed.
    Since the bounded symbols of $F$ are bounded in $F'$, $c$ is bounded in $G(E_1)$.
    Consequently, $G$ is closed.
    \item If $F=E_1^{*_c}$, then $G=(G(E_1))^{*_c}$.
    By induction hypothesis, $G(E_1)$ is closed.
    Since the bounded symbols of $F$ are bounded in $F'$, $c$ is bounded in $G(E_1)$.
    Consequently, $G$ is closed.    
  \end{enumerate}
  \qed
\end{proof}

As two direct consequences of Lemma~\ref{lem sub pres closed}:
\begin{corollary}\label{cor sub pres closed}
  Let $\mathcal{X}$ be an equation system over $n$ variables.
  Let $1\leq k\leq n$ be an integer.
  Then:
  \begin{equation*}
    \mathcal{X}^k\text{ is closed.}
  \end{equation*}
\end{corollary}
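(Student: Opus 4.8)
The plan is to derive the statement (under the standing hypothesis of this subsection, namely that $\mathcal{X}$ itself is closed) directly from Lemma~\ref{lem sub pres closed}, applied equation by equation, and then to reconcile the per-equation conclusions into the global closedness condition for a system. Recall that $\mathcal{X}^k$ consists of the equation $\mathbb{E}_k=F_k$ together with the equations $\mathbb{E}_j=(F_j)_{\mathbb{E}_k\leftarrow F_k}$ for $j\neq k$. The first step is to record an elementary consequence of the closedness of $\mathcal{X}$: every $F_j$ is a closed expression. Indeed, if a symbol $c\in\Sigma_0$ has a bounded occurrence inside $F_j$, then $c$ is a bounded symbol of the whole system $\mathcal{X}$, so by closedness of $\mathcal{X}$ every occurrence of $c$ in $\mathcal{X}$, in particular every occurrence of $c$ inside $F_j$, is bounded. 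In particular $F_k$ is closed, which already settles the equation $\mathbb{E}_k=F_k$.

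Next, for each $j\neq k$ I would invoke Lemma~\ref{lem sub pres closed} with $F=F_j$, $F'=F_k$ and the variable $\mathbb{E}_k$. Its hypotheses are that $F_j$ and $F_k$ are closed, which was just established, and that the bounded symbols of $F_j$ are bounded in $F_k$. For the latter, if $c$ has a bounded occurrence in $F_j$ then $c$ is a bounded symbol of $\mathcal{X}$, hence closedness of $\mathcal{X}$ forces every occurrence of $c$ in $F_k$ to be bounded; so the hypothesis holds and $(F_j)_{\mathbb{E}_k\leftarrow F_k}$ is closed.

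It then remains to promote ``each right-hand side of $\mathcal{X}^k$ is closed'' to ``$\mathcal{X}^k$ is closed''. Here I would first observe that the bounded symbols of $\mathcal{X}^k$ are exactly those of $\mathcal{X}$: substitution introduces no new occurrence of an operator $\cdot_c$ or $^{*_c}$ and never pulls a symbol out of the scope of such an operator, and every operator occurrence of $\mathcal{X}^k$ originates either from some $F_j$ or from a substituted copy of $F_k$. Then, for any symbol $c$ that is bounded in $\mathcal{X}^k$ (equivalently in $\mathcal{X}$), every occurrence of $c$ in $F_k$ is bounded by closedness of $\mathcal{X}$, and every occurrence of $c$ in $(F_j)_{\mathbb{E}_k\leftarrow F_k}$ is bounded as well: the occurrences inherited from $F_j$ were bounded in $F_j$ and substitution preserves boundedness, while the occurrences coming from copies of $F_k$ were bounded in $F_k$ and remain inside whatever scope they were placed in. Hence all occurrences of a bounded symbol of $\mathcal{X}^k$ are bounded, i.e. $\mathcal{X}^k$ is closed.

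The main obstacle is precisely this last paragraph. Lemma~\ref{lem sub pres closed} is stated for a single expression, whereas closedness of a system is a global condition coupling all its equations, so one must check that no substitution creates a symbol that is bounded in one equation of $\mathcal{X}^k$ but free in another; this is where the closedness of the original system $\mathcal{X}$ is used a second time, through the observation that substitution only ever propagates boundedness and never destroys it. Everything else is a routine unfolding of the definitions of substitution in a system and of the scope of an operator.
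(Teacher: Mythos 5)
Your proof is correct and takes essentially the same route as the paper, which states this corollary as a direct consequence of Lemma~\ref{lem sub pres closed}, i.e.\ precisely the equation-by-equation application of that lemma that you carry out. Your two additions---making explicit the standing hypothesis that $\mathcal{X}$ itself is closed, and the final reconciliation of per-equation closedness into system-level closedness via the observation that a symbol bounded somewhere in a closed $\mathcal{X}$ is bounded everywhere in it---are exactly the glue the paper leaves implicit, so nothing is missing.
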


\begin{corollary}\label{cor forme fact pre clos}
  Let $F$ be a closed expressions over $\Sigma,\mathbb{E}$.
  Let $\mathbb{E}_k$ be a variable in $\mathbb{E}$.
  Let $k-\mathrm{split}(F_n)=(F',F'')$.
  Let $a$ be a symbol not in $\Sigma$.
  Then:
  \begin{align*}
    (F')_{\mathbb{E}_k\leftarrow a}\cdot_a \mathbb{E}_k + F'' \text{ is closed.}
  \end{align*}
\end{corollary}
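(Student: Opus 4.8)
The plan is to prove closedness of $G:=(F')_{\mathbb{E}_k\leftarrow a}\cdot_a \mathbb{E}_k + F''$ directly from the definition, i.e.\ by checking that for every symbol $c\in\Sigma_0\cup\{a\}$ occurring in $G$, either no occurrence of $c$ in $G$ is bounded or all of them are, the whole point being to reduce this to the closedness of $F$. I would first record an auxiliary fact about the $k$-split: since $k\mathrm{-split}$ only redistributes the maximal subexpressions of $F$ that are not sums into two new sums, padding the missing positions with $0$ (which carries no symbol occurrence), and since $+$ binds no symbol while any $\cdot_c$ or $^{*_c}$ operator whose scope contains an occurrence sitting inside a summand of $F$ must itself lie inside that summand, every maximal non-sum subexpression of the closed expression $F$ is itself closed. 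Hence $F'$ and $F''$ are closed, $F'+F''$ has exactly the symbol occurrences of $F$ with the same boundedness status, and in particular a symbol $c\in\Sigma_0$ is bounded in $F'$, or in $F''$, exactly when it is bounded in $F$.

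Next I would propagate closedness through the two transformations applied to $F'$. The expression $(F')_{\mathbb{E}_k\leftarrow a}$ is closed: this is the content of Lemma~\ref{lem sub pres closed} for the substituting expression $a$, whose only symbol is $a$; as $a\notin\Sigma$, it is never the symbol of, nor inside the scope of, any $\cdot_a$ or $^{*_a}$ operator occurring in $(F')_{\mathbb{E}_k\leftarrow a}$, so $a$ is not a bounded symbol there, while every $c\in\Sigma_0$ keeps exactly the occurrences and boundedness it had in $F'$. Wrapping by $\cdot_a\mathbb{E}_k$ then behaves precisely as the $\cdot_c$ case in the proof of Lemma~\ref{lem sub pres closed}: the only newly bounded symbol is $a$, and its occurrences in $(F')_{\mathbb{E}_k\leftarrow a}\cdot_a\mathbb{E}_k$ are the operator symbol together with the $a$'s introduced by the substitution, all of which lie in the left operand $(F')_{\mathbb{E}_k\leftarrow a}$, i.e.\ in the scope of $\cdot_a$, hence all bounded; the symbols of $\Sigma_0$ are unaffected. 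So $(F')_{\mathbb{E}_k\leftarrow a}\cdot_a\mathbb{E}_k$ is closed, $a$ is one of its bounded symbols, and a symbol $c\in\Sigma_0$ is bounded in it iff $c$ is bounded in $F'$.

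Finally I would combine the two summands through the outer $+$. Since $+$ binds no symbol, the bounded symbols of $G$ are those bounded in $(F')_{\mathbb{E}_k\leftarrow a}\cdot_a\mathbb{E}_k$ together with those bounded in $F''$; the symbol $a$ occurs only in the first summand and was already handled, and for $c\in\Sigma_0$ the previous paragraphs give that $c$ is bounded in $G$ iff $c$ is bounded in $F'$ or in $F''$, i.e.\ iff $c$ is bounded in $F$. As $F$ is closed, every occurrence of such a $c$ in $F$ is bounded, equivalently every occurrence in $F'$ and in $F''$ is bounded, equivalently every occurrence in $G$ is bounded; hence $G$ is closed. The step I expect to be the real work is this last bookkeeping: one must verify that none of the three manipulations — taking the $k$-split, substituting $\mathbb{E}_k\leftarrow a$, and multiplying on the left of $\cdot_a\mathbb{E}_k$ — alters the boundedness status of any $\Sigma_0$-occurrence, and that no fresh occurrence of $a$ escapes the scope of the new operator $\cdot_a$; once this is pinned down, closedness of $G$ is forced by closedness of $F$, the supporting observation being the easy but necessary remark that maximal non-sum subexpressions of a closed expression are closed.
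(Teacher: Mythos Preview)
Your argument is correct and follows the route the paper intends: the corollary is stated there without proof, merely as a direct consequence of Lemma~\ref{lem sub pres closed}, and what you have written is precisely the unpacking of that consequence—closedness of the $k$-split components, preservation under the substitution $\mathbb{E}_k\leftarrow a$ (the lemma applied with the trivially closed expression $a$, whose absence from $\Sigma$ makes the hypothesis on bounded symbols vacuous), and the easy bookkeeping through $\cdot_a$ and $+$. There is no alternative method to compare; you have simply supplied the details the paper omits.
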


The stability of the closedness by contraction is even easier to prove; since it is not an inductive transformation:
\begin{lemma}\label{lem contrac pres close}
  Let $E=F\cdot_c F'+F''$ be a closed expression.
  Then:
  \begin{align*}
    F^{*_c}\cdot_c F'' \text{ is a closed expression.}
  \end{align*}
\end{lemma}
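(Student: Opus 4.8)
The plan is to argue directly on the occurrences of the constant symbols in the two expressions, rather than by structural induction: as the authors remark, contraction is not an inductive transformation, so a positional argument comparing $E=F\cdot_c F'+F''$ with $G=F^{*_c}\cdot_c F''$ is the natural tool. First I would classify the occurrences of constant symbols in $G$: every such occurrence lies inside the subexpression $F$, or inside the subexpression $F''$, or is one of the two occurrences of $c$ serving as the subscript of the outermost product $\cdot_c$ and of the $c$-closure. The two subscript occurrences are bounded by definition, so only the occurrences inside $F$ and inside $F''$ require attention, and these occur, at the same positions, inside $E$ as well.

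The key step is the following comparison. Whether a given occurrence of a symbol $d$ is bounded depends only on the operators whose scope (that is, whose operands) contains it, hence only on the operators lying on the path from the occurrence up to the root of the expression. For an occurrence sitting inside $F$, that path is identical within $F$ in both $E$ and $G$; the only difference is the operators above $F$, which are $\cdot_c$ then $+$ in $E$ and the $c$-closure then $\cdot_c$ in $G$, all of them carrying the subscript $c$. Hence for $d\neq c$ such an occurrence is bounded in $G$ if and only if it is bounded in $E$ (in either case this is decided purely within $F$), and for $d=c$ it is bounded in both, since it lies in the scope of a $\cdot_c$ (respectively a $c$-closure) in each expression. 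For an occurrence inside $F''$ the comparison is even simpler: for $d\neq c$ boundedness is decided within $F''$ and is therefore the same in $E$ and in $G$; and for $d=c$ the occurrence lies in the scope of the outermost $\cdot_c$ of $G$, hence is bounded in $G$.

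With this comparison established, the lemma follows by distinguishing $d=c$ from $d\neq c$. For $d=c$ every occurrence of $c$ in $G$ is bounded (those inside $F$ by the $c$-closure, those inside $F''$ by the outermost $\cdot_c$, the subscripts by definition), so closedness of $G$ relative to $c$ is immediate and uses nothing about $E$. For $d\neq c$, suppose some occurrence of $d$ in $G$ is bounded; it lies inside $F$ or inside $F''$, and by the comparison the corresponding occurrence in $E$ is bounded, so $d$ is a bounded symbol of $E$. Since $E$ is closed, every occurrence of $d$ in $E$, in particular every occurrence inside $F$ and inside $F''$, is bounded; transporting this back through the comparison, every occurrence of $d$ in $G$ is bounded. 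Thus every bounded symbol of $G$ has all of its occurrences bounded, i.e. $G$ is closed.

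I expect the middle paragraph to be the only delicate point: one must be scrupulous that ``scope'' refers to the operands of an operator, so that boundedness is genuinely determined by the ancestor operators inside $F$ or $F''$ together with the handful of operators above them, and one must treat the subscript $c$ of $\cdot_c$ and of the $c$-closure --- a feature peculiar to the tree setting --- uniformly as a bounded occurrence. Once the split $d=c$ versus $d\neq c$ is made, the remainder is bookkeeping. In contrast with Lemma~\ref{lem sub pres closed}, no extra hypothesis relating the bounded symbols of $F$, $F'$ and $F''$ is needed here, since $F'$ does not occur in $G$ at all.
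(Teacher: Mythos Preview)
Your proof is correct and follows essentially the same approach as the paper: a direct occurrence-based comparison between $E$ and $G$, exploiting that contraction is not an inductive transformation. The paper phrases the same argument as a brief proof by contradiction, whereas you give the direct version with an explicit case split $d=c$ versus $d\neq c$; both arguments rest on the observation that the ancestor operators above $F$ and above $F''$ differ only in operators subscripted by $c$.
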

\begin{proof}
  Let $E'=F^{*_c}\cdot_c F''$.
  Suppose that $E'$ is not closed.
  Either there exists an occurrence of $c$ that is not bounded in $F''$, or there exists an operator in $\{\cdot_a,^{*_a}\}$ appearing in $F$ (resp. $F''$) such that an occurrence of $a$ is not bounded in $F''$ (resp. in $F$).
  Contradiction with the closedness of $E$.
  \qed
\end{proof}

\begin{corollary}
  Let $\mathcal{X}=\{(\mathbb{E}_j=F_j)\mid 1\leq j\leq n\}$ be a closed equation system.
  Let $1\leq k\leq n$ be an integer such that $\mathbb{E}_k=F'_k\cdot_c \mathbb{E}_k+F''_k$.
  Then:
  \begin{equation*}
    \mathcal{X}_k\text{ is closed.}
  \end{equation*}
\end{corollary}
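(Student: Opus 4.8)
The plan is to reduce this to its single-equation counterpart, Lemma~\ref{lem contrac pres close}, while paying attention to the fact that closedness of a \emph{system} is a global condition, not merely a conjunction of per-equation conditions. Recall that $\mathcal{X}_k$ is obtained from $\mathcal{X}$ by keeping every equation $\mathbb{E}_j=F_j$ with $j\neq k$ unchanged and replacing the $k$-th equation $\mathbb{E}_k=F_k=F'_k\cdot_c\mathbb{E}_k+F''_k$ by $\mathbb{E}_k=(F'_k)^{*_c}\cdot_c F''_k$. The first step is to observe that $\mathcal{X}$ and $\mathcal{X}_k$ have the same set $B$ of bounded symbols: the contraction introduces only one new operator, the $^{*_c}$, but $c$ already belongs to $B$ since $\cdot_c$ occurs in $F_k$; and it discards no operator, since $F'_k$ and $F''_k$, together with all operators and scopes internal to them, are kept.

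The second step is to fix a symbol $b\in B$ and an arbitrary occurrence of $b$ in $\mathcal{X}_k$ and show that it is bounded. If the occurrence sits inside some $F_j$ with $j\neq k$, it already occurs in the closed system $\mathcal{X}$, where $b$ is bounded; since whether an occurrence is bounded depends only on the operators of the expression carrying it, the occurrence is still bounded in $\mathcal{X}_k$. If the occurrence sits inside $(F'_k)^{*_c}\cdot_c F''_k$, it either is the subscript of the new $\cdot_c$ or $^{*_c}$ (then $b=c$ and it is bounded by definition), or it lies inside $F'_k$ or inside $F''_k$. In the latter situation, if $b=c$ the occurrence is within the scope of the top-level $\cdot_c$ (and of the $^{*_c}$), hence bounded; and if $b\neq c$, being bounded in $(F'_k)^{*_c}\cdot_c F''_k$ is equivalent to being bounded inside $F'_k$, respectively $F''_k$, which is in turn equivalent to being bounded inside $F_k=F'_k\cdot_c\mathbb{E}_k+F''_k$, and this holds because $\mathcal{X}$ is closed and $b\in B$. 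This last chain of equivalences is precisely the argument used in the proof of Lemma~\ref{lem contrac pres close}, stated here positively for an arbitrary bounded symbol instead of by contradiction for $c$.

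The genuinely delicate point, and hence the main obstacle, is exactly this chain of equivalences: one must check that rewriting $F_k$ into $(F'_k)^{*_c}\cdot_c F''_k$ neither turns a previously bounded occurrence into an unbounded one --- it does not, because $F'_k$ and $F''_k$ retain all their internal operators and scopes and the only operators whose scope can govern a $b$-occurrence with $b\neq c$ lie inside the subexpression containing that occurrence --- nor creates a fresh unbounded occurrence of $c$ --- it does not, because every occurrence of $c$ inside $F'_k$ or $F''_k$ now falls within the scope of the newly created $^{*_c}$ or $\cdot_c$. Once this bookkeeping is granted, the two steps above cover every occurrence of every bounded symbol of $\mathcal{X}_k$, so $\mathcal{X}_k$ is closed.
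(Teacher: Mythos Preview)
Your proof is correct and follows the route the paper intends: the corollary is stated without proof as an immediate consequence of Lemma~\ref{lem contrac pres close}, and your argument is precisely the unpacking of that implication, with the added care of verifying that the set of bounded symbols is unchanged so that the per-expression closedness given by the lemma lifts to the global closedness of the system. The paper leaves this global-versus-local point implicit; your making it explicit is an improvement, not a deviation.
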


Finally, let us show that a closed system can be effectively solved: we show that it admits some rational solutions, \emph{i.e.} solutions formed by rational languages. 
And we give a way to compute expressions to denote it.
In the following, we say that a $n$-tuple of rational expressions $(E_1,\ldots,E_n)$ \emph{denotes} a rational solution $(L_1,\ldots, L_n)$ if $L_i=L(E_i)$ for any $1\leq i\leq n$.
The following example illustrates how to compute some rational expressions denoting a solution.

\begin{example}
  Let us consider the closed system $\mathcal{X}$ of Example~\ref{ex syst eq}.
  By substitution of $\mathbb{E}_3$, we obtain
  \begin{align*}
  \mathcal{X}^3&=
    \begin{cases}
      \mathbb{E}_1 & =f(\mathbb{E}_1,\mathbb{E}_1)+f(\mathbb{E}_2,\mathbb{E}_4)\\
      \mathbb{E}_2 & =b+f(\mathbb{E}_2,a+h(\mathbb{E}_4))\\
      \mathbb{E}_3 & =a+h(\mathbb{E}_4)\\
      \mathbb{E}_4 & =a+h(a+h(\mathbb{E}_4))
    \end{cases}
  \end{align*}
  The $4\mathrm{-split}$ of $a+h(a+h(\mathbb{E}_4))$ leads to the factorization $(h(a+h(x_4)))\cdot_{x_4}\mathbb{E}+a$, contracted in $(h(a+h(x_4)))^{*_{x_4}}\cdot_{x_4}a$.
  Then, we obtain
  \begin{align*}
    \begin{cases}
      \mathbb{E}_1 & =f(\mathbb{E}_1,\mathbb{E}_1)+f(\mathbb{E}_2,\mathbb{E}_4)\\
      \mathbb{E}_2 & =b+f(\mathbb{E}_2,a+h(\mathbb{E}_4))\\
      \mathbb{E}_3 & =a+h(\mathbb{E}_4)\\
      \mathbb{E}_4 & =(h(a+h(x_4)))^{*_{x_4}}\cdot_{x_4}a
    \end{cases}
  \end{align*}
  By substitution,
  \begin{align*}
    \begin{cases}
      \mathbb{E}_1 & =f(\mathbb{E}_1,\mathbb{E}_1)+f(\mathbb{E}_2,(h(a+h(x_4)))^{*_{x_4}}\cdot_{x_4}a)\\
      \mathbb{E}_2 & =b+f(\mathbb{E}_2,a+h((h(a+h(x_4)))^{*_{x_4}}\cdot_{x_4}a))\\
      \mathbb{E}_3 & =a+h((h(a+h(x_4)))^{*_{x_4}}\cdot_{x_4}a)\\
      \mathbb{E}_4 & =(h(a+h(x_4)))^{*_{x_4}}\cdot_{x_4}a
    \end{cases}
  \end{align*}
  The $2\mathrm{-split}$ of $b+f(\mathbb{E}_2,a+h((h(a+h(x_4)))^{*_{x_4}}\cdot_{x_4}a))$ leads to the factorization $(f(x_2,a+h((h(a+h(x_4)))^{*_{x_4}}\cdot_{x_4}a)))\cdot_{x_2}\mathbb{E}_2+b$, contracted in $(f(x_2,a+h((h(a+h(x_4)))^{*_{x_4}}\cdot_{x_4}a)))^{*_{x_2}}\cdot_{x_2} b$. Thus, we obtain the new system
  \begin{align*}
    \begin{cases}
      \mathbb{E}_1 & =f(\mathbb{E}_1,\mathbb{E}_1)+f((f(x_2,a+h((h(a+h(x_4)))^{*_{x_4}}\cdot_{x_4}a)))^{*_{x_2}}\cdot_{x_2} b,(h(a+h(x_4)))^{*_{x_4}}\cdot_{x_4}a)\\
      \mathbb{E}_2 & =(f(x_2,a+h((h(a+h(x_4)))^{*_{x_4}}\cdot_{x_4}a)))^{*_{x_2}}\cdot_{x_2} b\\
      \mathbb{E}_3 & =a+h((h(a+h(x_4)))^{*_{x_4}}\cdot_{x_4}a)\\
      \mathbb{E}_4 & =(h(a+h(x_4)))^{*_{x_4}}\cdot_{x_4}a
    \end{cases}
  \end{align*}
  Finally, factorizing/contracting the first equation, we obtain the solution
  \begin{align*}
    \begin{cases}
      \mathbb{E}_1 & =(f(x_1,x_1))^{*_{x_1}}\cdot_{x_1}(f((f(x_2,a+h((h(a+h(x_4)))^{*_{x_4}}\cdot_{x_4}a)))^{*_{x_2}}\cdot_{x_2} b,(h(a+h(x_4)))^{*_{x_4}}\cdot_{x_4}a))\\
      \mathbb{E}_2 & =(f(x_2,a+h((h(a+h(x_4)))^{*_{x_4}}\cdot_{x_4}a)))^{*_{x_2}}\cdot_{x_2} b\\
      \mathbb{E}_3 & =a+h((h(a+h(x_4)))^{*_{x_4}}\cdot_{x_4}a)\\
      \mathbb{E}_4 & =(h(a+h(x_4)))^{*_{x_4}}\cdot_{x_4}a
    \end{cases}
  \end{align*}
\end{example}

Any closed system admits a \emph{canonical} resolution, defined in the proof of the following theorem.
\begin{theorem}\label{thm resol syst clos}
  Let $\mathcal{X}=\{(\mathbb{E}_j=F_j)\mid 1\leq j\leq n\}$ be a closed equation system over a graded alphabet $\Sigma$ and over variables $\{\mathbb{E}_1,\ldots,\mathbb{E}_k\}$.
  Then
  \begin{equation*}
    \mathcal{X}\text{ admits a regular solution over }\mathrm{free}(\mathcal{X}).
  \end{equation*}
  Furthermore, a $n$-tuple of rational expressions denoting this solution can be computed.
\end{theorem}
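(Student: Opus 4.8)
The plan is to proceed by strong induction on the number $n$ of variables of the system $\mathcal{X}$, repeatedly reducing to a smaller closed system while maintaining a tuple of rational expressions that denotes the intermediate solution. Throughout, the key invariant I would carry is that after every transformation the system is still closed (guaranteed by Corollary~\ref{cor sub pres closed}, Corollary~\ref{cor forme fact pre clos} and the corollary following Lemma~\ref{lem contrac pres close}) and that $\mathrm{free}(\mathcal{X})$ does not grow: the fresh symbols $a$ introduced by the factorization step of Proposition~\ref{prop sub par split ok} immediately become bounded, since they occur only as the substitution symbol of the product $\cdot_a$ created by the contraction, so they never enter $\mathrm{free}$ of the resulting system. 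This is what makes the final solution a solution \emph{over} $\mathrm{free}(\mathcal{X})$ rather than over a larger alphabet.

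\textbf{Base case.} For $n=1$ the single equation is $\mathbb{E}_1=F_1$ with $F_1$ over $(\Sigma,\{\mathbb{E}_1\})$. If $\mathbb{E}_1$ does not appear in $F_1$, then $F_1$ is variable-free and $L(F_1)$ is the unique solution by Proposition~\ref{prop pas de var sol un}, denoted by $F_1$ itself. Otherwise the equation is recursive; I apply the $1$-split of $F_1$, obtaining by Proposition~\ref{prop sub par split ok} the $\mathrm{free}(\mathcal{X})$-equivalent equation $\mathbb{E}_1=(F'_1)_{\mathbb{E}_1\leftarrow a}\cdot_a \mathbb{E}_1+F''_1$, where by closedness every operator symbol of $F'_1$ is bounded, hence the factor $(F'_1)_{\mathbb{E}_1\leftarrow a}$ contains no free occurrence of any symbol it uses as a product/closure index. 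Crucially, $F''_1$ is variable-free (the split removed every occurrence of $\mathbb{E}_1$), so by Proposition~\ref{prop sol pour eq rec} the contraction yields $\mathbb{E}_1=((F'_1)_{\mathbb{E}_1\leftarrow a})^{*_a}\cdot_a F''_1$ with the right-hand side now variable-free; its language is the regular solution and the expression itself is the desired denotation.

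\textbf{Inductive step.} Given $\mathcal{X}$ with $n>1$ variables: if $\mathcal{X}$ is not recursive, pick (as in the proof just above for non-recursive systems) an equation $\mathbb{E}_k=F_k$ with $F_k$ variable-free, form $\mathcal{X}^k\setminus\{\mathbb{E}_k=F_k\}$ — a closed system (Corollary~\ref{cor sub pres closed}) on $n-1$ variables with $\mathrm{free}$ unchanged — solve it by induction to get rational expressions $E_1,\dots,E_{k-1},E_{k+1},\dots,E_n$, and recover the $k$-th component as $F_k$ via Corollary~\ref{cor sub sol}. If $\mathcal{X}$ is recursive, I first use successive substitutions (Proposition~\ref{prop sub cons sol}, closedness preserved) to reach an equivalent closed system containing a self-recursive equation $\mathbb{E}_k<_\mathcal{X}\mathbb{E}_k$; then I split, factorize (Prop.~\ref{prop sub par split ok}, valid over $\mathrm{free}(\mathcal{X})$ by closedness), and contract (Prop.~\ref{prop sol pour eq rec}) that equation, after which $\mathbb{E}_k$ no longer appears on its own right-hand side. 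I substitute this new $F_k$ into the other equations (again closedness-preserving) so that $\mathbb{E}_k$ does not appear in any other right-hand side either, then delete $\mathbb{E}_k=F_k$ via Lemma~\ref{lem sub cas eq pas rec}, reducing to a closed $n-1$-variable system; solve it by induction and plug the resulting rational expressions back into $F_k$ for the $k$-th component. In both cases the concatenation of these steps is the claimed \emph{canonical resolution}.

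\textbf{Main obstacle.} The delicate point is the bookkeeping of which symbols stay free. Factorization is only $\mathrm{free}(\mathcal{X})$-equivalent, not equivalent, so I must argue that restricting every intermediate solution tuple to languages over $\mathrm{free}(\mathcal{X})$ is harmless — i.e. that the final expressions, though they mention the bound symbols and the fresh $a$'s, denote genuine (variable-free) languages whose restriction to $T(\mathrm{free}(\mathcal{X}))$ solves the \emph{original} system. The reason is that the fresh symbols are bound the instant they are created and the original bound symbols of $\mathcal{X}$ were never in $\mathrm{free}(\mathcal{X})$ to begin with, so each factorization/contraction is applied in a context where Proposition~\ref{prop sub par split ok}'s hypothesis ``$\mathcal{L}$ over $\Sigma\setminus\Gamma$'' is automatically met; making this precise — tracking $\Gamma\subseteq\Sigma_0\setminus\mathrm{free}(\mathcal{X})$ at every stage and invoking \eqref{eq impl sim alpha} — is the real content of the argument, everything else being an orchestration of the already-established lemmas.
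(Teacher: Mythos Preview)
Your proof is correct and follows essentially the same strategy as the paper: induction on $n$, using split/factorize/contract (Propositions~\ref{prop sub par split ok} and~\ref{prop sol pour eq rec}) to eliminate self-recursion and then substitution (Lemma~\ref{lem sub cas eq pas rec}) to reduce to $n-1$ variables, with closedness tracked throughout via Corollaries~\ref{cor sub pres closed} and~\ref{cor forme fact pre clos} and Lemma~\ref{lem contrac pres close}. The only organizational difference is that the paper always processes the \emph{last} equation $\mathbb{E}_n=F_n$ --- checking whether it is self-recursive, contracting if so, and then substituting it away via case~(1) regardless --- which sidesteps both your preliminary step of substituting along a cycle to manufacture a self-recursive equation and your separate search for a variable-free $F_k$ in the non-recursive case.
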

\begin{proof}
  By recurrence over the cardinal of $\mathcal{X}$.
  \begin{enumerate}
    \item\label{item preuve} Suppose that the equation $\mathbb{E}_n=F_n$ is not recursive.
    \begin{enumerate}
      \item If $n=1$, then $F_1$ is a rational expression and therefore $L(F_1)$ is the unique solution for $\mathcal{X}$.
      Since $\mathcal{X}$ is closed, $L(F_1)\subset T(\mathrm{free}(\mathcal{X}))$.
      \item Otherwise, consider the system $\mathcal{X}'=\mathcal{X}^k\setminus\{\mathbb{E}_n=F_n\}$.
      From Corollary~\ref{cor sub pres closed}, the system $\mathcal{X}'$ is closed.
      By recurrence hypothesis, $\mathcal{X}'$ admits a regular solution $Z=(L_1,\ldots,L_{n-1})$ over $\mathrm{free}(\mathcal{X})$ denoted by $(E_1,\ldots,E_{n-1})$. 
      From Lemma~\ref{lem sub cas eq pas rec}, this implies that $(L_1,\ldots,L_{n-1},L_Z(F_n))$ is a solution for $\mathcal{X}$ that is, by construction of $Z$, a solution over $\mathrm{free}(\mathcal{X})$.
      From Lemma~\ref{lem sub cons lang}, $L_Z(F_n)$ is denoted by $E_n=(\ldots(F_n)_{\mathbb{E}_1\leftarrow E_1}\ldots)_{_{\mathbb{E}_{n-1}\leftarrow E_{n-1}}}$, that is a rational expression with no variables.
      Therefore $\mathcal{X}$ admits a regular solution $(L_1,\ldots,L_{n-1},L_Z(F_n))$ over $\mathrm{free}(\mathcal{X})$ denoted by $(E_1,\ldots,E_{n})$. 
    \end{enumerate}
    \item Consider that the equation $\mathbb{E}_n=F_n$ is recursive.
    Let $k\mathrm{split}(F_n)=(F',F'')$.
    Let $a$ be a symbol not in $\Sigma$.
    Let $F'_n=(F')_{\mathbb{E}_k\leftarrow a}\cdot_a \mathbb{E}_k + F''$.
    Since $\mathcal{X}$ is closed, it holds from Proposition~\ref{prop sub par split ok} that $\mathcal{X}$ admits a solution over $\mathrm{free}(\mathcal{X})$ if and only if $\mathcal{X}'=(\mathcal{X}\setminus\{\mathbb{E}_n=F_n\})\cup\{\mathbb{E}_n=F'_n\}$ does.
    From Corollary~\ref{cor forme fact pre clos}, $\mathcal{X}'$ is closed.
    From Proposition~\ref{prop sol pour eq rec}, $\mathcal{X}'$ admits a solution over $\mathrm{free}(\mathcal{X})$ if and only if $\mathcal{X}'_n$ does.
    From Lemma~\ref{lem contrac pres close}, $\mathcal{X}'_n$ is closed, and contains the equation $\mathbb{E}_k=F'^{*_c}\cdot_c F''$, that is not recursive.
    The existence of the solution is then proved from the point~(\ref{item preuve}).
  \end{enumerate}
  \qed
\end{proof}

In other words,
\begin{theorem}\label{thm closed syst solv}
  Any closed equation system is effectively solvable.
\end{theorem}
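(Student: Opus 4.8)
The plan is to observe that this statement is essentially a reformulation of Theorem~\ref{thm resol syst clos}: declaring a closed system \emph{effectively solvable} amounts to exhibiting an algorithm which, on input the system, returns an $n$-tuple of rational expressions denoting one of its solutions, and Theorem~\ref{thm resol syst clos} already asserts both the existence of a regular solution over $\mathrm{free}(\mathcal{X})$ and the computability of a denoting tuple. So the first thing I would do is make explicit the recursion that is implicit in the proof of Theorem~\ref{thm resol syst clos}, and then argue separately that it terminates and that every individual step is effective.

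Concretely, I would describe the algorithm as a recursion on $|\mathcal{X}|$. In the base case $|\mathcal{X}|=1$ the single right-hand side $F_1$ is variable-free, so $F_1$ itself denotes the unique solution, and closedness forces $L(F_1)\subseteq T(\mathrm{free}(\mathcal{X}))$. In the inductive step I distinguish two situations. If $\mathcal{X}$ has a non-recursive equation $\mathbb{E}_k=F_k$, I substitute it out: by Corollary~\ref{cor sub pres closed} the system $\mathcal{X}^k\setminus\{\mathbb{E}_k=F_k\}$ is still closed and has one fewer variable, by Lemma~\ref{lem sub cas eq pas rec} it is solution-equivalent to $\mathcal{X}$ on the remaining components, and I solve it recursively to get expressions $(E_1,\dots,E_{n-1})$; then I recover the $k$-th component from the now variable-free expression obtained by substituting these expressions into $F_k$, correct by Lemma~\ref{lem sub cons lang}. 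If instead every equation of $\mathcal{X}$ is recursive, I first use the substitution operation --- equivalence-preserving by Proposition~\ref{prop sub cons sol} and closedness-preserving by Corollary~\ref{cor sub pres closed} --- to reach an equivalent closed system containing a directly recursive equation $\mathbb{E}_k=F_k$ with $\mathbb{E}_k<_\mathcal{X}\mathbb{E}_k$; then I apply the $k$-split and factorization of Proposition~\ref{prop sub par split ok} to rewrite it as $\mathbb{E}_k=(F')_{\mathbb{E}_k\leftarrow a}\cdot_a\mathbb{E}_k+F''$ with $a\notin\Sigma$ (still closed by Corollary~\ref{cor forme fact pre clos}, still solution-equivalent over $\mathrm{free}(\mathcal{X})$); then I contract it by Arden's Lemma, replacing it with the non-recursive $\mathbb{E}_k=((F')_{\mathbb{E}_k\leftarrow a})^{*_a}\cdot_a F''$, legitimate by Proposition~\ref{prop sol pour eq rec} (which rests on Proposition~\ref{prop arden}) and closedness-preserving by Lemma~\ref{lem contrac pres close}. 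After this the $\mathbb{E}_k$-equation is non-recursive, so I am back in the first situation, and substituting $\mathbb{E}_k$ away lowers the variable count.

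The step I expect to be the main obstacle --- or at any rate the one deserving the most care --- is termination, and in particular the claim that when every equation is recursive one can, in finitely many substitutions, reach an equivalent closed system with a directly recursive equation. I would justify this by picking a variable on a cycle $\mathbb{E}_j<_\mathcal{X}\mathbb{E}_{i_1}<_\mathcal{X}\cdots<_\mathcal{X}\mathbb{E}_{i_m}<_\mathcal{X}\mathbb{E}_j$ of the relation $<_\mathcal{X}$ and substituting the equations $\mathbb{E}_{i_m}=F_{i_m},\dots,\mathbb{E}_{i_1}=F_{i_1}$ into that of $\mathbb{E}_j$ in this order, which progressively forces $\mathbb{E}_j$ into its own right-hand side while, by Proposition~\ref{prop sub cons sol} and Corollary~\ref{cor sub pres closed}, keeping the system equivalent and closed. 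The remaining bookkeeping is to check that a single outer induction on $|\mathcal{X}|$ absorbs everything: each full round (possibly several substitutions, one factorization, one contraction, one final substitution) strictly decreases the number of variables, and each individual operation is effective, so the procedure halts and outputs the required tuple of rational expressions. This yields Theorem~\ref{thm closed syst solv}.
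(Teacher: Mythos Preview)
Your proposal is correct and follows essentially the same approach as the paper. The paper presents this theorem as an immediate restatement of Theorem~\ref{thm resol syst clos} (it literally introduces it with ``In other words''), and you correctly identify this and then spell out the underlying algorithm --- substitution to reduce the number of variables, $k$-split/factorization/contraction via Proposition~\ref{prop arden} to eliminate self-reference --- in somewhat more procedural detail than the paper's inductive proof.

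One small simplification: in your second case (``every equation of $\mathcal{X}$ is recursive''), note that the paper's definition of a recursive equation already means $\mathbb{E}_k<_\mathcal{X}\mathbb{E}_k$, so each such equation is already what you call ``directly recursive''. The preliminary chain of substitutions you describe to force $\mathbb{E}_j$ into its own right-hand side is therefore unnecessary in that case; you can go straight to the $k$-split and contraction, exactly as the paper does when it simply checks whether $\mathbb{E}_n=F_n$ is recursive and branches accordingly.
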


\section{Construction of a Rational Tree Expression from an Automaton}\label{co}

In this section, we show how to extract a tree languages equations system from a given FTA $\mathcal{A}=(\Sigma,Q,Q_f,\Delta)$.
Then, using the Arden's Lemma and the transformations (contraction and substitution) defined in the previous sections, we show how to resolve it and compute an equivalent rational expression $E_q$ by associating with a state $q$ in $Q$ an equation defining $L(q)$.
Let us first recall a basic property of the down language of a state:

\begin{lemma}\label{lem decompo down lang}
  Let $\mathcal{A}=(\Sigma,Q,Q_f,\Delta)$ be a FTA.
  Let $q\in Q$ be a state.
  Then:
  \begin{equation*}
    L(q)= \bigcup_{(f,q_1,\ldots,q_n,q)\in \Delta}f(L(q_1),\ldots,L(q_n))
  \end{equation*}
\end{lemma}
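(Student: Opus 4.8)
The plan is to prove the decomposition of the down language $L(q)$ directly from the definitions of the output function $\delta$ and of the state language $L(q)$, via a double inclusion together with an induction on the structure of the trees involved.

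First I would unfold the definitions. Recall $L(q) = \{t \in T(\Sigma) \mid q \in \delta(t)\}$, and for a tree $t = f(t_1,\ldots,t_n)$ the output satisfies $\delta(t) = \{q \in Q \mid \exists (f,q_1,\ldots,q_n,q) \in \Delta, \forall 1 \leq i \leq n,\ q_i \in \delta(t_i)\}$. Note in particular that every tree $t$ in $T(\Sigma)$ has the form $f(t_1,\ldots,t_n)$ for some $f \in \Sigma_n$ and some $n \geq 0$ (the case $n=0$ being a constant $f \in \Sigma_0$), so the output is always given by this formula and no separate base case is really needed.

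For the inclusion $L(q) \subseteq \bigcup_{(f,q_1,\ldots,q_n,q)\in\Delta} f(L(q_1),\ldots,L(q_n))$: take $t \in L(q)$, so $q \in \delta(t)$. Writing $t = f(t_1,\ldots,t_n)$, the definition of $\delta$ gives a transition $(f,q_1,\ldots,q_n,q) \in \Delta$ with $q_i \in \delta(t_i)$ for each $i$, i.e. $t_i \in L(q_i)$. Hence $t = f(t_1,\ldots,t_n) \in f(L(q_1),\ldots,L(q_n))$, which is one of the sets in the union. Conversely, if $t$ lies in $f(L(q_1),\ldots,L(q_n))$ for some transition $(f,q_1,\ldots,q_n,q) \in \Delta$, then $t = f(t_1,\ldots,t_n)$ with $t_i \in L(q_i)$, so $q_i \in \delta(t_i)$ for all $i$; applying the definition of $\delta$ with this transition yields $q \in \delta(t)$, i.e. $t \in L(q)$. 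This establishes both inclusions and hence the equality.

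I do not expect any real obstacle here: the statement is essentially a reformulation of the inductive definition of $\delta$ packaged in terms of the languages $L(q_i)$, and the only mild subtlety is making sure the constant-symbol case ($n = 0$, where the transition is $(f,q) \in \Delta$ and the product $f()$ denotes the singleton $\{f\}$) is covered — which it is, uniformly, by the general formula. No induction on tree height is actually required; a single unfolding of the definitions suffices.
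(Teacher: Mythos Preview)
Your proof is correct and essentially the same as the paper's: both unfold the definition of $\delta$ on a tree $t=f(t_1,\ldots,t_n)$ and read off the equivalence $q\in\delta(t)\Leftrightarrow t\in\bigcup_{(f,q_1,\ldots,q_n,q)\in\Delta}f(L(q_1),\ldots,L(q_n))$, with no induction needed. The only cosmetic difference is that the paper writes a single chain of equivalences while you split it into two inclusions.
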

\begin{proof}
  Let us set $L'(q)=\bigcup_{(f,q_1,\ldots,q_n,q)\in \Delta}f(L(q_1),\ldots,L(q_n))$.
  Let $t=f(t_1,\ldots,t_n)$ be a tree in $T(\Sigma)$.
  Let us show that $t\in L(q)$ $\Leftrightarrow$ $t\in L'(q)$.
  By definition, $t\in L(q)   \Leftrightarrow q\in \delta(t)$.
  Then:
  \begin{align*}
    q\in \delta(t) & \Leftrightarrow \exists (f,q_1,\ldots,q_n,q)\in \Delta, (\forall 1\leq i\leq n,q_i\in\delta(t_i))\\
    & \Leftrightarrow \exists (f,q_1,\ldots,q_n,q)\in \Delta, (\forall 1\leq i\leq n,t_i\in L(q_i))\\
    & \Leftrightarrow \exists (f,q_1,\ldots,q_n,q)\in \Delta, t\in f(L(q_1),\ldots,L(q_n))\\
    & \Leftrightarrow t\in L'(q)
  \end{align*}
  \qed
\end{proof}

The previous lemma can be used to define an equation system that can describe the relations between the down languages of the states of a given FTA.

Let $\mathcal{A}=(\Sigma,Q,Q_f,\Delta)$ be a FTA with $Q=\{1,\ldots,n\}$.
The \emph{equation system} associated with $\mathcal{A}$ is the set of equations $\mathcal{X}_{\mathcal{A}}$ over the variables $\mathbb{E}_1,\ldots,\mathbb{E}_n$ defined by $\mathcal{X}_{\mathcal{A}}=\{\mathcal{E}_q \mid q\in Q\}$ where for any state $q$ in $Q$,  $\mathcal{E}_q$ is the equation $\mathbb{E}_q=F_q$ with $F_q= \sum_{(f,q_1,\ldots,q_n,q)\in \Delta} f(\mathbb{E}_{q_1},\ldots,\mathbb{E}_{q_n})$.
Let us show that any solution of $\mathcal{X}_{\mathcal{A}}$ denotes the down languages of the states of $\mathcal{A}$.

\begin{proposition}\label{prop lang bas ds syst}
  Let $\mathcal{A}=(\Sigma,Q,Q_f,\Delta)$ be a FTA with $Q=\{1,\ldots,n\}$.
  Let $\mathbb{E}=(E_1,\ldots,E_n)$ be a solution of $\mathcal{X}_{\mathcal{A}}$.
  Then:
  \begin{equation*}
    \forall 1\leq j\leq n, L(E_j)=L(j).
  \end{equation*}
\end{proposition}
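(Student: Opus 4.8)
The plan is to establish, by strong induction on $\mathrm{Height}(t)$, the pointwise equivalence that for every $t\in T(\Sigma)$ and every $1\leq j\leq n$, $t\in L(E_j)$ if and only if $t\in L(j)$; the claimed equality of the two languages then follows immediately. First I would write $\mathcal{L}=(L(E_1),\ldots,L(E_n))$ and record that, because $(E_1,\ldots,E_n)$ is a solution of $\mathcal{X}_{\mathcal{A}}$, the definition of $F_j$ and of the $\mathcal{L}$-language give
\[
  L(E_j)=L_{\mathcal{L}}(F_j)=\bigcup_{(f,q_1,\ldots,q_m,j)\in\Delta} f(L(E_{q_1}),\ldots,L(E_{q_m})),
\]
while Lemma~\ref{lem decompo down lang} gives the parallel decomposition $L(j)=\bigcup_{(f,q_1,\ldots,q_m,j)\in\Delta} f(L(q_1),\ldots,L(q_m))$. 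These two right-hand sides have exactly the same shape, so the whole proof reduces to matching them transition by transition and replacing each $L(E_{q_i})$ by $L(q_i)$ using the induction hypothesis.

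For the base case $\mathrm{Height}(t)=1$, so $t=c\in\Sigma_0$, I would observe that a leaf can belong to a set of the form $f(M_1,\ldots,M_m)$ only when $m=0$ and $f=c$, in which case that set is $\{c\}$; hence $c\in L(E_j)$ if and only if $(c,j)\in\Delta$, and likewise $c\in L(j)$ if and only if $(c,j)\in\Delta$ by Lemma~\ref{lem decompo down lang}. For the inductive step, I would take $t=f(t_1,\ldots,t_k)$ with $k\geq 1$, note that each $t_i$ has strictly smaller height so the induction hypothesis applies to it, and observe that in the union describing $L(E_j)$ only transitions whose root symbol is $f$ and whose arity is $k$ can contribute the tree $t$. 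For such a transition $(f,q_1,\ldots,q_k,j)$ we have $t\in f(L(E_{q_1}),\ldots,L(E_{q_k}))$ if and only if $t_i\in L(E_{q_i})$ for every $i$, which by the induction hypothesis is equivalent to $t_i\in L(q_i)$ for every $i$, that is, to $t\in f(L(q_1),\ldots,L(q_k))$. Taking the union over all $(f,q_1,\ldots,q_k,j)\in\Delta$ and applying Lemma~\ref{lem decompo down lang} then yields the desired equivalence for $t$, completing the induction.

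The one ingredient that makes this argument go through — and that also explains why the (possibly recursive) system $\mathcal{X}_{\mathcal{A}}$ has a unique solution here, in contrast with Arden's equation over strings in Proposition~\ref{prop arden} — is the choice of the height as the induction parameter: applying a transition labelled by a symbol of arity $k\geq 1$ strictly increases the height, so the recursion in the system is well founded, with the constant symbols supplying the base cases. Consequently no least-fixed-point or minimality argument is needed. I do not expect a genuine obstacle; the only care required is the bookkeeping of transitions on both sides, in particular noticing that transitions whose root symbol or arity does not match $t$ contribute nothing to either union.
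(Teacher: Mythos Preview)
Your proposal is correct and follows essentially the same approach as the paper: both argue by induction on the tree $t$ (you phrase it as strong induction on $\mathrm{Height}(t)$, the paper as structural induction, but these are equivalent here), split into the $\Sigma_0$ base case and the $f(t_1,\ldots,t_k)$ step, and use Lemma~\ref{lem decompo down lang} together with the defining decomposition of $F_j$ to match the two unions transition by transition. Your explicit remark that only transitions with matching root symbol and arity contribute, and your closing observation about well-foundedness forcing uniqueness, are nice additions but not needed for the argument.
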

\begin{proof}
  Let $t$ be tree over $\Sigma$.
  Let us show by induction over $t$ that $t\in L(E_j)$ $\Leftrightarrow$ $t\in L_j$.
  \begin{enumerate}
    \item Consider that $t\in\Sigma_0$.
    Then
    \begin{align*}
      t\in L(E_j) & \Leftrightarrow t\in L(\sum_{(f,q_1,\ldots,q_n,j)\in\Delta} f(E_{q_1},\ldots,E_{q_n}))\\
      & \Leftrightarrow (t,j)\in\Delta\\
      & \Leftrightarrow t\in L(j)
    \end{align*}
    \item Otherwise, $t=g(t_1,\ldots,t_k)$ and
    \begin{align*}
      t\in L(E_j) & \Leftrightarrow t\in L(\sum_{(f,q_1,\ldots,q_n,j)\in\Delta} f(E_{q_1},\ldots,E_{q_n}))\\
      & \Leftrightarrow \exists (g,q_1,\ldots,q_k,j)\in\Delta\wedge \forall 1\leq l\leq k, t_l\in L(E_{q_l})\\
      & \Leftrightarrow \exists (g,q_1,\ldots,q_k,j)\in\Delta\wedge \forall 1\leq l\leq k, t_l\in L(q_l) & \text{ (induction hypothesis)}\\
      & \Leftrightarrow t\in \bigcup_{(f,q_1,\ldots,q_n,j)\in\Delta} f(L(q_1),\ldots,L(q_n))\\
      & \Leftrightarrow t\in L(j)& \text{(Lemma~\ref{lem decompo down lang})}
    \end{align*}
  \end{enumerate}
  \qed
\end{proof}

Since $\mathcal{X}_A$ is by definition closed, it holds from Theorem~\ref{thm closed syst solv} that

\begin{theorem}\label{thm syst aut solv}
  Let $\mathcal{A}=(\Sigma,Q,Q_f,\Delta)$ be a FTA.
  Then:
  \begin{equation*}
    \mathcal{X}_{\mathcal{A}}\text{ can be effectively solved.}
  \end{equation*} 
\end{theorem}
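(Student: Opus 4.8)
The plan is to observe that $\mathcal{X}_{\mathcal{A}}$ falls, essentially trivially, into the scope of Theorem~\ref{thm closed syst solv}, so that the whole machinery of Section~\ref{ar} applies verbatim. First I would unfold the definition of $\mathcal{X}_{\mathcal{A}}$: each right-hand side $F_q=\sum_{(f,q_1,\ldots,q_n,q)\in\Delta} f(\mathbb{E}_{q_1},\ldots,\mathbb{E}_{q_n})$ is built exclusively from function application (including nullary symbols for transitions $(f,q)$ with $f\in\Sigma_0$) and the sum operator; in particular $\mathrm{op}(F_q)$ contains neither any $c$-product $\cdot_c$ nor any $c$-closure $^{*_c}$.

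Next I would check that $\mathcal{X}_{\mathcal{A}}$ is \emph{closed} in the sense defined before Lemma~\ref{lem sub pres closed}. Since no operator of the form $\cdot_c$ or $^{*_c}$ occurs anywhere in $\mathcal{X}_{\mathcal{A}}$, no symbol of $\Sigma_0$ is bounded, so the defining requirement ``every occurrence of a bounded symbol is bounded'' is vacuously satisfied and $\mathrm{free}(\mathcal{X}_{\mathcal{A}})=\Sigma_0$. This is the only verification the proof needs, and it is immediate from the syntactic shape of the $F_q$'s; I do not expect any genuine obstacle here, since the real work has already been done in Section~\ref{ar} (the definition of splits and contractions, the generalized Arden Lemma of Proposition~\ref{prop arden}, and the preservation of closedness under substitution, factorization and contraction).

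Finally, I would invoke Theorem~\ref{thm closed syst solv}, equivalently Theorem~\ref{thm resol syst clos}, which yields that the closed system $\mathcal{X}_{\mathcal{A}}$ is effectively solvable and in fact admits a regular solution over $\mathrm{free}(\mathcal{X}_{\mathcal{A}})=\Sigma_0$ together with a computable $n$-tuple $(E_1,\ldots,E_n)$ of variable-free rational expressions denoting it; this establishes the statement. As a remark completing the construction promised in the introduction, combining this with Proposition~\ref{prop lang bas ds syst} gives $L(E_q)=L(q)$ for every state $q$, whence by Equation~\eqref{eq lang et lang bas} the expression $\sum_{q\in Q_f}E_q$ denotes $L(\mathcal{A})$, which is the second direction of Kleene's theorem for tree languages.
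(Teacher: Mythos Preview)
Your proposal is correct and matches the paper's own argument essentially verbatim: the paper simply notes that $\mathcal{X}_{\mathcal{A}}$ is closed by definition (precisely because no $\cdot_c$ or $^{*_c}$ occurs in any $F_q$) and then invokes Theorem~\ref{thm closed syst solv}. Your additional remark linking this to Proposition~\ref{prop lang bas ds syst} and Equation~\eqref{eq lang et lang bas} is exactly the content of the theorem that follows in the paper.
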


As a direct consequence of Theorem~\ref{thm syst aut solv} and of Proposition~\ref{prop lang bas ds syst}, following Equation~\eqref{eq lang et lang bas},

\begin{theorem}
  Let $\mathcal{A}=(\Sigma,\{1,\ldots,n\},Q_f,\Delta)$ be a FTA.
  Let $(E_1,\ldots,E_n)$ denoting a solution of $\mathcal{X}_A$.
  Then:
  \begin{align*}
    L(\mathcal{A}) \text{ is denoted by the rational expression }\sum_{j\in Q_f} E_j.
  \end{align*}
\end{theorem}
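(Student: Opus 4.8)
The plan is to combine two facts that are already in place: Equation~\eqref{eq lang et lang bas}, which states that $L(\mathcal{A})=\bigcup_{q\in Q_f}L(q)$, and Proposition~\ref{prop lang bas ds syst}, which guarantees that whenever $(E_1,\ldots,E_n)$ denotes a solution of $\mathcal{X}_{\mathcal{A}}$ one has $L(E_j)=L(j)$ for every state $j$. First I would fix such a tuple $(E_1,\ldots,E_n)$ (its existence is exactly the content of Theorem~\ref{thm syst aut solv}), and set $E=\sum_{j\in Q_f}E_j$. The goal is then to show $L(E)=L(\mathcal{A})$.

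The key computation is short: by the inductive definition of the language denoted by a rational expression, $L(\sum_{j\in Q_f}E_j)=\bigcup_{j\in Q_f}L(E_j)$ — one would unfold the finite sum $+$ operator, invoking $L(E_1+E_2)=L(E_1)\cup L(E_2)$ repeatedly (a trivial induction on $|Q_f|$, with $L(0)=\emptyset$ covering the degenerate case $Q_f=\emptyset$). Next, apply Proposition~\ref{prop lang bas ds syst} term by term to rewrite $\bigcup_{j\in Q_f}L(E_j)=\bigcup_{j\in Q_f}L(j)$. Finally, Equation~\eqref{eq lang et lang bas} identifies this last union with $L(\mathcal{A})$, closing the argument.

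I expect no real obstacle here: the statement is essentially a corollary assembled from results proved earlier in the paper, and the only content is bookkeeping about the finite $+$. The one point worth a sentence of care is making sure the tuple supplied by Theorem~\ref{thm syst aut solv} is indeed a \emph{solution} of $\mathcal{X}_{\mathcal{A}}$ in the sense required by Proposition~\ref{prop lang bas ds syst} — but this is immediate, since "effectively solved" means producing rational expressions $(E_1,\ldots,E_n)$ whose denoted languages form a solution of the system, which is precisely the hypothesis of that proposition. So the proof reduces to one display chain of equalities with the three cited justifications attached.

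\begin{proof}
  Let $(E_1,\ldots,E_n)$ be a tuple of rational expressions denoting a solution of $\mathcal{X}_{\mathcal{A}}$, whose existence is ensured by Theorem~\ref{thm syst aut solv}.
  By an immediate induction on the cardinality of $Q_f$ using $L(0)=\emptyset$ and $L(E_1+E_2)=L(E_1)\cup L(E_2)$, we have
  \begin{align*}
    L\Bigl(\sum_{j\in Q_f}E_j\Bigr) & = \bigcup_{j\in Q_f}L(E_j) \\
    & = \bigcup_{j\in Q_f}L(j) & \text{(Proposition~\ref{prop lang bas ds syst})}\\
    & = L(\mathcal{A}) & \text{(Equation~\eqref{eq lang et lang bas})}
  \end{align*}
  Hence $L(\mathcal{A})$ is denoted by the rational expression $\sum_{j\in Q_f}E_j$.
  \qed
\end{proof}
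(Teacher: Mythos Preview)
Your proposal is correct and matches the paper's approach exactly: the paper presents this theorem without a written proof, introducing it simply as ``a direct consequence of Theorem~\ref{thm syst aut solv} and of Proposition~\ref{prop lang bas ds syst}, following Equation~\eqref{eq lang et lang bas}''. Your argument just spells out this chain of equalities explicitly, which is precisely the intended reasoning.
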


\begin{example}
  Let us consider the FTA $A$ in Figure~\ref{fig ex aut}.
  The system associated with $A$ is the system $\mathcal{X}$ in Example~\ref{ex syst eq}:
  \begin{align*}
  \mathcal{X}&=
    \begin{cases}
      \mathbb{E}_1 & =f(\mathbb{E}_1,\mathbb{E}_1)+f(\mathbb{E}_2,\mathbb{E}_4)\\
      \mathbb{E}_2 & =b+f(\mathbb{E}_2,\mathbb{E}_4)\\
      \mathbb{E}_3 & =a+h(\mathbb{E}_4)\\
      \mathbb{E}_4 & =a+h(\mathbb{E}_3)
    \end{cases}
  \end{align*}
  Let us apply the resolution defined in the proof of Theorem~\ref{thm resol syst clos}.
  We first compute $\mathcal{X}_4$:
  \begin{align*}
  \mathcal{X}^4&=
    \begin{cases}
      \mathbb{E}_1 & =f(\mathbb{E}_1,\mathbb{E}_1)+f(\mathbb{E}_2,a+h(\mathbb{E}_3))\\
      \mathbb{E}_2 & =b+f(\mathbb{E}_2,a+h(\mathbb{E}_3))\\
      \mathbb{E}_3 & =a+h(a+h(\mathbb{E}_3))\\
      \mathbb{E}_4 & =a+h(\mathbb{E}_3)
    \end{cases}
  \end{align*}
  Then we have to solve the closed subsystem
  \begin{align}
    \begin{cases}
      \mathbb{E}_1 & =f(\mathbb{E}_1,\mathbb{E}_1)+f(\mathbb{E}_2,a+h(\mathbb{E}_3))\\
      \mathbb{E}_2 & =b+f(\mathbb{E}_2,a+h(\mathbb{E}_3))\\
      \mathbb{E}_3 & =a+h(a+h(\mathbb{E}_3))
    \end{cases}\label{eq syst}
  \end{align}
  The $3-\mathrm{split}$ of $a+h(a+h(\mathbb{E}_3))$ leads to the factorization $h(a+h(x_3))\cdot_{x_3} \mathbb{E}_3+a$, contracted in $(h(a+h(x_3)))^{*_{x_3}}\cdot_{x_3}a$.
  Thus, the system~\eqref{eq syst} is equivalent to
  \begin{align*}
    \begin{cases}
      \mathbb{E}_1 & =f(\mathbb{E}_1,\mathbb{E}_1)+f(\mathbb{E}_2,a+h(\mathbb{E}_3))\\
      \mathbb{E}_2 & =b+f(\mathbb{E}_2,a+h(\mathbb{E}_3))\\
      \mathbb{E}_3 & =(h(a+h(x_3)))^{*_{x_3}}\cdot_{x_3}a
    \end{cases}
  \end{align*}
  and by substitution of $\mathbb{E}_3$ to
  \begin{align*}
    \begin{cases}
      \mathbb{E}_1 & =f(\mathbb{E}_1,\mathbb{E}_1)+f(\mathbb{E}_2,a+h((h(a+h(x_3)))^{*_{x_3}}\cdot_{x_3}a))\\
      \mathbb{E}_2 & =b+f(\mathbb{E}_2,a+h((h(a+h(x_3)))^{*_{x_3}}\cdot_{x_3}a))\\
      \mathbb{E}_3 & =(h(a+h(x_3)))^{*_{x_3}}\cdot_{x_3}a
    \end{cases}
  \end{align*}
  Now, let us solve the new subsystem
  \begin{align}
    \begin{cases}
      \mathbb{E}_1 & =f(\mathbb{E}_1,\mathbb{E}_1)+f(\mathbb{E}_2,a+h((h(a+h(x_3)))^{*_{x_3}}\cdot_{x_3}a))\\
      \mathbb{E}_2 & =b+f(\mathbb{E}_2,a+h((h(a+h(x_3)))^{*_{x_3}}\cdot_{x_3}a))
    \end{cases}\label{eq syst 2}
  \end{align}
  The $2\mathrm{-split}$ of $b+f(\mathbb{E}_2,a+h((h(a+h(x_3)))^{*_{x_3}}\cdot_{x_3}a))$ leads to the factorization $(f(x_2,a+h((h(a+h(x_3)))^{*_{x_3}}\cdot_{x_3}a)))\cdot_{x_2}\mathbb{E}_2+b$, contracted in $((f(x_2,a+h((h(a+h(x_3)))^{*_{x_3}}\cdot_{x_3}a))))^{*_{x_2}}\cdot_{x_2} b$.
  Consequently, the system~\eqref{eq syst 2} is equivalent to
  \begin{align*}
    \begin{cases}
      \mathbb{E}_1 & =f(\mathbb{E}_1,\mathbb{E}_1)+f(\mathbb{E}_2,a+h((h(a+h(x_3)))^{*_{x_3}}\cdot_{x_3}a))\\
      \mathbb{E}_2 & =((f(x_2,a+h((h(a+h(x_3)))^{*_{x_3}}\cdot_{x_3}a))))^{*_{x_2}}\cdot_{x_2} b
    \end{cases}
  \end{align*}
  and by substitution to
  \begin{align*}
    \begin{cases}
      \mathbb{E}_1 & =f(\mathbb{E}_1,\mathbb{E}_1)+f(((f(x_2,a+h((h(a+h(x_3)))^{*_{x_3}}\cdot_{x_3}a))))^{*_{x_2}}\cdot_{x_2} b,a+h((h(a+h(x_3)))^{*_{x_3}}\cdot_{x_3}a))\\
      \mathbb{E}_2 & =((f(x_2,a+h((h(a+h(x_3)))^{*_{x_3}}\cdot_{x_3}a))))^{*_{x_2}}\cdot_{x_2} b
    \end{cases}
  \end{align*}
  Then, by factorization/contraction,
  \begin{align*}
      \mathbb{E}_1 & =(f(x_1,x_1))^{*_{x_1}}\cdot_{x_1} (f(((f(x_2,a+h((h(a+h(x_3)))^{*_{x_3}}\cdot_{x_3}a))))^{*_{x_2}}\cdot_{x_2} b, a+h((h(a+h(x_3)))^{*_{x_3}}\cdot_{x_3}a)))
  \end{align*}
  Finally, we obtain the solution
  \begin{align*}
    \begin{cases}
      \mathbb{E}_1 & =(f(x_1,x_1))^{*_{x_1}}\cdot_{x_1} (f(((f(x_2,a+h((h(a+h(x_3)))^{*_{x_3}}\cdot_{x_3}a))))^{*_{x_2}}\cdot_{x_2} b,a+h((h(a+h(x_3)))^{*_{x_3}}\cdot_{x_3}a)))\\
      \mathbb{E}_2 & =((f(x_2,a+h((h(a+h(x_3)))^{*_{x_3}}\cdot_{x_3}a))))^{*_{x_2}}\cdot_{x_2} b\\
      \mathbb{E}_3 & =(h(a+h(x_3)))^{*_{x_3}}\cdot_{x_3}a\\
      \mathbb{E}_4 & =a+h((h(a+h(x_3)))^{*_{x_3}}\cdot_{x_3}a)
    \end{cases}
  \end{align*}
  Since the final states are $1$ and $3$, it holds that $L(\mathcal{A})$ is denoted by:
  \begin{align*}
   \begin{split}
    &(f(x_1,x_1))^{*_{x_1}}\cdot_{x_1} (f(((f(x_2,a+h((h(a+h(x_3)))^{*_{x_3}}\cdot_{x_3}a))))^{*_{x_2}}\cdot_{x_2} b,a+h((h(a+h(x_3)))^{*_{x_3}}\cdot_{x_3}a)))\\
    &\quad +(h(a+h(x_3)))^{*_{x_3}}\cdot_{x_3}a
   \end{split}
  \end{align*}
\end{example}

  \begin{figure}[H]
    \centerline{
    \begin{tikzpicture}[node distance=2.5cm,bend angle=30,transform shape,scale=1]
      \node[state,accepting] (q1) {$1$};
      \node[state, below left of=q1] (q2) {$2$};
      \node[state, below right of=q1] (q4) {$4$};
      \node[state, right of=q4,accepting] (q3) {$3$};
        \draw (q2) ++(-1cm,0cm) node {$b$}  edge[->] (q2); 
        \draw (q3) ++(1cm,0cm) node {$a$}  edge[->] (q3);  
        \draw (q4) ++(0cm,-1cm) node {$a$}  edge[->] (q4); 
        \path[->]
          (q3) edge[->,below,bend left] node {$h$} (q4)
          (q4) edge[->,above,bend left] node {$h$} (q3)
        ;
        \draw (q2) ++(0.5cm,1cm)  edge[->,in=135,out=135,looseness=2] node[above right,pos=0] {$f$} (q2) edge[shorten >=0pt] (q2) edge[shorten >=0pt] (q4);       
        \draw (q4) ++(-0.5cm,1cm)  edge[->] node[above right,pos=0] {$f$} (q1) edge[shorten >=0pt] (q2) edge[shorten >=0pt] (q4);      
        \draw (q1) ++(0.5cm,0.5cm)  edge[->,in=90,out=45,looseness=3] node[right,pos=0] {$f$} (q1) edge[shorten >=0pt,bend left] (q1) edge[shorten >=0pt,bend right] (q1); 
      \end{tikzpicture}
    }
    \caption{The FTA $A$.}
    \label{fig ex aut}
  \end{figure}

\section{Conclusion}\label{con}
We present a new construction of a rational expression from a tree automaton. 
This construction, based on a generalization of Arden's Lemma, gives another way to prove Kleene's theorem for tree.
In order to produce the expression, we studied the notion of tree languages equation systems and determine a sufficient condition to solve them.
The next step is to study the different links that may exist between the different methods of computation of an expression from an automaton, like it was studied in~\cite{Sak05}.

\bibliographystyle{plain}
\bibliography{biblio}

\end{document}